\author
{Antoine Mottet}
	\address{Hamburg University of Technology, Research Group on Theoretical Computer Science, Germany}
	\email{antoine.mottet@tuhh.de}
	\urladdr{http://amottet.github.io/~mottet/}
\author
{Tom\'{a}\v{s} Nagy}
	\address{Institut f\"{u}r Diskrete Mathematik und Geometrie, FG Algebra, TU Wien, Austria \and Theoretical Computer Science Department, Jagiellonian University, Poland
 }
	\email{tomas.nagy@email.com}
	\urladdr{https://nagyto.github.io/}
\author
{Michael Pinsker}
	\address{Institut f\"{u}r Diskrete Mathematik und Geometrie, FG Algebra, TU Wien, Austria}    
	\email{marula@gmx.at}
    \urladdr{http://dmg.tuwien.ac.at/pinsker/}
\author
{Micha{\l} Wrona}
	\address{Theoretical Computer Science Department, Jagiellonian University, Poland}
	\email{michal.wrona@uj.edu.pl}
	\urladdr{https://www.tcs.uj.edu.pl/wrona}
\thanks{Antoine Mottet has received funding from the European Research Council
(ERC) under the European Union's Horizon 2020 research and
innovation programme (grant agreement No 771005). Michael Pinsker and Tom\'{a}\v{s} Nagy have received partial funding from the Czech Science Foundation (grant No 18-20123S). Micha{\l} Wrona is partially supported by National Science Centre, Poland grant number 2020/37/B/ST6/01179.  This research was funded in whole or in part by the Austrian Science
Fund (FWF) [M 2555-N35, P 32337, I 5948]. For the purpose of Open Access, the authors have
applied a CC BY public copyright licence to any Author Accepted
Manuscript (AAM) version arising from this submission. This research is also funded by the European Union (ERC,
  POCOCOP, 101071674). Views and opinions expressed are however those
  of the author(s) only and do not necessarily reflect those of the
  European Union or the European Research Council Executive Agency.
  Neither the European Union nor the granting authority can be held
  responsible for them.}
\title[When symmetries are enough]{Collapsing the bounded width hierarchy\\ for infinite-domain  CSPs: \\ when symmetries are enough  \footnote{\normalfont\scshape\lowercase{A conference version of this article appeared in the Proceedings of the 48th International Colloquium on Automata, Languages and Programming, pages 1--20, 2021, under the title ``Smooth Approximations and Relational Width Collapses".}}
}
\newcommand{\ignore}[1]{}
\newcommand{\aA}{\mathbf A}
\newcommand{\aB}{\mathbf B}
\newcommand{\aR}{\mathbf R}
\newtheorem{theorem}{Theorem}[section]
\newtheorem{corollary}[theorem]{Corollary}
\newtheorem{lemma}[theorem]{Lemma}
\newtheorem{proposition}[theorem]{Proposition}
\theoremstyle{definition}
\newtheorem{example}[theorem]{Example}
\newtheorem{definition}[theorem]{Definition}
\newcommand{\cl}[1]{\langle #1\rangle_\cD}
\DeclareMathOperator{\canonical}{can}
\newcommand{\cann}[2][n]{#2_{#1}^{\canonical}}
\newcommand{\can}[2][]{#2_{#1}^{\canonical}}
\newcommand\rel[1]{\mathbb{#1}}
\newcommand\alg[1]{\mathbf{#1}}
\newcommand\minabs{\mathrel{\lhd\kern-2pt\lhd}}
\newcommand\Projs{\mathscr P}
\newcommand\gG{\mathscr G}
\newcommand\cC{\mathscr C}
\newcommand\cD{\mathscr D}
\newcommand\Q{\mathbb Q}
\DeclareMathOperator{\Proj}{\Projs}
\DeclareMathOperator{\fpp}{FPP}
\DeclareMathOperator{\lex}{lex}
\newcommand\proj[2]{\pro_{#1}{(#2)}}
\DeclareMathOperator{\pro}{proj}
\DeclareMathOperator\Csp{CSP}
\DeclareMathOperator{\Pol}{Pol}
\DeclareMathOperator{\Aut}{Aut}
\newcommand\actson{\curvearrowright}
\DeclareMathOperator{\V}{\mathcal{V}}
\DeclareMathOperator{\constraints}{\mathcal{C}}
\DeclareMathOperator{\instance}{\mathcal{I}}
\DeclareMathOperator{\cJ}{\mathcal{J}}
\DeclareMathOperator{\clone}{\mathscr{C}}
\DeclareMathOperator{\group}{\mathscr{G}}
\newcounter{bwcounter}
\begin{document}

\begin{abstract}
    We prove that relational structures admitting specific polymorphisms (namely, canonical pseudo-WNU  operations of all arities $n \geq 3$) have low relational width.
    This implies a collapse of the bounded width hierarchy for numerous classes of infinite-domain CSPs studied in the literature.
    Moreover, we obtain a characterization of bounded width for first-order reducts of unary structures and a characterization of MMSNP sentences that are equivalent to a Datalog program, answering a question posed by Bienvenu \emph{et al.}.
    In particular, the bounded width hierarchy collapses in those cases as well.
    Our results extend the scope of theorems of Barto and Kozik characterizing bounded width  for finite structures, and show the applicability of   infinite-domain CSPs to other fields.
\end{abstract}

\maketitle

\section{Introduction}\label{sect:intro}

Understanding what mathematical structure in problems enables efficient algorithms is one of the most fundamental goals in theoretical computer science; the field therefore strives to achieve such understanding completely  for 
restricted classes of problems that are  sufficiently general to contribute to the overall picture. Another  core goal is   identifying  large natural subclasses of the complexity class NP that exhibit a complexity dichotomy, i.e., which contain only problems that are either solvable in polynomial time or NP-complete, and therefore avoid intermediate complexity classes. One class of problems that touches both of the above directions is the class of \emph{constraint satisfaction problems} (CSPs), which can be cast as homomorphism problems (of a given structure to a fixed structure called a \emph{template}). Feder and Vardi famously proposed the conjecture that the class of CSPs with a finite template  exhibits such a dichotomy~\cite{FederVardi}. With time, this goal evolved from just proving the dichotomy to precisely explaining, as uniformly as possible, the mathematical structure of problems that causes their tractability or hardness. One reason why this goal received considerable attention and was pursued for decades is that CSPs with a finite template hit the sweet spot between generality and structure -- they are  quite general and include many computational  problems of interest, yet they posses sufficient structure that, in a highly non-trivial way,  can be extracted and either be used algorithmically or in hardness proofs.

The Feder-Vardi conjecture having been confirmed independently by Bulatov and Zhuk in 2017~\cite{Bulatov:2017,Zhuk:2017,Zhuk:2020}, it is natural to look for a more general class of problems where potentially there is still sufficient  structure to explain all tractability within the class and that possibly exhibits a similar complexity dichotomy.
There are several ways to expand the framework, but one of the most natural ways is to allow infinite templates, thereby including many natural problems that can be formulated as CSPs, but not with a finite template.
This is the case for linear programming, some reasoning problems in artificial intelligence such as ontology-mediated querying, or even problems as simple to formulate as the digraph acyclicity problem.
On the other hand, allowing arbitrary infinite templates would amount to including basically any computational problem~\cite{BodirskyGrohe} and defy the hope for a structural approach as well as for a complexity dichotomy (or even containment in NP); it is therefore necessary to restrict the class of infinite templates under consideration.
To achieve the objective of a restricted class that does not suffer from the above deficiencies, properly extends the class of finite-template CSPs, and enriches the latter class by interesting problems,  model theory provides us with a useful language and methodology -- it has been used to identify a class of infinite templates which share some properties with finite structures
such that the corresponding CSPs are always in NP and such that a structural approach to explain tractability within the class is, in principle, achievable. 
A complexity  dichotomy for this class, whose templates are \emph{first-order reducts of $k$-homogeneous $\ell$-bounded $\omega$-categorical structures} (explained on a high level in Section~\ref{subsect:results}, and defined precisely in Section~\ref{sect:prelims}), has been conjectured by Bodirsky and Pinsker~\cite{BPP-projective-homomorphisms,Topo,BartoPinskerDichotomy,BKOPP,BKOPP-equations}, and a rich mathematical theory to investigate the complexity of problems within the class has evolved over the years (see e.g.~\cite{BodirskyBook, infinitesheep}).

Methodologically, the key to resolving the Feder-Vardi dichotomy conjecture was the universal-algebraic approach. This approach assigns to each template the set of certain finite-dimensional symmetries, called \emph{polymorphisms}, and uses them to identify the structure in problem instances which is then used either in polynomial-time algorithms or in hardness proofs. Given how well-developed and powerful this approach is, it is natural to try to lift to the infinite-template framework as many tools from the finite case as possible. The most general results in the area of infinite-template CSPs obtained so far use \emph{canonical polymorphisms}, which behave like operations on finite sets and which sometimes allow to mimic the universal-algebraic approach to finite-template CSPs. Canonical polymorphisms alone are in general insufficient to explain, say, polynomial-time tractablility, and a characterisation of the range of their  applicability is  missing, but on the positive side their use  covers a vast majority of the results obtained in the area. Moreover, even in situations where canonical polymorphisms  do not yield a black-box argument by directly  lifting results for   finite-template CSPs, they prove useful in the design of more elaborate arguments~\cite{SmoothApproximations,hypergraphs}.

The two main algorithmic techniques employed in polynomial-time algorithms for finite-template CSPs are local consistency checking and generalised Gaussian elimination. The applicability of pure forms of these techniques is well understood.
Again, it is natural to understand the applicability of these techniques for infinite-template CSPs.
The present paper is focused on the local consistency checking technique, more precisely on the concept of \emph{bounded width}.
One way to see it is as a generalisation of bounded width resolution for SAT, but adapted to deal with general constraints, rather than just with clauses.
Intuitively speaking, local consistency checking consists in propagating local information through a structure so as to infer global information: consider, for example, computing the transitive closure of a relation as deriving global information from local one.

The use of local consistency methods is not limited to constraint satisfaction.
Indeed, local consistency checking is also used for such problems as the graph isomorphism problem, where it is known as the Weisfeiler-Leman algorithm. 
Here, the technique can be used to derive implied constraints that an isomorphism between two graphs has to satisfy so as to narrow down the search space, but local consistency is in fact powerful enough to solve the graph isomorphism problem over any non-trivial minor-closed class of graphs~\cite{Grohe:2012}.
Notably, the best algorithm for graph isomorphism to date also uses local consistency as a subroutine~\cite{Babai:2016}.
Finally, local consistency can be used to solve games involved in formal verification such as parity games and mean-payoff games~\cite{Mottet:2018}.
However, unlike in the other applications mentioned above, these algorithms are not efficient since the constructed instances are exponentially large.

One of the reasons for the ubiquity of local consistency methods is that their
 underlying principles can be described in many different languages, such as the language of category theory~\cite{Dawar:2017}, the language of finite model theory (by Spoiler-Duplicator games~\cite{Kolaitis:2000} or by homomorphism duality~\cite{Atserias:2007}), and logical definability (in Datalog, or infinitary logics with bounded number of variables)~\cite{FederVardi}. For constraint satisfaction problems over a finite template, the power of local consistency checking can 
be characterised 
by conditions on the set of
polymorphisms of the template: local consistency checking correctly solves 
the problem
if, and only if, the polymorphisms satisfy these conditions~\cite{BartoKozikBoundedWidth,MarotiMcKenzie,Maltsev-Cond}. 
Moreover, whenever 
this is the case, then in fact only a very restricted form of local consistency checking is needed~\cite{BartoCollapse}. This fact is known as the \emph{collapse of the bounded width hierarchy}, and it has strong consequences both for complexity and logic.
Firstly, the collapse provides efficient algorithms that are able to solve \emph{all} the CSPs that are solvable by local consistency methods~\cite{Kozik:2016}, which yields in particular a polynomial-time algorithm solving instances of the \emph{uniform CSP}  where the template is also part of the input. Secondly, this collapse induces collapses in all the areas mentioned at the beginning of this paragraph.

Unlike for finite-template CSPs, an algebraic characterisation of the applicability of local consistency checking is missing for infinite-template CSPs.
In fact, even in situations where the complexity of a CSP is captured by the polymorphisms of the template, no purely algebraic description of CSPs solvable by local consistency methods is possible~\cite{TopoRelevant,TopoRelevant-TAMS,Hrushovski-monsters}; this is even the case for well-behaved classes of problems such as temporal CSPs~\cite{Rydval:2020}, which belong to the range of the aforementioned Bodirsky-Pinsker  conjecture.
These negative results are to be compared with the recent result by Mottet and Pinsker~\cite{SmoothApproximations} that did provide an algebraic description for several subclasses of infinite-template CSPs within that range

\subsection{Results}\label{subsect:results}

In the present paper, we focus on applying the theory of canonical functions to study the power of local consistency checking for CSPs with  
templates  within the Bodirsky-Pinsker conjecture, namely, first-order reducts  of  $k$-homogeneous $\ell$-bounded $\omega$-categorical structures.
Our objective is threefold: firstly,
we wish  to obtain generic sufficient conditions that imply that local consistency checking solves a given CSP; secondly, we aim at understanding the amount of locality needed for local consistency methods to solve the CSP, as measured by the so-called \emph{relational width}; and thirdly, we want to show that our sufficient conditions are actually also necessary for certain  interesting  subclasses. We now give a high-level presentation of our results; precise  definitions of all concepts mentioned here can  be found in the preliminaries.

In order to achieve the first two objectives, we prove that sufficiently locally consistent instances  can be turned into locally consistent instances of a finite-template CSP. Imposing  conditions on the canonical polymorphisms  of the original template, we can enforce this finite-template CSP to have bounded width. It then has relational width at most $(2,3)$ by~\cite{BartoCollapse}; application of this result in turn allows us to obtain a bound on the relational width of the original CSP which only depends on the two simple parameters $k,\ell$ of the  presentation of the template  by a $k$-homogeneous $\ell$-bounded structure. Building on the results for finite templates, we therefore obtain a collapse of the relational width reminiscent of the collapse in the finite case for all structures whose canonical polymorphisms satisfy our conditions.

Before  stating this result in full detail, let us comment on the nature of the range of the Bodirsky-Pinsker conjecture and its use in local consistency algorithms. A countable structure is $\omega$-categorical if its automorphism group is large in the sense that it acts with finitely many orbits on $n$-tuples for all $n\geq 1$: that is, the structure has up to automorphisms only finitely many distinct tuples of a fixed finite length. The additional conditions of $k$-homogeneity and $\ell$-boundedness state that moreover, the orbit of any $n$-tuple is uniquely determined by the orbits of its $k$-subtuples, and that conversely which combinations of orbits of $k$-tuples can appear on an $n$-tuple is a local property (measured by $\ell$). A first-order reduct of such a structure is any structure all of whose relations are unions of orbits. Local consistency algorithms, applied to an instance of such a template, keep lists of possible orbits of tuples of variables, reduce them according to the constraints, and then propagate information between these lists locally to make them consistent.
By $\omega$-categoricity, this procedure terminates (and does so in polynomial time).

\begin{restatable}{theorem}{canonicalbwidth}\label{thm:canonicalbwidth}
Let $k,\ell\geq 1$, and let $\rel A$ be a first-order reduct of a $k$-homogeneous $\ell$-bounded $\omega$-categorical structure $\rel B$.
\begin{itemize}
    \item If the clone of $\Aut(\rel B)$-canonical polymorphisms of $\rel A$ contains pseudo-WNU operations modulo $\overline{\Aut(\rel B)}$ of all arities $n \geq 3$, then $\rel A$ has relational width at most $(2k,\max(3k,\ell))$.
    \item If the clone of $\Aut(\rel B)$-canonical polymorphisms of $\rel A$ 
    contains pseudo-totally symmetric operations modulo $\overline{\Aut(\rel B)}$ of all arities, then $\rel A$ has relational width at most 
    $(k,\max(k+1,\ell))$ if $k >1$, and $(1,1)$ if $k=1$.
\end{itemize}
\end{restatable}

We remark that if $\rel A$ is a finite template, then it satisfies the conditions of Theorem~\ref{thm:canonicalbwidth}; moreover, 
the width obtained in the first item of Theorem~\ref{thm:canonicalbwidth} coincides with the width given by Barto's collapse result from~\cite{BartoCollapse}, the width obtained in the second item coincides with the width of structures with totally symmetric operations of all arities from~\cite{DalmauPearson,FederVardi}, and our algebraic conditions coincide with the known conditions for finite templates. We now provide the short argument for the reader familiar with the precise notions of the theorem; others might prefer to skip the remainder of this paragraph, and return to it later.  
Every finite structure $\rel A$ with domain $\{a_1,\dots,a_n\}$ is a first-order reduct of the structure $\rel B=(\{a_1,\dots,a_n\};\{a_1\},\dots,\{a_n\})$, which is $1$-homogeneous and $2$-bounded (and trivially $\omega$-categorical). For $1$-homogeneity, note first that the automorphism group $\Aut(\rel B)$ of $\rel B$ is trivial. Hence, the orbit of any tuple with respect to $\Aut(\rel B)$ consists only of  the tuple itself, and is clearly determined by the unary singleton relations that hold on the tuple. To see 2-boundedness, note that in order for any structure to embed into $\rel B$, it is sufficient to ensure that all elements of that structure are contained in precisely one unary relation, and that no two distinct elements are contained in the same unary relation. Finally, it also  follows from the triviality of $\Aut(\rel B)$ that any polymorphism of $\rel A$ is $\Aut(\rel B)$-canonical, making this condition void,  that any pseudo-WNU operation modulo $\overline{\Aut(\rel B)}$ is also a WNU operation, and that any pseudo-totally symmetric operation modulo $\overline{\Aut(\rel B)}$ is a totally symmetric operation.

As a corollary of Theorem~\ref{thm:canonicalbwidth}, we obtain a collapse of the bounded width hierarchy for first-order reducts of numerous structures studied in the literature~\cite{Schaefer-Graphs,ReductionFinite,HomogeneousGraphs,PosetCSP} where applicability of local consistency methods had already  been classified, as well as of \emph{unary structures,} i.e., structures over a finite unary signature~\cite{ReductsUnary}. For the latter class, in order to apply our theorem, we first need to show that all structures in the class with bounded width satisfy the assumptions of the theorem; this is one of our contributions towards the third objective mentioned in the beginning of this section. To this end, we build on  recent work by Mottet and Pinsker~\cite{SmoothApproximations} and expand the use of their \emph{smooth approximations} to fully suit \emph{equational  (non-)affineness}, which is roughly  the algebraic situation imposed by local consistency solvability. 
The main technical contribution is a new \emph{loop lemma} that exploits deep algebraic tools from the finite~\cite{BartoKozikCyclic} and, assuming the use of canonical functions is unfruitful, allows to obtain the existence of polymorphisms of every arity $n\geq 3$ and satisfying certain strong symmetry conditions. The study of unary structures is motivated on the one hand by the fact that its first-order reducts are a subclass of the class of first-order reducts of classical structures like $(\mathbb Z; +,\leq)$, $(\mathbb Q;+,\leq, 1)$, or $(\mathbb R; +,\times)$~\cite{ReductsUnary}, which are of particular interest in mathematics and computer science~\cite{Mottet:2018, CSPs_RQ}. On the other hand, they are also a superclass of the class of finite structures, consisting of templates where single values can be blown up to finite or infinite sets (by means of unary predicates representing them). The class therefore constitutes an excellent testbed for algebraic and combinatorial generalizations, in the spirit of the introduction, of statements about finite structures to structures that are in a sense ``just beyond finite.''

\begin{restatable}{corollary}{explicitcollapse}\label{cor:explicitcollapse}
    Let $\rel A$ be a structure that has bounded width.
    If $\rel A$ is a first-order reduct of:
    \begin{itemize}
        \item the universal homogeneous graph $\rel{G}$ or tournament $\rel{T}$, or of a unary structure, then $\rel A$ has relational width at most $(4,6)$;
        \item the universal homogeneous $K_n$-free graph $\rel{H}_n$, where $n\geq 3$, then at most $(2,n)$;
        \item $(\rel{N}; =)$, the  countably infinite equivalence relation $\rel{C}^{\omega}_{\omega}$ with infinitely many equivalence classes, all of infinite size,  or the universal homogeneous  partial order $\rel{P}$,  then at most $(2,3)$. 
    \end{itemize}
\end{restatable}

In Section~\ref{sect:corollary}, after our detailed treatment of unary structures, we provide the proof of Corollary~\ref{cor:explicitcollapse} along with examples showing that most of the bounds from the statement are tight.

Our second  contribution (after unary structures) to the third objective of characterizing templates with bounded width concerns templates representing   the model-checking problem for Feder and Vardi's logic MMSNP~\cite{FederVardi}. Among CSPs definable in this logic, we characterize those with bounded width as precisely those whose appropriate template satisfies the conditions of Theorem~\ref{thm:canonicalbwidth}, again using smooth approximations and our new loop lemma. CSPs with a template having bounded width are definable in the declarative language Datalog, which is from a logical point of view the existential positive fragment of least fixpoint logic (see, e.g., \cite{FederVardi} for a clear introduction of Datalog in the context of constraint satisfaction).
Coupled with our results, this observation allows us to solve the following open problem from~\cite{Bienvenu:2014,FeierKuusistoLutz}. The \emph{Datalog-rewritability problem} for MMSNP is the problem of deciding, given as input an MMSNP sentence $\Phi$, whether $\neg\Phi$ is equivalent to a Datalog program.
This problem appears naturally in the area of ontology-mediated querying, where a certain querying language used in practice and known as $(\mathcal{ALC},\text{UCQ})$ is logically equivalent to the logic MMSNP.
The evaluation of such queries is in general a coNP-complete problem.
When such a query is Datalog-rewritable, the evaluation of the query becomes a polynomial-time tractable problem.
\begin{restatable}{theorem}{boundedwidthmmsnp}
\label{thm:datalogrewritability}
    The Datalog-rewritability problem for MMSNP is decidable, and is 2NExpTime-complete.
\end{restatable}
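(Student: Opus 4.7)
The plan is to combine the algebraic characterization of bounded width obtained earlier in the paper with the classical translation between MMSNP sentences and CSPs of $\omega$-categorical templates, and to analyze the complexity of the resulting decision procedure.

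Given an MMSNP sentence $\Phi$, one first passes to an equivalent disjunction of connected MMSNP sentences $\Phi_1\vee\cdots\vee\Phi_k$, each $\Phi_i$ describing the complement of $\Csp(\rel A_i)$ for some $\omega$-categorical template $\rel A_i$ that is a reduct of a finitely bounded homogeneous structure $\rel B_i$ built from the forbidden patterns of $\Phi_i$. Standard results on MMSNP and Datalog reduce the Datalog-rewritability question for $\Phi$ to the analogous question for each connected component $\Phi_i$, which in turn is equivalent to $\Csp(\rel A_i)$ having bounded width.

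Applying the bounded-width characterization from the paper, $\Csp(\rel A_i)$ has bounded width if and only if $\rel A_i$ admits canonical pseudo-WNU polymorphisms of all arities $n\geq 3$. Canonicity collapses this infinite-algebra question to a finite one: a canonical polymorphism factors through the finite action of $\Aut(\rel B_i)$ on tuples of bounded length, so the existence of the required polymorphisms reduces to the existence of certain terms in a finite quotient algebra $\mathbf{Q}_i$, whose size is at most doubly exponential in $|\Phi|$. By the collapse of the bounded-width hierarchy in the finite case, this in turn reduces to verifying a single fixed-arity pseudo-WNU identity in $\mathbf{Q}_i$, which can be done in NExpTime in $|\mathbf{Q}_i|$, yielding the 2NExpTime upper bound in $|\Phi|$.

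For the matching lower bound, I would reduce from a known 2NExpTime-hard problem in the MMSNP setting, most naturally the containment or FO-rewritability problem proved 2NExpTime-hard by Feier, Kuusisto, and Lutz. The main obstacle, which I expect to be the hardest part of the proof, is designing such a reduction that distinguishes bounded from unbounded width: positive instances of the source problem must become Datalog-rewritable MMSNP sentences, while negative ones must not. I expect this to require a gadget-based padding using a fixed MMSNP sentence corresponding to a CSP of known unbounded width for the negative case, combined with a monotone construction that preserves Datalog-rewritability on the positive side.
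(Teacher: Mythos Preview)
Your architecture for the upper bound matches the paper's: decompose into connected sentences, pass to the $\omega$-categorical templates $\rel A_{\mathcal F_i}$, invoke the algebraic characterization of bounded width (the paper's Theorem~\ref{thm:characterization-mmsnp}), and reduce to a property of a finite quotient clone. Two concrete problems remain, however.

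First, your complexity arithmetic does not close. You assert that $|\mathbf{Q}_i|$ is doubly exponential in $|\Phi|$ and that checking the relevant identity takes \emph{NExpTime} in $|\mathbf{Q}_i|$; composing these gives nondeterministic triply-exponential time in $|\Phi|$, not 2NExpTime. The paper's argument is that after computing a (not necessarily strong) normal form in doubly-exponential time, one tests whether the resulting finite clone admits a minion homomorphism to an affine clone, and this test is in \emph{NP} in the size of that clone by~\cite{MetaChenLarose}. NP on a doubly-exponential input yields 2NExpTime. So you need the NP bound, not NExpTime, and the detour through ``a single fixed-arity pseudo-WNU identity'' is both unnecessary and not obviously cheap enough; the paper tests equational affineness of the finite action directly.

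Second, the lower bound is not where the work lies. You describe designing a gadget reduction from FO-rewritability as ``the hardest part of the proof'', but the 2NExpTime-hardness of Datalog-rewritability for MMSNP is already established in the literature (the paper cites Theorem~18 in~\cite{BourhisLutz}) and is simply quoted. No new reduction or gadget construction is required; the genuine content of Theorem~\ref{thm:datalogrewritability} is entirely in the upper bound via the algebraic characterization.
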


\subsection{Related results}

Local consistency for $\omega$-categorical structures was studied for the first time in~\cite{BodirskyDalmau} where basic notions were introduced and some basic results provided. Bounded width was characterised for first-order reducts of certain $k$-homogeneous $\ell$-bounded structures 
in~\cite{SmoothApproximations,Rydval:2020}.

A structure $\rel A$ has \emph{bounded strict width}~\cite{FederVardi}
if not only $\Csp(\rel A)$ is solvable by local consistency methods, but  moreover every partial solution of a locally consistent instance can be extended to a total solution.
The articles~\cite{Wrona:2020a} and~\cite{Wrona:2020b} give the upper bound $(2, \max(3,\ell))$ on the relational width for first-order expansions of some classes of $2$-homogeneous, $\ell$-bounded structures under the stronger assumption of bounded strict width; Corollary~\ref{cor:explicitcollapse} for first-order reducts of $\rel{H}_n$ and of $\rel{C}^{\omega}_{\omega}$ also follows from~\cite{Wrona:2020a}.  The results of~\cite{Wrona:2020b} were generalized in~\cite{strictwidth}, where the upper bound $(k,\max(k+1,\ell))$ on the relational width was given for first-order expansions of certain classes of $k$-homogeneous, $\ell$-bounded structures with bounded strict width for $k>2$.

It has been recently shown in~\cite{Wrona:2024} that for a large class of $\omega$-categorical structures the existence of an \emph{oligopotent quasi near-unanimity polymorphism}, which is equivalent to having bounded strict width, and the conditions on canonical polymorphisms in Theorem~\ref{thm:canonicalbwidth} are not the only sufficient conditions for having bounded width. 
Namely, for certain $2$-homogeneous, $\ell$-bounded structures having a chain of 
\emph{directed quasi Jonsson polymorphisms} the bound of $(2, \max(3,\ell)))$ was obtained for the relational width, 
which is the same upper bound as in~\cite{Wrona:2020b}.

There are also classes of $k$-homogeneous $\ell$-bounded structures for which it is known that all their first-order reducts with bounded width have relational width at most $(2k,\max(3k,\ell))$, but where this collapse  follow s neither from Theorem~\ref{thm:canonicalbwidth} nor from any other result mentioned above. For the first-order reducts of $(\mathbb Q;<)$, such a collapse follows from~\cite{Rydval:2020,NebelBueckert}, and for first-order reducts of certain $k$-uniform hypergraphs with $k\geq 3$, this collapse was proven in~\cite{hypergraphs} using a combination of Theorem~\ref{thm:canonicalbwidth} and additional work. To our knowledge, there is no class of $k$-homogeneous $\ell$-bounded structures where the bound on the relational width from the first item of Theorem~\ref{thm:canonicalbwidth} would be known not to apply.

\subsection{Organisation of the present  article}
In Section~\ref{sect:prelims} we provide the basic notions and definitions. The reduction to the finite using canonical functions which leads to the collapse of the bounded width hierarchy, the proof of Theorem~\ref{thm:canonicalbwidth}, is given in Section~\ref{sect:collapse}. We then extend the algebraic theory of smooth approximations in Section~\ref{sect:looplemma} before applying it to first-order reducts of unary structures 
and MMSNP in Section~\ref{sect:newboundedwidth}. In this section we also give a proof of Corollary~\ref{cor:explicitcollapse} and Theorem~\ref{thm:datalogrewritability}.

\section{Preliminaries}\label{sect:prelims}

\subsection{Structures and model-theoretic notions}
All relational  structures in this article are assumed to be at most countable. 
For sets $B,I$, the \emph{orbit} of a tuple $b\in B^I$ under the action of a permutation group $\group$ on $B$
is the set $\{\alpha(b)\mid \alpha\in\group\}$; it is also called the \emph{$\group$-orbit} of $b$.
A countable relational  structure $\rel B$ is \emph{$\omega$-categorical} if its automorphism group  $\Aut(\rel B)$ is \emph{oligomorphic}, i.e., for all $n \geq 1$, the number of  orbits of the action of $\Aut(\rel B)$ on  $n$-tuples is finite.
For an $\omega$-categorical relational structure $\rel B$ and for every $n\geq 1$ it holds that two $n$-tuples of elements of the structure are in the same orbit under $\Aut(\rel B)$ if, and only if, they have the same \emph{type}, i.e., they satisfy the same first-order formulas. The \emph{atomic type} of an $n$-tuple is the set of all atomic formulas satisfied by this tuple; the structure $\rel B$ is \emph{homogeneous} if any two tuples of the same atomic type belong to the same orbit. For $k\geq 1$, we say that a relational structure $\rel B$ is \emph{$k$-homogeneous} if for all tuples $a,b$ of arbitrary finite equal length, if all $k$-subtuples of $a$ and $b$ are in the same orbit under $\Aut(\rel B)$, then $a$ and $b$ are in the same orbit under $\Aut(\rel B)$. 
For $\ell\geq 1$,  we say that $\rel B$ is \emph{$\ell$-bounded} if for every finite $\rel X$, if all substructures $\rel Y$ of $\rel X$ of size at most $\ell$ embed in $\rel B$, then $\rel X$ embeds in $\rel B$. The structure $\rel B$ is \emph{universal} for a class of finite structures in its  signature if it embeds all members of the class. There exist an up to isomorphism unique countable universal homogeneous graph, partial order, tournament, and $K_n$-free graph (i.e., for any fixed $n\geq 3$, a graph not containing any  complete graph on $n$-vertices), respectively. 
We say that $\rel B$ is \emph{unary} if all relations in its signature are unary.
A \emph{first-order reduct} of a structure $\rel B$ is a structure on the same domain whose relations have a first-order definition without parameters in $\rel B$. 
\begin{example}[label=ex:Q]
The relational structure $(\Q;<)$ consisting of the set of rational numbers with the standard linear order has precisely $3$ orbits of pairs, namely the orbits $\{(a,b)\in\Q^2\mid a<b\}$, $\{(a,b)\in\Q^2\mid a>b\}$, $\{(a,b)\in\Q^2\mid a=b\}$; in the following, we will denote these orbits by $<$, $>$, and $\textrm{Eq}$, respectively, by abuse of notation. The structure $\Aut(\Q;<)$ is $2$-homogeneous since for any $k\geq 2$, the orbit of a $k$-tuple of elements of $\Q$ is determined by the orbits of its subpairs. It immediately follows that $(\Q;<)$ is also homogeneous and that the automorphism group $\Aut(\Q;<)$ is oligomorphic. Moreover, the structure $(\Q;<)$ is universal for the class of all finite linear orders. Finally, $(\Q;<)$ is $3$-bounded since in order to check whether a finite structure $\rel A=(A;<)$ embeds into $(\Q;<)$, it is enough to check that $<$ is a linear order on $A$, i.e. that it is a total, antisymmetric, antireflexive, and transitive relation; these properties only depend on the substructures of $\rel A$ of size at most $3$. An interesting example of a first-order reduct of $(\Q;<)$ is the structure $(\Q;\textrm{Betw})$, where $\textrm{Betw}$ is the ternary betweenness relation defined by $\{(a,b,c)\in\Q^3\mid a<b<c\text{ or }a>b>c\}$.
\end{example}

A first-order formula  is called  \emph{primitive positive} (pp) 
if it is built  exclusively from  atomic formulae, existential quantifiers, and conjunction. A  relation is \emph{pp-definable} in a structure $\rel B$ if it is first-order definable by a pp-formula.

Two relational structures $\rel A$ and $\rel B$ are \emph{homomorphically equivalent} if there exist homomorphisms from $\rel A$ to $\rel B$ and from $\rel B$ to $\rel A$. We remark that it is easy to see that homomorphically equivalent structures have the same relational width (to be defined later).

A relational structure $\rel A$ is \emph{Ramsey} if for any finite substructures $\rel B,\rel C$, of $\rel A$, there exists a substructure $\rel D$ of $\rel A$ such that the following holds. For every mapping $\chi$ from the set of all embeddings of $\rel B$ into $\rel D$ to $[k]$, there exists an embedding $f$ of $\rel C$ into $\rel D$ such that for all embeddings $g_1,g_2$ of $\rel B$ into $\rel C$, $\chi(f\circ g_1)=\chi(f\circ g_2)$.

\subsection{Polymorphisms, clones and identities}\label{subsect:clones}

A \emph{polymorphism} of a relational structure $\rel A$
is a homomorphism from some finite power of $\rel A$ to $\rel A$. 
The set of all  polymorphisms of a structure $\rel A$ is denoted by $\Pol(\rel A)$; it is a \emph{function clone}, i.e., a set of finitary operations on a fixed set which contains all projections and
which is closed under arbitrary compositions.

If $\cC$ is a function clone, then we denote the domain of its functions by $C$,  i.e., for every function $f\in\clone$, there exists $k\geq 1$ such that $f\in C^{C^k}$; we say that $\clone$ \emph{acts on} $C$. 
The clone $\clone$ also naturally acts (componentwise) on $C^l$ for any $l\geq 1$, and we write $\clone \curvearrowright C^l$ for this action. 
We say that a relation $S\subseteq C^l$ is \emph{invariant} under $\clone$ if for every $f\in\clone$ of arity $k\geq 1$, and for all $a_1,\dots,a_k\in S$, it holds that $f(a_1,\dots,a_k)\in S$; in this expression, we are referring to the action of $\clone$ on $C^l$. If $S$ is invariant, $\clone$ acts naturally on $S$ by restriction, and we write $\clone \curvearrowright S$ for this action. Finally, if $\sim$ is an invariant equivalence relation on an invariant relation $S$, then $\clone$ acts naturally on the classes of $\sim$ (by its action on representatives of the classes); we write $\clone\curvearrowright S/{\sim}$. Any  action $\clone \curvearrowright S/{\sim}$ is called a \emph{subfactor} of $\cC$, and we also call the pair $(S,\sim)$ a subfactor. A  subfactor $(S,\sim)$ is \emph{minimal} if $\sim$ has at least  two classes and no proper subset of $S$ intersecting at least two $\sim$-classes is invariant under $\cC$. For a clone $\clone$ acting on a set $X$ and  $Y\subseteq X$ we write $\langle Y \rangle_{\clone}$ for the smallest $\clone$-invariant subset of $X$ containing $Y$.

Let $n,k\geq 1$, let $A, B$ be sets, and let $\group$ and $\group'$ be permutation groups on $A$ and $B$, respectively. A $k$-ary function $f\colon A^k\rightarrow B$ is \emph{$n$-canonical} from $\group$ to $\group'$ if for all $a_1,\dots,a_k\in A^n$ and all $\alpha_1,\dots,\alpha_k\in\group$ there exists $\beta\in\group'$ such that $f(a_1,\dots,a_k)=\beta\circ f(\alpha_1(a_1),\dots,\alpha_k(a_k))$ (in these expressions, we consider the componentwise action of $f$ on $A^n$ and of $\group,\group'$ on $A^n$ and $B^n$, respectively). We say that $f$ is \emph{canonical} from $\group$ to $\group'$ if it is $n$-canonical from $\group$ to $\group'$ for every $n\geq 1$. We say that it is \emph{diagonally canonical} if it satisfies the definition of canonicity in case  $\alpha_1 = \cdots = \alpha_k$.  If $A=B$, $\group=\group'$, and $f$ is $n$-canonical, canonical, or diagonally canonical from $\group$ to $\group$, we say that $f$ is \emph{$n$-canonical, canonical, or diagonally canonical}, respectively, \emph{with respect to $\group$.} A function canonical with respect to $\group$ is also called \emph{$\group$-canonical}. If $f$ is $n$-canonical with respect to $\group$, it induces an operation on the set $C^n/\group$ of  $\group$-orbits of $n$-tuples. If all functions of a function clone $\clone$ are $n$-canonical with respect to $\group$, then $\clone$ acts on $C^n/\group$ and we write $\clone^n/\group$ for this action;  if $\group$ is oligomorphic then $\clone^n/\group$ is a function clone on a finite
set.

\begin{example}[continues=ex:Q]
The binary minimum operation $\min$ on $\Q$ as well as any function $\lex\colon \Q^2\rightarrow \Q$ satisfying $\lex(a,b)<\lex(a',b')$ if, and only if, $a<a'$ or $a=a'$ and $b<b'$ are both polymorphisms of $(\Q;<)$. The operation $\lex$ is canonical with respect to $\Aut(\Q;<)$ -- using the notation for orbits of pairs under $\Aut(\Q;<)$ introduced above, it acts on these orbits as follows: $\lex(<,<)=\lex(<,\textrm{Eq})=\lex(<,>)=\lex(\textrm{Eq},<)=\;<, \lex(>,<)=\lex(>,\textrm{Eq})=\lex(>,>)=\lex(\textrm{Eq},>)=\;> , \lex(\textrm{Eq},\textrm{Eq})= \textrm{Eq}$. On the other hand, $\min$ is not canonical with respect to $\Aut(\Q;<)$. Indeed, $0<1<2<3$ but $\min(0,1)=\min(1,0)$ and $\min(0,1)<\min(3,2)$, and hence the expression $\min(<,>)$ is not well-defined. However, $\min$ is diagonally canonical with respect to $\Aut(\Q;<)$. Indeed, for any tuple $(a_1,\dots,a_4)\in\Q^4$, the atomic type of $(\min(a_1,a_2),\min (a_3,a_4))$ depends only on the atomic type of $(a_1,\dots,a_4)$; hence the homogeneity of $(\Q;<)$ implies that if $(b_1,\dots,b_4)\in \Q^4$ lies in the same $\Aut(\Q;<)$-orbit as $(a_1,\dots,a_4)$, then the orbits of $(\min(a_1,a_2),\min (a_3,a_4))$ and $(\min(b_1,b_2),\min (b_3,b_4))$ coincide.
\end{example}

Note that for any family of permutation groups on the same set there is a smallest permutation group containing each member of the family, obtained by closing the union over the family under composition. Since any function clone $\clone$ is closed under composition, application of this fact to the family of all permutation groups which are contained (as a subset of the unary functions of $\clone$) in $\clone$ yields the largest permutation group contained in $\clone$ which we denote by $\group_{\clone}$. 
We say that $\clone$ is oligomorphic if $\group_{\clone}$ is oligomorphic. For $n\geq 1$, the \emph{$n$-canonical (canonical) part} of $\clone$ is the clone of those functions of $\clone$ which are $n$-canonical (canonical) with respect to $\group_{\clone}$. We write $\cann{\clone}$ and $\can{\clone}$ for these sets, which form themselves function clones. In particular, we use this notation for polymorphism clones, writing  $\cann{\Pol(\rel A)}$ and $\can{\Pol(\rel A)}$.

For a set of finitary functions $\mathscr{F}$ over the same fixed set $C$ we write $\overline{\mathscr F}$ for the set of those functions $g$ such that for all finite subsets $F$ of $C$, there exists a function in $\mathscr{F}$ which agrees with $g$ on $F$. 
We say that $f$ \emph{locally interpolates} $g$ modulo $\group$,
where $f,g$ are $k$-ary functions and  $\group$ is a permutation group all of which act on the same domain, 
if $g\in \overline{\{\beta\circ f(\alpha_1, \ldots, \alpha_k)\mid\beta, \alpha_1, \ldots, \alpha_k \in \group\}}$. 
Similarly, we say  that $f$ \emph{diagonally interpolates} $g$ modulo $\group$ if $f$ locally interpolates $g$ modulo $\group$ 
with $\alpha_1 = \cdots = \alpha_k$.
If $\group$ is the automorphism group of a \emph{Ramsey structure} in the sense of~\cite{BodirskyRamsey},
then every function on its domain locally (diagonally)  interpolates a canonical (diagonally canonical)  function modulo $\group$~\cite{BPT-decidability-of-definability, BodPin-CanonicalFunctions}.

\begin{theorem}[Consequence of Theorem 5 from~\cite{BodPin-CanonicalFunctions}]\label{thm:canonical-homo}
    Let $\rel A$ be a countable homogeneous Ramsey structure, let $\rel B$ be an $\omega$-categorical structure, and let $h\colon A\to B$ be an arbitrary function.
    Then $h$ locally interpolates modulo $\Aut(\rel A)$ a function that is canonical from $\Aut(\rel A)$ to $\Aut(\rel B)$, and it diagonally interpolates modulo $\Aut(\rel A)$ a function that is diagonally canonical from $\Aut(\rel A)$ to $\Aut(\rel B)$.
\end{theorem}

In particular, if $f\colon A^n\to A$ is a polymorphism of a structure $\rel A'$ with domain $A$, and $\rel A$ is a Ramsey structure, then $f$ (diagonally) locally interpolates modulo $\Aut(\rel A)$ a polymorphism $g$ of $\rel A'$ that is (diagonally) canonical with respect to $\Aut(\rel A)$.
We say that a clone $\cD$ locally interpolates a clone $\cC$ modulo a permutation group $\group$ if for every $g \in \cD$ there exists $f \in \cC$ such that $g$ locally interpolates $f$ modulo $\group$.  

A clone $\clone$ is a \emph{model-complete core} if its unary functions are equal to $\overline{\gG_\cC}$, and a structure $\rel A$ is called a  model-complete core if its polymorphism clone is. A finite model-complete core is called a \emph{core}. For every $\omega$-categorical structure $\rel A$ there exists an up to isomorphism unique  structure that is a model-complete core and that is homomorphically equivalent to $\rel A$ by~\cite{cores}; we call this structure \emph{the model-complete core of $\rel A$.} Since, as remarked earlier, the relational width of a structure is invariant under homomorphic equivalence, one can in this context often work without loss of generality  with model-complete cores. In particular, one can then use the following  property which distinguishes model-complete cores from other structures: 
Any $\omega$-categorical model-complete core pp-defines all orbits of  $n$-tuples with respect to its own automorphism group, for all $n \geq 1$~\cite{BodirskyBook}. Since for any $\omega$-categorical structure $\rel B$, pp-definability of a relation is equivalent to its invariance under  $\Pol(\rel B)$~\cite{BodNesetril}, in the case of model-complete cores we get that all orbits of its  automorphism group are invariant under all polymorphisms.

An \emph{identity} is an equation of terms built from some functional symbols where all variables are  implicitly interpreted as being  universally quantified. For example, if $f$ is a $k$-ary function symbol for some $k\geq 1$, then  $f(x, \ldots , x) = x$ is an identity, and a $k$-ary function on some domain $C$ \emph{satisfies} the identity $f(x, \ldots , x) = x$ if this equation holds for all elements $x$ of $C$. Functions satisfying this particular  identity of the appropriate arity are called \emph{idempotent}; a function clone is
idempotent if all of its functions are.
A $k$-ary operation $w$ is called a \emph{weak near-unanimity
(WNU)} operation if it satisfies the set of identities
containing an equation for each pair of terms in $\{ w(x, \ldots , x, y), w(x,y, \ldots , x), \ldots, w(y, x, \ldots , x) \}$.
It is called \emph{totally symmetric} if
$w(x_1, \ldots, x_k) = w(y_1, \ldots , y_k)$ whenever $\{x_1, \ldots, x_k \} = \{ y_1, \ldots , y_k \}$. 
Each
set of identities also has a \emph{pseudo}-variant obtained by composing each term appearing in the
identities with a distinct new unary function symbol. For example, a ternary pseudo-WNU operation $f$ satisfies the identities: $e_1 \circ f(y,x,x) = e_2 \circ f(x,y,x)$,
$e_3 \circ f(y,x,x) = e_4 \circ f(x,x,y)$ and
$e_5 \circ f(x,y,x) = e_6 \circ f(x,x,y)$. A function clone $\mathscr C$ \emph{satisfies} a given set of identities if the function symbols which appear in the identities can be assigned functions in $\mathscr C$ in such a way that all identities in the set are satisfied by the assigned functions;  if $\mathscr U\subseteq \mathscr C$ is a set of unary functions, then $\mathscr C$  \emph{satisfies a set of pseudo-identities modulo $\mathscr U$} if it satisfies the identities in such a way that the new unary function symbols which were added to the original identities to obtain the pseudo-variant are assigned values in $\mathscr U$. For an example for the satisfaction of identities in a polymorphism clone, note that  the clone of polymorphisms of $(\Q;<)$ satisfies the $n$-ary WNU identities for every $n\geq 2$, as witnessed by the $n$-ary operation $\min(x_1,\min(\dots,\min(x_{n-1},x_{n})))$. The function clone of all finitary operations on a countable set which are injective or arise from an injection by the addition of dummy variables is easily seen to be the polymorphism clone of a suitable structure (see e.g.~\cite{BodChenPinsker} for such a structure); it is not hard to see that it satisfies the pseudo-identity $e_1\circ f(x,y)=e_2\circ f(y,x)$ modulo the set of unary injections, but that it does not contain any commutative operation.

An arity-preserving map $\xi\colon \clone \to \cD$ between function clones is called a \emph{clone homomorphism} if it preserves
projections, i.e., it maps every  projection in $\cC$ to the corresponding  projection in $\cD$,  and compositions, i.e., it satisfies $\xi(f \circ (g_1, \ldots, g_n)) = \xi(f) \circ (\xi(g_1), \ldots, \xi(g_n))$ for all $n, m \geq 1$ and all $n$-ary $f \in \cC$ and $m$-ary $g_1, \ldots, g_n \in \cC$. 
An arity-preserving map $\xi$ is a \emph{minion homomorphism} if $\xi(f \circ (\pi_{i_1}, \ldots, \pi_{i_n})) = \xi(f) \circ (\pi_{i_1}, \ldots, \pi_{i_n})$ for all $n, m \geq 1$ and all $n$-ary $f \in \cC$ and $m$-ary projections $\pi_{i_1}, \ldots, \pi_{i_n}$.
We say that a function clone  $\clone$ is
\emph{equationally trivial} if it has a clone homomorphism to the clone $\Proj$ of projections over the two-element domain,  and \emph{equationally non-trivial} otherwise.
We also say that $\clone$  is
\emph{equationally affine} if it has a clone
homomorphism to an \emph{affine clone}, i.e., a clone of affine maps over a finite module. 
It is known that an  idempotent function clone on a finite domain is either equationally affine or it contains WNU operations of all  arities $n \geq 3$ (\cite{MarotiMcKenzie} shows a slightly weaker statement; an argument  proving the precise statement made here is attributed to E.\ Kiss in~\cite[Theorem 2.8]{Maltsev-Cond}; a more recent proof of this statement  can be found in~\cite{StrongSubalgebras}). 
Similarly, if $\rel A$ is an  $\omega$-categorical model-complete core, then $\Pol(\rel A)^{\canonical}$ 
is either equationally affine, or it contains
pseudo-WNU  operations modulo $\overline{\Aut(\rel A)}$ of all 
arities $n \geq 3$ (see~\cite{BPP-projective-homomorphisms,SmoothApproximations} for the lift of the corresponding result from the finite). 

If $\cC$, $\cD$ are function clones and $\cD$ has a finite domain, then a clone or minion homomorphism
$\xi\colon \cC \to \cD$ is \emph{uniformly continuous} if for all $n \geq 1$ there exists a finite subset $F$ of $C^n$
such that $\xi(f) = \xi(g)$
for all $n$-ary $f, g \in \cC$ which agree on $F$.
If $C$ is finite, then clearly every clone or minion homomorphism $\cC\to\cD$ is uniformly continuous.

\subsection{CSP, Relational Width, Minimality}

A \emph{CSP instance} over a set $A$ is a pair $\mathcal \instance=(\V,\constraints)$, where $\V$ is a finite set of variables, and $\mathcal C$ is a finite set of \emph{constraints}; each constraint $C\in \mathcal C$ is a subset of $A^U$ for some non-empty  $U\subseteq \V$; the set $U$ is called the \emph{scope} of $C$. A map $f\colon \V\rightarrow A$ is a \emph{solution} of a CSP instance $\instance=(\V,\constraints)$ if for every constraint $C\in\constraints$ with scope $U\subseteq \V$, it holds that $f|_U\in C$. If $\rel A$ is a relational structure with domain $A$, then we say that $\instance$ is an \emph{instance of $\Csp(\rel A)$} if  every constraint  $C\in\mathcal{C}$ can be viewed as a relation pp-definable in $\rel A$ by totally ordering its scope $U$; more precisely, there exists an enumeration  $u_1,\ldots,u_k$ of the elements of $U$ and  a $k$-ary relation $R$ that is pp-definable in $\rel A$ such that for all $f\colon U\to A$ we have  
$f\in C\Leftrightarrow (f(u_1),\dots,f(u_k))\in R$.
A CSP  instance  is \emph{non-trivial} if it does not contain any empty constraint; otherwise, it is \emph{trivial}.
Given a constraint $C\subseteq A^U$ and $K\subseteq U$, the \emph{projection of $C$ onto $K$} is defined by $C|_K:=\{f|_K\colon f\in C\}$. We say that a constraint $C$ of a $\Csp$ instance over a set $A$ is \emph{preserved} by a function $f\colon A^n\rightarrow A$ if for all $g_1,\ldots,g_n\in C$ we have $f(g_1,\ldots,g_n)\in C$. 

\begin{definition}\label{def:minimality}
Let $1\leq m\leq n$. We say that an instance $\instance$ of $\Csp(\rel A)$ with variables $\V$ is \emph{$(m,n)$-minimal}  if  both of the following hold:
\begin{itemize}
\item every at most $n$-element subset of the variable set $\V$ is contained in the scope of some constraint in $\instance$;
\item for every at most $m$-element subset of variables $K\subseteq \V$ and any two constraints $C_1, C_2 \in \instance$ whose scopes contain $K$, the projections of $C_1$ and $C_2$ onto $K$ coincide.
\end{itemize}
We say that an instance $\instance$ is \emph{$m$-minimal} if it is $(m,m)$-minimal.
\end{definition}

\begin{example}[label=ex:minimality]
Let us consider the following instance $\instance=(\V,\constraints)$ of $\Csp(\Q;<)$. The set $\V$ of variables consists of elements $v_1,v_2,v_3$, and the set $\constraints$ consists of the constraints $C_{(1,2)}, C_{(2,3)}, C_{(3,1)}$, where $C_{i,j}:=\{f\in \Q^{\{v_i,v_j\}}\mid f(v_i)<f(v_j)\}$ for every $(i,j)\in\{(1,2),(2,3),(3,1)\}$. The instance $\instance$ has clearly no solution since every assignment $g\in \Q^{\{v_1,v_2,v_3\}}$ with the property that $g|_{\{v_i,v_j\}}\in C_{(i,j)}$ for every $(i,j)\in\{(1,2),(2,3),(3,1)\}$ satisfies both $g(v_1)<g(v_2)$ since $g|_{\{v_1,v_2\}}\in C_{(1,2)}$, and $g(v_1)>g(v_2)$ (by the transitivity of $<$). The instance $\instance$ is $(2,2)$-minimal since every pair of variables is contained in the scope of some constraint, and $C_{(i,j)}|_{v_i}=\Q^{v_i}$, $C_{(i,j)}|_{v_j}=\Q^{v_j}$ for every $(i,j)\in\{(1,2),(2,3),(3,1)\}$.
\end{example}

Let $1\leq m\leq n$, and let $\rel A$ be a relational structure. 
Clearly not every instance $\instance=(\V,\mathcal{C})$ of $\Csp(\rel A)$ is $(m,n)$-minimal.
However, every instance $\instance$ is \emph{equivalent} to an $(m,n)$-minimal instance $\instance'$ in the sense that $\instance$ and $\instance'$ have the same set of solutions.
In particular we have that  if $\instance'$ is trivial, then $\instance$ has no solutions.
Moreover, if $\rel A$ is $\omega$-categorical,
then the instance  $\instance'$ can be computed in  only polynomially many steps in the number of variables and constraints as follows.
First introduce a new constraint $A^L$ for every set $L\subseteq\V$  with at most $n$ elements to satisfy the first condition.
Then remove orbits with respect to $\Aut(\rel A)$ from  the constraints in the instance as long as the second condition is not satisfied.

To see that $\instance'$ is an instance of $\Csp(\rel A)$, observe that the relation $A^L$ is pp-definable in $\rel A$ for every $L\subseteq \V$ and hence, the instance created from $\instance$ by adding the new constraints is still an instance of $\Csp(\rel A)$. Moreover, when the algorithm removes orbits from some constraint, then the new constraint is   obtained by a pp-definition from two old constraints and hence, it is pp-definable in $\rel A$. Whence, the instance $\instance'$ is an instance of $\Csp(\rel A)$.

The algorithm just described  introduces at most $\sum\limits_{i=1}^n\binom{|\V|}{i}$ new constraints. 
Let $p$ denote the maximum of $n$ and the maximal arity of the  relations of $\rel A$. Then the algorithm uses only relations of arity at most $p$. Note that by the $\omega$-categoricity of $\rel A$,  every relation pp-definable in $\rel A$ is a finite  union of  orbits of tuples with respect to $\Aut(\rel A)$. Hence, since in every step 
the algorithm removes at least one orbit from some constraint and since the number of orbits that can be removed is polynomial in $|\V|+|\constraints|$ by the discussion above, the number of steps of the algorithm is polynomial in $|\V|+|\constraints|$.

\begin{definition}
Let $1\leq m\leq n$. 
A relational structure $\rel A$ has \emph{relational width $(m,n)$} if every non-trivial $(m,n)$-minimal instance of $\rel A$ has a solution. $\rel A$ has \emph{bounded width} if it has relational width $(m,n)$ for some $m,n$.
\end{definition}

\begin{example}[continues=ex:minimality]
    Let us consider the instance $\instance$ from Example~\ref{ex:minimality}. The algorithm above which produces a $(2,3)$-minimal instance $\instance'$  equivalent to the instance $\instance$ proceeds as follows. It first introduces a new constraint $C_{(1,2,3)}:=\Q^{\V}$ so as to satisfy the first item of Definition~\ref{def:minimality}. In order to satisfy the second item, the algorithm compares projections of constraints to subsets of $\V$ containing at most $2$ elements. By comparing the projections of $C_{(1,2,3)}$ and of $C_{(1,2)}$ to $\{v_1,v_2\}$, it removes from $C_{(1,2,3)}$ all maps $f$ which do not satisfy $f(v_1)<f(v_2)$. By considering the projections of $C_{(1,2,3)}$ and $C_{(2,3)}$ to $\{v_2,v_3\}$ and of $C_{(1,2,3)}$ and $C_{(3,1)}$ to $\{v_3,v_1\}$, it finally removes all maps from $C_{(1,2,3)}$. Now, comparing the projections of $C_{(1,2,3)}$ and of $C_{(i,j)}$ for every $(i,j)\in\{(1,2),(2,3),(3,1)\}$, the algorithm removes also all maps from $C_{(1,2)}$, $C_{(2,3)}$, and from $C_{(3,1)}$. The resulting instance $\instance'$ is then $(2,3)$-minimal but it contains only empty constraints, and is therefore in particular trivial. In fact, it can be shown that $(\Q;<)$ has relational width precisely $(2,3)$.
\end{example}

The following statement gives an overview of conditions characterizing relational structures with bounded width that are used in several proofs later.
Let $p\geq 2$ be a  prime number and let $R_0$ and $R_1$ be the relations defined by $\{(x,y,z)\in\rel Z_p\mid x+y+z=i\bmod p\}$ for $i\in\{0,1\}$.
We say that a structure $\rel B$ has a \emph{pp-construction} in a structure $\rel A$ if $\rel B$ is homomorphically equivalent to a structure with domain $A^n$, for some $n\geq 1$, whose relations are pp-definable in $\rel A$  (for this purpose, a $k$-ary relation on $A^n$ is regarded as a $kn$-ary relation on $A$).

\begin{theorem}\label{thm:characterization-bwidth}
Let $\rel A$ be an $\omega$-categorical relational structure. All the implications $(i)\Rightarrow (j)$ and $(i)\Leftrightarrow (i')$ in the list below hold for $1\leq i\leq j\leq 3$.
\begin{enumerate}
    \item\label{itm:width-2-3} $\rel A$ has relational width at most $(2,3)$.
    \item\label{itm:bw} $\rel A$ has bounded width.
    \item[(\ref*{itm:bw}')]\label{itm:datalog} The class of finite structures that do not have a homomorphism to $\rel A$ is definable by a Datalog program.
    \item\label{itm:minion-affine} $\Pol(\rel A)$ does not admit a uniformly continuous minion homomorphism to an affine clone.
    \item[(\ref*{itm:minion-affine}')]\label{itm:pp-construction-zp} $\rel A$ does not pp-construct $(\mathbb Z_p;R_0,R_1)$ for any prime number $p$.
    \setcounter{bwcounter}{\value{enumi}}
\end{enumerate}
\noindent If $\rel A$ is finite, then all these conditions are equivalent, and moreover equivalent to the following two statements. 
\begin{enumerate}
\setcounter{enumi}{\value{bwcounter}}
\item\label{itm:core-affine} The expansion $\rel B$ of the core of $\rel A$ by all unary singleton relations is such that $\Pol(\rel B)$ is not equationally affine.
    \item\label{itm:wnu} $\Pol(\rel A)$ contains WNU operations of all arities $n\geq 3$.
\end{enumerate}
\end{theorem}

The implication (\ref*{itm:width-2-3}) to (\ref*{itm:bw}) is trivial, 
the equivalence of (\ref*{itm:bw}) and (\ref*{itm:datalog}') follows from \cite[Theorem 23]{FederVardi} and \cite[Corollary 1]{DualitiesCSP}.
(\ref*{itm:bw}) implies (\ref*{itm:minion-affine}) by~\cite{LaroseZadori} and~\cite{wonderland},
and the equivalence of (\ref*{itm:minion-affine}) and (\ref*{itm:pp-construction-zp}') is a folklore consequence of~\cite{wonderland}.
For finite structures, the implication from (\ref*{itm:core-affine}) to (\ref*{itm:width-2-3}) was proven in~\cite{BartoCollapse}, (\ref*{itm:minion-affine}) implies (\ref*{itm:core-affine}) by~\cite{wonderland}, and  (\ref*{itm:core-affine}) is equivalent to (\ref*{itm:wnu}) by the results from~\cite{MarotiMcKenzie,Maltsev-Cond} (in fact, a slightly weaker statement is formulated there; the precise  statement made here    is attributed to E.\ Kiss in~\cite[Theorem 2.8]{Maltsev-Cond}, and another  proof can be found in~\cite{StrongSubalgebras}).

\begin{theorem}[\cite{DalmauPearson,FederVardi}]\label{thm:ts-finite}
Let $\rel A$ be a finite relational structure. $\rel A$ has relational width $(1,1)$ if, and only if, $\Pol(\rel A)$ contains totally symmetric operations of all arities.
\end{theorem}

\subsection{Smooth Approximations}

We are going to apply the fundamental theorem of smooth approximations~\cite{SmoothApproximations} to lift an action of a function clone to a larger clone.

\begin{definition}
\emph{(Smooth approximations)}
Let $A$ be a set, $n \geq 1$, and let $\sim$ be an equivalence relation on a subset $S$ of $A^n$.
We say that an equivalence relation $\eta$ on some set $S'$
with $S \subseteq S'$ \emph{approximates} $\sim$ if the restriction of $\eta$ to $S$ is a (possibly non-proper) refinement of $\sim$. We call $\eta$ an \emph{approximation} of $\sim$.

For a permutation group $\group$ acting on $A$ and leaving $\eta$ as well as each  $\sim$-class invariant, we say
that the approximation $\eta$ is \emph{smooth} if each equivalence class $C$ of $\sim$ intersects some equivalence class $C'$ of $\eta$ such that
$C \cap C'$ contains a $\group$-orbit.
\end{definition}

\begin{theorem}[The fundamental theorem of smooth approximations~\cite{SmoothApproximations}]\label{thm:fundamental}
Let $\cC \subseteq \cD$ be function clones on a set $A$, and let $\group$ be a permutation group on $A$ such that $\cD$ locally interpolates $\cC$ modulo $\group$.
Let $\sim$ be a $\cC$-invariant equivalence relation on $S \subseteq A$ with $\group$-invariant classes and finite index,
and $\eta$ be a $\cD$-invariant smooth approximation of $\sim$ with respect to $\group$. Then there exists a uniformly continuous minion
homomorphism from $\cD$ to $\cC \curvearrowright S/{\sim}$.
\end{theorem}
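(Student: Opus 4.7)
The plan is to construct $\xi$ by using local interpolation to turn each $f \in \cD$ into a representative $h \in \cC$, and then to evaluate $h$ on canonical orbit representatives provided by the smoothness condition. For each $\sim$-class $C$, smoothness furnishes an $\eta$-class $C'$ such that $C \cap C'$ contains a full $\group$-orbit $O_C$; fix a representative $a_C \in O_C$. Since $\sim$ has finite index, the set $B_n := \{(a_{C_1}, \ldots, a_{C_n}) : C_i \in S/{\sim}\}$ is a finite subset of $S^n$ for every $n \geq 1$.

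Given $f \in \cD$ of arity $n$, the hypothesis that $\cD$ locally interpolates $\cC$ produces $h \in \cC$ locally interpolating $f$, and applying this to the finite set $B_n$ yields $\beta, \alpha_1, \ldots, \alpha_n \in \group$ satisfying $f(a_{C_1}, \ldots, a_{C_n}) = \beta \circ h(\alpha_1(a_{C_1}), \ldots, \alpha_n(a_{C_n}))$ for every tuple in $B_n$. I would define $\xi(f)(C_1, \ldots, C_n)$ to be the $\sim$-class of $h(a_{C_1}, \ldots, a_{C_n})$. Because $h \in \cC$ preserves $\sim$ and in particular sends $S^n$ into $S$, this value is a genuine element of $\cC \curvearrowright S/{\sim}$, and the only thing left to check is that it does not depend on any of the auxiliary choices.

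The heart of the proof is the well-definedness chain $f(a_{C_1}, \ldots, a_{C_n}) \; \eta \; \beta \circ h(a_{C_1}, \ldots, a_{C_n})$. Since $a_{C_i}$ lies in an orbit contained in a single $\eta$-class, we have $\alpha_i(a_{C_i}) \; \eta \; a_{C_i}$; $\cD$-invariance of $\eta$ pushes this equivalence through $h$; and $\group$-invariance of $\eta$ pushes the result across $\beta$. On the $\sim$-side, $\group$-invariance of the $\sim$-classes gives $\beta \circ h(a_{C_1}, \ldots, a_{C_n}) \sim h(a_{C_1}, \ldots, a_{C_n})$. Combining these with $\eta|_S \subseteq \sim$ (the defining property of an approximation), two valid choices $(h, \beta, \alpha_i)$ and $(h', \beta', \alpha_i')$ yield $\beta \circ h(\ldots) \; \eta \; \beta' \circ h'(\ldots)$ with both sides in $S$, hence $\sim$-equivalent; stripping off $\beta$ and $\beta'$ then produces $h(\ldots) \sim h'(\ldots)$, so $\xi(f)$ is well defined.

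Uniform continuity follows immediately with witness $B := B_n$: if $f, f' \in \cD^{(n)}$ agree on $B_n$, the right-hand sides of the two chains coincide on the nose, forcing the same $\sim$-class. The minion homomorphism property is a routine check---if $h$ locally interpolates $f$, then $h \circ (p_{i_1}, \ldots, p_{i_n})$ locally interpolates $f \circ (p_{i_1}, \ldots, p_{i_n})$ with the same $\beta, \alpha_i$, so $\xi$ commutes with composition with projections. I expect the main obstacle to be the careful orchestration of the four invariance hypotheses---$\cC$-invariance of $\sim$, $\cD$-invariance of $\eta$, $\group$-invariance of $\eta$, and $\group$-invariance of the $\sim$-classes---in exactly the right order so that the equivalence chain closes up, since each link in the chain depends on precisely one of them.
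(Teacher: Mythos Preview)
The paper does not actually prove Theorem~\ref{thm:fundamental}; it is quoted from~\cite{SmoothApproximations} without proof, so there is nothing in the paper to compare against. Your sketch is essentially the standard argument and the well-definedness chain is correct: the four invariance hypotheses are deployed exactly as you describe, and the key observation that $\xi(f)(C_1,\dots,C_n)$ depends only on the $\eta$-class of $f(a_{C_1},\dots,a_{C_n})$ (intersected with $S$) is what makes everything work, including uniform continuity.

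There is one genuine slip, however, in your minion-homomorphism step. You assert that if $h$ locally interpolates $f$, then $h\circ(p_{i_1},\dots,p_{i_n})$ locally interpolates $f\circ(p_{i_1},\dots,p_{i_n})$ ``with the same $\beta,\alpha_i$''. This fails when the indices $i_j$ repeat: if, say, $i_1=i_2$, the composed function would need a single $\gamma_{i_1}$ playing the role of both $\alpha_1$ and $\alpha_2$, and there is no reason for $\alpha_1=\alpha_2$. So $h\circ(p_{i_1},\dots,p_{i_n})$ need not be a valid interpolant for $f\circ(p_{i_1},\dots,p_{i_n})$ at all. The fix is simple and is already implicit in your well-definedness argument: once you know $\xi(f)(C_1,\dots,C_n)$ is the unique $\sim$-class meeting the $\eta$-class of $f(a_{C_1},\dots,a_{C_n})$, the minion identity is immediate, since $f\circ(p_{i_1},\dots,p_{i_n})(a_{D_1},\dots,a_{D_m}) = f(a_{D_{i_1}},\dots,a_{D_{i_n}})$ on the nose. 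Your phrasing of the uniform-continuity step (``the right-hand sides of the two chains coincide on the nose'') is also slightly off---it is the \emph{left}-hand sides $f(a_{C_1},\dots)$ and $f'(a_{C_1},\dots)$ that coincide---but the conclusion is correct for the same reason.
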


\section{Collapses in the Relational Width Hierarchy}
\label{sect:collapse}

We now introduce a construction for reducing CSPs over structures in the scope of Theorem~\ref{thm:canonicalbwidth} to CSPs over finite sets. The idea is to create from an instance another instance whose variables are subsets of variables which take orbits as  values, of which there is only a finite number thanks to oligomorphicity. This will enable us to prove Theorem~\ref{thm:canonicalbwidth}.

\begin{definition}\label{def:instance-fin}
    Let $\mathcal I=(\V,\mathcal C)$ be a CSP instance over a set $A$.
    Let $\group$ be a permutation group on $A$, let $k\geq 1$, and for every $K\in\binom{\V}{k}$, let $\mathcal{O}_{\group}^{K}$ be the set of \emph{$K$-orbits}, i.e., orbits of maps $f\colon K\to A$ under the natural action of $\group$.
    Let $\mathcal{O}_{\group,k}^{\V}:=\bigcup\limits_{K\in\binom{\V}{k}} \mathcal{O}_{\group}^{K}$ and let
    $\instance_{\group,k}$ be the following instance over $\mathcal O_{\group,k}^{\V}$:
    \begin{itemize}
    \item The variable set of $\instance_{\group,k}$ is the set $\binom{\V}{k}$ of $k$-element subsets of $\V$.
Every variable $K$ of $\instance_{\group,k}$ is meant to take a value in $\mathcal O_{\group}^{K}$.

\item For every constraint $C\subseteq A^U$ in $\mathcal I$, $\instance_{\group,k}$ contains the constraint $C_{\group,k}\subseteq \mathcal (\mathcal O_{\group,k}^{\V})^{\binom{U}{k}}$ defined by
\[ C_{\group,k} := \left\{ g\colon \binom{U}{k}\to\mathcal \mathcal O_{\group,k}^{\V} \mid \exists f\in C\;\; \forall K\in\binom{U}{k}\;\; (f|_K \in g(K))\right\}. \]
That is, the set of $k$-element subsets of $U$ is the scope of a constraint which allows precisely those assignments of orbits to these subsets which are naturally induced by the assignments allowed by $C$ for the variables in $U$. Note that for every $K\in\binom{\V}{k}$, for every constraint $C$ whose scope contains $K$ and for every $g\in C_{\group,k}$ we have $g(K)\in \mathcal{O}_{\group}^{K}$. 
\end{itemize}
\end{definition}
Observe that if $\instance$ is non-trivial, then so is $\instance_{\group,k}$.

\begin{restatable}{lemma}{akbkminimal}\label{lem:ak-bk-minimal}
Let $1\leq a\leq b$. If $\mathcal I$ is $(ak,bk)$-minimal, then $\instance_{\group,k}$ is $(a,b)$-minimal.
\end{restatable}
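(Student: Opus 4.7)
The plan is to verify directly the two conditions in the definition of $(a,b)$-minimality, translating each one to the corresponding condition for $\instance$ via the map $\mathcal{K}\mapsto\bigcup\mathcal{K}$ that sends a subset of $\binom{\V}{k}$ to its union in $\V$; the key combinatorial fact to be used is that a subset of $\binom{\V}{k}$ of size at most $j$ has union of size at most $jk$.

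For the scope condition, given $\mathcal{K}\subseteq\binom{\V}{k}$ with $|\mathcal{K}|\leq b$, I set $U:=\bigcup\mathcal{K}$, so $|U|\leq bk$. The scope condition of $(ak,bk)$-minimality of $\instance$ produces a constraint $C\in\instance$ whose scope is $U$, and then $C_{\group,k}\in\instance_{\group,k}$ has scope $\binom{U}{k}\supseteq\mathcal{K}$, which witnesses the scope condition at $\mathcal{K}$.

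For the projection condition, I take $\mathcal{K}'\subseteq\binom{\V}{k}$ with $|\mathcal{K}'|\leq a$ together with two constraints $C_{\group,k}^1,C_{\group,k}^2\in\instance_{\group,k}$ whose scopes contain $\mathcal{K}'$. By construction these scopes are of the form $\binom{U_i}{k}$ for scopes $U_i$ of constraints $C^i\in\instance$, and containment of $\mathcal{K}'$ forces $U_i\supseteq U'$, where $U':=\bigcup\mathcal{K}'$ has size at most $ak$. Unfolding the definition of $C_{\group,k}^i$, a map $g'\colon\mathcal{K}'\to\mathcal O$ lies in the projection of $C_{\group,k}^i$ onto $\mathcal{K}'$ precisely when there exists $f\in C^i$ with $f|_K\in g'(K)$ for every $K\in\mathcal{K}'$ (for $K\in\binom{U_i}{k}\setminus\mathcal{K}'$ one simply fills in the orbit of $f|_K$). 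Since each $K\in\mathcal{K}'$ is contained in $U'$, this condition depends on $f$ only through $f|_{U'}$, and hence reduces to the non-emptiness of a fixed subset of $C^i|_{U'}$ determined only by $g'$. The projection condition of $(ak,bk)$-minimality of $\instance$, applied to $U'$ and the constraints $C^1,C^2$, gives $C^1|_{U'}=C^2|_{U'}$, so the projections of $C_{\group,k}^1$ and $C_{\group,k}^2$ onto $\mathcal{K}'$ agree.

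I do not expect any real obstacle beyond careful unfolding of definitions; the only delicate point is to observe that the projection of $C_{\group,k}^i$ onto $\mathcal{K}'$ is entirely controlled by $C^i|_{U'}$, which is precisely what allows the projection condition for $\instance$ to transfer cleanly to $\instance_{\group,k}$.
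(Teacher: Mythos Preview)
Your proof is correct and takes essentially the same approach as the paper's: both verify the two conditions of $(a,b)$-minimality by passing from a set $\mathcal K\subseteq\binom{\V}{k}$ to its union $\bigcup\mathcal K\subseteq\V$ and invoking the corresponding condition of $(ak,bk)$-minimality of $\instance$. Your abstract characterization of the projection of $C_{\group,k}^i$ onto $\mathcal{K}'$ as depending only on $C^i|_{U'}$ is simply a repackaging of the paper's argument, which instead chases a witness $f\in C$ and transfers it to $f'\in D$ via $C|_{\bigcup K}=D|_{\bigcup K}$.
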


\begin{proof}
	Let $b'\leq b$, and let $K_1,\dots,K_{b'}\in\binom{\V}{k}$. Note that $U:=\bigcup_i K_i$ has size at most $bk$, and therefore there exists a set $W$ with $U\subseteq W\subseteq \V$ and a constraint $C\subseteq A^W$ in $\mathcal I$ since $\mathcal I$ is $(ak,bk)$-minimal.
	The scope of the associated constraint $C_{\group,k}$ is $\binom{W}{k}$, which contains $K_1,\dots,K_{b'}$. Hence, the first item of Definition~\ref{def:minimality} is satisfied.
	
	For the second item, let $a'\leq a$, let $K:=\{K_1,\dots,K_{a'}\}$, and let $C_{\group,k}\subseteq (\mathcal O_{\group,k}^{\V})^{\binom{U}{k}}, D_{\group,k}\subseteq (\mathcal O_{\group,k}^{\V})^{\binom{W}{k}}$ be two constraints whose scopes contain $K$.
	Then $\bigcup K$ is contained in the scope of the associated $C\subseteq A^U$ and $D\subseteq A^W$ and has size at most $ak$, 
	so that by $(ak,bk)$-minimality of $\mathcal I$, the projections of $C$ and $D$ onto $\bigcup K$ coincide.
	Thus for every $g\in C_{\group,k}$, there exists by definition an $f\in C$ such that $f|_{K_i}\in g(K_i)$ for all $i$, and by the previous sentence  there exists  $f'\in D$ such that $f'|_{K_i}\in g(K_i)$ for all $i$.
	Thus, $g|_K$ is in the projection of $D_{\group,k}$ to $K$.
	The argument is symmetric, showing that the projections of $C_{\group,k}$ and $D_{\group,k}$ to $K$ coincide.
\end{proof}

Note that for every solution $h$ of $\mathcal I$, the map $\chi_h\colon \binom{\V}{k}\to\mathcal O_{\group,k}^{\V}$ defined by $K\mapsto \{\alpha h|_K\mid \alpha\in\group\}$ defines a solution to $\instance_{\group,k}$.
The next lemma proves that every solution to $\instance_{\group,k}$ is of the form $\chi_h$ for some solution $h$ of $\mathcal I$, provided that $\mathcal I$ is $(k,\ell)$-minimal  and that $\group=\Aut(\rel B)$ for some $k$-homogeneous $\ell$-bounded structure $\rel B$.
\begin{restatable}{lemma}{liftingsolution}\label{lem:lifting-solution}
	Let $1\leq k<\ell$ if $k>1$, and if $k=1$, suppose that $\ell\geq 1$.
	Let $\rel B$ be a $k$-homogeneous $\ell$-bounded structure, let $\rel A$ be a first-order reduct of $\rel B$, and let $\mathcal I$ be a $(k,\ell)$-minimal instance of $\Csp(\rel A)$.
	Then every solution to $\instance_{\Aut(\rel B),k}$ lifts to a solution of $\mathcal I$, i.e., it is of the form $\chi_h$ for some solution $h$ of $\mathcal I$.
\end{restatable}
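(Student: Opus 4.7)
The plan is to construct a solution $h\colon\V\to B$ of $\mathcal I$ by obtaining local realisers from the given solution $g$ on each $\ell$-subset of $\V$ and gluing them globally using $\ell$-boundedness of $\rel B$. First, $(k,\ell)$-minimality guarantees that every $\ell$-subset $L\subseteq\V$ is the scope of some constraint $C_L\subseteq A^L$ of $\mathcal I$; since $g|_{\binom{L}{k}}$ satisfies $(C_L)_{\Aut(\rel B),k}$, there exists $f_L\in C_L$ such that $f_L|_K\in g(K)$ for each $K\in\binom{L}{k}$. To avoid issues arising from possibly non-injective $f_L$'s, I would first quotient $\V$ by the equivalence $u\sim v$ iff $g$ forces $f(u)=f(v)$ for one (equivalently, every) $k$-subset $K$ containing $u$ and $v$; this equivalence is well-defined and solutions of $\mathcal I$ factor through the quotient, so we may assume every $f_L$ is injective.

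I would then invoke $k$-homogeneity to show that for any two $\ell$-subsets $L_1,L_2$, the maps $f_{L_1}|_{L_1\cap L_2}$ and $f_{L_2}|_{L_1\cap L_2}$ lie in the same $\Aut(\rel B)$-orbit, since both have $k$-subtuple orbits prescribed by $g$. This consistency lets me define a structure $\rel X$ with domain $\V$ in the signature of $\rel B$ (suitably enriched so that $\Aut(\rel B)$-orbits of $k$-tuples become atomic relations, which preserves $\ell$-boundedness) by declaring, for each relation $R$ and each tuple $v$ of variables contained in some $\ell$-subset $L$, that $R^{\rel X}(v)$ holds iff $R^{\rel B}(f_L(v))$ does; the choice of $L$ is irrelevant by the preceding observation. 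By construction each $\rel X|_L$ embeds into $\rel B$ via $f_L$, so $\ell$-boundedness of $\rel B$ yields an embedding $h\colon\rel X\hookrightarrow\rel B$. For every $K\in\binom{\V}{k}$ the atomic relation of $\rel X|_K$ corresponding to the orbit $g(K)$ is present, whence $h|_K\in g(K)$.

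Finally, to verify that $h$ satisfies an arbitrary constraint $C\subseteq A^U$ of $\mathcal I$, I would extract from $g|_{\binom{U}{k}}\in C_{\Aut(\rel B),k}$ a witness $f\in C$ whose $k$-subtuple orbits agree with $g$. By $k$-homogeneity, $h|_U$ and $f$ lie in the same $\Aut(\rel B)$-orbit, and since $C$ is first-order definable in $\rel B$ and hence $\Aut(\rel B)$-invariant, one obtains $h|_U\in C$. The main obstacle I anticipate is the bookkeeping required to reconcile the abstract, orbit-based formulation of $k$-homogeneity with the structural notion of embedding required by $\ell$-boundedness, in particular the injectivity issue and the signature enrichment mentioned above; once these technicalities are handled, the proof itself is a clean iteration of $k$-homogeneity and $\ell$-boundedness.
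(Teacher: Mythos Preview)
Your approach is essentially the same as the paper's: quotient $\V$ by the forced-equality equivalence, define a $\rel B$-structure on the quotient from the orbit data supplied by the solution, verify that each $\ell$-substructure embeds into $\rel B$ via a local witness, and invoke $\ell$-boundedness for a global embedding. The paper differs only cosmetically---it builds the structure $\rel C$ directly from the atomic types prescribed by $\tilde h(K)$ rather than by pullback along your $f_L$, and it declares the final verification ``trivial'' where you (more carefully) spell out a second application of $k$-homogeneity together with $\Aut(\rel B)$-invariance of the constraints.
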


\begin{proof}
	Let $h\colon \binom{\V}{k}\to\mathcal O_{\group,k}^{\V}$ be a solution to $\instance_{\Aut(\rel B),k}$.
	Recall that $h(K)$ is a $K$-orbit  for any $K\in\binom{\V}{k}$, and one can therefore restrict $h(K)$ to any  $L\subseteq K$ by setting  $h(K)|_L:=\{f|_L\mid f\in h(K)\}$.
	Note that since $\mathcal I$ is $k$-minimal, we have $h(K)|_{K\cap K'}=h(K')|_{K\cap K'}$ for all  $K,K'\in\binom{\V}{k}$.
	
	We now define an equivalence relation $\sim$ on $\V$.
	Suppose first that $k=1$.
	Then every orbit of elements of $\rel B$ under the action of $\Aut(\rel B)$ must be a singleton 
    (for any orbit with two distinct elements $a,b$, the pairs $(a,a)$ and $(a,b)$ would not be in the same orbit but their subtuples of length one would be, so that $\rel B$ would not be $1$-homogeneous).
	In that case, we identify $\mathcal O_{\group,k}^{\V}$ with the domain $B$ itself, and  set $x\sim y$ if and only if  $h(\{x\})=h(\{y\})$; that is,  $\sim$ is essentially the kernel of $h$.
	
	Suppose next that $k\geq 2$, and set $x\sim y$ if and only if there is $K\in\binom{\V}{k}$ containing $x,y$ such that $h(K)|_{\{x,y\}}$ consists of constant maps.
	One could equivalently ask that this holds for \emph{all} $K$ containing $x,y$ by $2$-minimality, and it then follows that this is indeed an equivalence relation by $(2,3)$-minimality of $\mathcal I$.
	Moreover, $h$ descends to $\binom{\V/{\sim}}{k}$: if $K'=\{[v_1]_{\sim},\dots,[v_k]_{\sim}\}$ is a $k$-element set, define $\tilde h(K'):=h(\{v_1,\dots,v_k\})$.
	The definition of $\tilde h$ does not depend on the choice of representatives, by the very definition of $\sim$.
	
	Define a finite structure $\rel C$ with domain $\V/{\sim}$ in the signature of $\rel B$ as follows.	
    Let $\V/{\sim}=\{[v_1]_{\sim},\ldots,[v_n]_{\sim}\}$. We define $\rel C$ such that the relations holding on the tuple $([v_1]_{\sim},\ldots,[v_n]_{\sim})$ in $\rel C$ are the same as the relations holding on an arbitrary tuple $(b_1,\ldots,b_n)\in B^n$ that satisfies the following. For every $m\leq k$ and for all $[v_{i_1}]_{\sim}, \ldots, [v_{i_{m}}]_{\sim}$ pairwise different, the atomic types of $(b_{i_1},\ldots,b_{i_m})$ and $(g([v_{i_1}]_{\sim}),\dots,g([v_{i_m}]_{\sim}))$ agree for all $g\in \tilde h(K)$ and $K\supseteq \{v_{i_1},\ldots,v_{i_m}\}$.
    This construction does not depend on the choice of the tuple $(b_1,\ldots,b_n)$ by the $k$-homogeneity of $\rel B$ and is well-defined by the consistency of the assignment given by $h$.
	
	Finally, note that all substructures of $\rel C$ of size at most $\ell$ embed into $\rel B$.
	Indeed, let $m\leq \ell$ and let $\rel L$ be an $m$-element substructure of $\rel C$, and let $L'\subseteq V$ be an $m$-element set containing one representative for each element of $\rel L$.
	By $(k,\ell)$-minimality of $\mathcal I$, there exists $C\subseteq A^{L'}$ in $\mathcal I$, and a corresponding constraint $C_{\Aut(\rel B),k}$ of $\instance_{\Aut(\rel B),k}$.
	Thus, $h|_{\binom{L'}{k}}\in C_{\Aut(\rel B),k}$, so that there exists $g\in C$ such that for all $K\in\binom{L'}{k}$, $g|_K\in h(K)$.
	Thus $g$ corresponds to an embedding of every $k$-element substructure of $\rel L$ into $\rel B$, and since $\rel B$ is $k$-homogeneous, $g$ is an embedding of $\rel L$ into $\rel B$.
	Finally, since $\rel B$ is $\ell$-bounded, it follows that there exists an embedding $e$ of $\rel C$ into $\rel B$.
	
	It remains to check that the composition of $e$ with the canonical projection $\V\to\ \V/{\sim}$ is a solution to $\mathcal I$, which is trivial since the relations of $\rel A$ are unions of orbits under $\Aut(\rel B)$.
\end{proof}

For every finite set $\V$ and for every $K\in\binom{\V}{k}$, every operation $f$ that is canonical with respect to a permutation group $\group$ induces an operation on the set of orbits of $K$-tuples under $\group$.
We denote this operation by $f_{\group}^{K}$. Finally, we denote by $f_{\group,k}^{\V}$ the union of $f_{\group}^{K}$ for all $K\in\binom{\V}{k}$ and we call it a \emph{multisorted} operation, i.e., this operation is defined only on tuples where all elements belong to the same $\mathcal{O}_{\group}^{K}$ for some fixed $K\in\binom{\V}{k}$. We say that a multisorted operation $f_{\group,k}^{\V}$ is a \emph{multisorted WNU} if $f_{\group}^{K}$ satisfies the WNU identities for every $K\in \binom{\V}{k}$. Similarly, we define multisorted totally symmetric operations. We remark that we could have avoided the use of multisorted functions by using $k$-tuples of elements of $\V$ instead of $k$-element subsets of $\V$ in Definition~\ref{def:instance-fin}; however, encoding the original instance this way  would have added considerable  redundancy and made the proof more technical.

Note that in the above situation, for every constraint $C$ of an instance $\instance=(\V,\mathcal{C})$  it makes formally sense to ask whether $f_{\group,k}^{\V}$ preserves $C_{\group,k}$. Indeed, let $f$ be $n$-ary. Whenever $K$ is a $k$-element subset of the scope of $C$, and hence a variable in the scope of $C_{\group,k}$, then for all $g_1,\ldots,g_n\in C_{\group,k}$ we have $g_1(K),\ldots,g_n(K)\in \mathcal{O}_{\group}^{K}$; hence $f_{\group,k}^{\V}$ can be applied to these values, and doing so for all variables in the scope of $C_{\group,k}$ altogether yields a function from this scope to $\mathcal O_{\group,k}^{\V}$ which can be an element of $C_{\group,k}$ or not.

\begin{restatable}{lemma}{compatiblecanonical}\label{lem:compatible-canonical-polymorphisms}
    Let $f$ be a polymorphism of $\rel A$ that is canonical with respect to $\group$.
    Every constraint in 
    $\instance_{\group,k}$ is preserved by $f_{\group,k}^{\V}$.
\end{restatable}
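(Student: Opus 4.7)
The plan is to unwind the definitions and use the polymorphism property of $f$ for $\rel A$ together with the canonicity of $f$ with respect to $\group$. Fix a constraint $C \subseteq A^U$ of $\instance$ and let $m$ be the arity of $f$. Suppose $g_1,\dots,g_m \in C_{\group,k}$; I will produce a witness $h \in C$ showing $f_{\group,k}(g_1,\dots,g_m) \in C_{\group,k}$.

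\textbf{Step 1: Obtain witnesses.} By the definition of $C_{\group,k}$, for each $i \in \{1,\dots,m\}$ there exists $h_i \in C$ such that $h_i|_K \in g_i(K)$ for every $K \in \binom{U}{k}$. Apply $f$ componentwise to obtain $h := f(h_1,\dots,h_m) \in A^U$. Since $f$ is a polymorphism of $\rel A$ and $C$ is the solution set (as $U$-tuples) of a relation of $\rel A$ applied to the ordered scope, $h \in C$.

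\textbf{Step 2: Compare restrictions via canonicity.} Fix $K \in \binom{U}{k}$ with an arbitrary enumeration, so that $K$-orbits are identified with $\group$-orbits of $k$-tuples in $A^k$. Then $h|_K = f(h_1|_K,\dots,h_m|_K)$ holds pointwise. Since $f$ is canonical with respect to $\group$, the $K$-orbit of $f(a_1,\dots,a_m)$ depends only on the $K$-orbits of $a_1,\dots,a_m$; this is precisely how $f_{\group,k}$ was defined. Consequently the $K$-orbit of $h|_K$ equals $f_{\group,k}(g_1(K),\dots,g_m(K)) = f_{\group,k}(g_1,\dots,g_m)(K)$, which is what we had to verify. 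Because $K$ was arbitrary, $h$ witnesses that $f_{\group,k}(g_1,\dots,g_m) \in C_{\group,k}$.

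\textbf{Obstacle.} There is no real obstacle; the statement is essentially bookkeeping once canonicity is phrased in terms of orbits of tuples. The only mild care needed is to check that the identification of $K$-orbits with $\group$-orbits of $k$-tuples (which requires fixing an ordering of $K$) does not affect the argument: different orderings yield different but equivalent formulations of $f_{\group,k}$, and the equality ``$h|_K$ lies in $f_{\group,k}(g_1(K),\dots,g_m(K))$'' is invariant under such reorderings because it just asserts a coincidence of orbits.
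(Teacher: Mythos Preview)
Your proof is correct and follows essentially the same approach as the paper: pick witnesses $h_i\in C$ for each $g_i$, apply $f$ to obtain $h=f(h_1,\dots,h_m)\in C$, and use canonicity together with the identity $h|_K=f(h_1|_K,\dots,h_m|_K)$ to conclude that $h|_K$ lies in $f_{\group,k}(g_1(K),\dots,g_m(K))$ for every $K$. The only cosmetic difference is notation (your $h_i$ are the paper's $g'_i$), and your extra remark about the irrelevance of the ordering of $K$ is a harmless elaboration.
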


\begin{proof}
    Let $n$ be the arity of $f$ and let $C\subseteq A^U$ be a constraint in $\mathcal I$.
    In particular, since $\mathcal I$ is an instance of $\Csp(\rel A)$, and $f$ is a polymorphism of $\rel A$, we have that $C$ is preserved by $f$.
    
    Let $C_{\group,k}$ be the constraint in $\instance_{\group,k}$ given by $C$, and let $g_1,\dots,g_n\in C_{\group,k}$.
    By definition, for every $i\in\{1,\dots,n\}$ there is $g'_i\in C$ such that for all $K\in\binom{U}{k}$, $g'_i|_K\in g_i(K)$.
    Note that $f(g'_1|_K,\dots,g'_n|_K)=f(g'_1,\dots,g'_n)|_K$,
    so that $f(g'_1,\dots,g'_n)|_K \in f^{\V}_{\group,k}(g_1(K),\dots,g_n(K))$.
    Since $f(g'_1,\dots,g'_n)\in C$, it follows that $f^{\V}_{\group,k}(g_1,\dots,g_n)$ is in $C_{\group,k}$.
\end{proof}

The following corollary of Theorem~\ref{thm:characterization-bwidth} and Lemma~\ref{lem:compatible-canonical-polymorphisms} shows that if $\rel A$ has as polymorphisms canonical multisorted WNU operations of all arities $m\geq 3$, and if $\instance_{\group,k}$ is non-trivial and $(2,3)$-minimal, then it has a a solution.

\begin{corollary}\label{cor:multisorted_WNUs}
Suppose that $|\V|\geq k$, and that $\rel A$ has
for every $m\geq 3$ a polymorphism $f_m$ that is canonical with respect to $\group$ and such that $(f_m)_{\group,k}^{\V}$ is a multisorted WNU operation. If $\instance_{\group,k}$ is non-trivial and $(2,3)$-minimal, then it has a solution.
\end{corollary}

\begin{proof}
    Let $\rel D$ be the finite structure with domain $\mathcal O_{\group,k}^{\V}$ which contains for every constraint $C_{\group,k}$ in $\instance_{\group,k}$ the corresponding relation. More precisely, for every constraint $C_{\group,k}$ with $C\subseteq A^U$ for some $U\subseteq \V$, $\rel D$ contains the relation $R_C:=\{(g(K))_{K\in\binom{U}{k}}\mid g\in C_{\group,k}\}$.

    Let $m\geq 3$ be arbitrary. We will now prove that $\Pol(\rel D)$ contains a WNU operation of arity $m$. Since $|\V|\geq k$, there exists $K'\in\binom{\V}{k}$; let us fix such $K'$, and let $O\in\mathcal O^K_{\group}$ be arbitrary. Let $f:=(f_m)_{\group,k}^{\V}$, and let us define an $m$-ary function $f'\colon (\mathcal O_{\group,k}^{\V})^m\rightarrow \mathcal O_{\group,k}^{\V}$ as follows. We define $f'$ to be equal to $f$ everywhere, where $f$ is defined, and we define $f'(O_1,\dots,O_m):=O$ for every $(O_1,\dots,O_m)\in (\mathcal O_{\group,k}^{\V})^m$, where $f$ is not defined. Note that $f'$ is a WNU operation, and since $f$ preserves all constraints of $\instance_{\group,k}$ by Lemma~\ref{lem:compatible-canonical-polymorphisms}, so does $f'$. It follows that $f'$ preserves all relations $R_C$ of $\rel D$, whence it is a polymorphism of $\rel D$.
    
    Finally, the implication from item (\ref{itm:wnu}) to item (\ref{itm:width-2-3}) in Theorem~\ref{thm:characterization-bwidth} yields that $\rel D$ has relational width $(2,3)$. Since $\instance_{\group,k}$ is an instance of $\Csp(\rel D)$ by definition, and since it is non-trivial and $(2,3)$-minimal by the assumption, $\instance_{\group,k}$ has a solution.
\end{proof}

In order to prove the second item of Theorem~\ref{thm:canonicalbwidth}, we need also the following version of Corollary~\ref{cor:multisorted_WNUs}.

\begin{corollary}\label{cor:multisorted_symmetric}
Suppose that $|\V|\geq k$, and that $\rel A$ has for every $m\geq 1$ a polymorphism $f_m$ that is canonical with respect to $\group$ and such that $(f_m)_{\group,k}^{\V}$ is a multisorted totally symmetric operation. If $\instance_{\group,k}$ is non-trivial and $(1,1)$-minimal, then it has a solution.
\end{corollary}

\begin{proof}
    The proof is identical to the proof of Corollary~\ref{cor:multisorted_WNUs}, using Theorem~\ref{thm:ts-finite} instead of Theorem~\ref{thm:characterization-bwidth}.
\end{proof}

Finally, this allows us to prove Theorem~\ref{thm:canonicalbwidth} from the introduction.

\canonicalbwidth*

\begin{proof}
    Suppose that the assumption of the first item of Theorem~\ref{thm:canonicalbwidth} is satisfied. 
    Let $\instance$ be a non-trivial $(2k,\max(3k,\ell))$-minimal instance of $\Csp(\rel A)$, and let $\instance_{\Aut(\rel B),k}$ be the associated instance from Definition~\ref{def:instance-fin}. Note that $\instance_{\Aut(\rel B),k}$ is a $\Csp$ instance over a finite set by the $\omega$-categoricity of $\rel B$. Note moreover that any $(2k,\max(3k,\ell))$-minimal non-trivial instance with less than $k$ variables admits a solution. Hence, we may assume that $\instance$ has at least $k$ variables.
    By Lemma~\ref{lem:ak-bk-minimal}, $\instance_{\Aut(\rel B),k}$ is a $(2,3)$-minimal instance, and it is non-trivial by definition and since $\instance$ has at least $k$ variables. 
    Corollary~\ref{cor:multisorted_WNUs} yields that $\instance_{\Aut(\rel B),k}$ admits a solution. Since $\mathcal I$ is $(2k,\max(3k,\ell))$-minimal, it is also  $(k,\ell)$-minimal, and hence this solution lifts to a solution of $\mathcal I$ by Lemma~\ref{lem:lifting-solution}.
    Thus, $\rel A$ has relational width at most $(2k,\max(3k,\ell))$.
    
    Suppose now that the assumption in the second item is satisfied. If $k=1$, then the orbits of elements under $\Aut(\rel B)$ are singletons, whence $\rel B$ is a finite structure and the result follows immediately from Theorem~\ref{thm:ts-finite}. Otherwise, by the same reasoning as above but using Corollary~\ref{cor:multisorted_symmetric} instead of Corollary~\ref{cor:multisorted_WNUs}, given a $(k,\max(k,\ell))$-minimal instance $\instance$, the associated instance $\instance_{\Aut(\rel B),k}$ is $(1,1)$-minimal and therefore has a solution.
    Since $\instance$ is $(k,\max(k,\ell))$-minimal, and if $k=2$, then it is also $(2,3)$-minimal, this solution lifts to a solution of $\instance$ by Lemma~\ref{lem:lifting-solution}. 
\end{proof}

\section{A New Loop Lemma for Smooth Approximations}\label{sect:looplemma}

We refine the algebraic theory of smooth approximations from~\cite{SmoothApproximations}. Building on deep algebraic results from~\cite{BartoKozikCyclic} on   finite  idempotent algebras  that are equationally non-trivial, we lift some of the theory from binary symmetric  relations to cyclic relations of arbitrary arity. 

\subsection{The loop lemma}

\begin{definition}
Let $R$ be a relation of arity $m\geq 1$, let $1\leq n\leq m$,  and let $i_1,\ldots,i_n$ be pairwise distinct 
elements from $\{1,\ldots,m\}$. We denote by $\proj{(i_1,\ldots,i_n)}{R}$ the projection of $R$ to the coordinates $(i_1,\ldots,i_n)$, i.e., $\proj{(i_1,\ldots,i_n)}{R}=\{(b_{i_1},\ldots,b_{i_n})\mid (b_1,\ldots,b_{m})\in R\}$.

The \emph{linkedness congruence} 
of a binary  relation $R\subseteq A\times B$ is the equivalence relation $\lambda_R$ on $\proj{(2)}{R}$  defined by $(b,b')\in \lambda_R$ if there are $k\geq 0$ and  $a_0,\dots,a_{k-1}\in A$ and $b=b_0,\dots,b_k=b'\in B$ such that $(a_i,b_i)\in R$ and $(a_i,b_{i+1})\in R$ for all $i\in\{0,\dots,k-1\}$. We say that $R$ is \emph{linked} if it is non-empty and  $\lambda_R$ relates any two elements of $\proj{(2)}{R}$. If $R$ is of arity greater than two, then we define linkedness as before   viewing $R$ as a binary relation   between $\proj{(1,\ldots,m-1)}{R}$ and $\proj{(m)}{R}$.

If $A$ is a set and $m\geq 2$, then we call a relation $R\subseteq A^m$  \emph{cyclic} if it is invariant under cyclic permutations of the components of its tuples, i.e., for every cyclic permutation $\sigma$ of $(1,\dots,m)$, and for every $(a_1,\dots,a_m)\in R$, it holds that $(a_{\sigma(1)},\dots,a_{\sigma(m)})\in R$. The \emph{support} of a relation $R$ is the set of all elements of $A$ which appear in some tuple of $R$. Note that if $R$ is cyclic, then its support equals  the projection of $R$ to any of its arguments.
\end{definition}

If $R$ is invariant under an oligomorphic group action on $A\times B$, 
then there is an upper bound on the length $k$  to witness  $(b,b')\in\lambda_R$, and therefore $\lambda_R$ is pp-definable from $R$; in particular, it is invariant under any function clone acting on $A\times B$ and preserving $R$. 
\begin{definition}
Let $\gG$ be a permutation group on a set $A$. A \emph{pseudo-loop with respect to $\gG$} is a tuple of elements of $A$ all of whose components belong to the same $\gG$-orbit~\cite{Pseudo-loop,BartoPinskerDichotomy,Topo}. If $\gG$ contains only the identity function, then a pseudo-loop is called a \emph{loop}.
\end{definition}

Our next goal is to prove that any cyclic linked relation on a finite domain which is invariant under an equationally non-trivial idempotent function clone contains a loop. This result will be used to obtain a generalization of the second loop lemma of smooth approximations~\cite[Theorem~11]{SmoothApproximations}. In order to refer to  results on finite algebras more easily, we will use the language of algebras rather than clones in the proof. We need the following definitions.

\begin{definition}
Let $\aA$ be an algebra and let $B$ be a subuniverse of $\aA$, i.e., a subset of the universe of $\aA$ closed under all operations of $\aA$. We say that $B$ is an \emph{absorbing subuniverse} of $\aA$ (or $B$ absorbs $\aA$) if there exists a term operation $f$ of arity $n\geq 2$ such that for any $j\in\{1,\ldots,n\}$ and for any $(a_1,\ldots,a_n)\in A^n$ with $a_i\in B$ for all $i\neq j$, $f(a_1,\ldots,a_n)\in B$.

If $B$ is an absorbing subuniverse of $\aA$ and no proper subuniverse of $B$ absorbs $\aA$, we call $B$ a \emph{minimal absorbing subuniverse} of $\aA$ and write $B\minabs \aA$.
\end{definition}

We will use the following results from~\cite{BartoKozikCyclic}.

\begin{proposition}[Proposition 2.15 in \cite{BartoKozikCyclic}]\label{prop:absorbing}
Let $\aA_1,\aA_2$ be algebras on finite domains whose polymorphism clones are equationally non-trivial, and let $\aR$ be a subalgebra of $\aA_1 \times \aA_2$ with $\proj{(1)}{R}=A_1,\proj{(2)}{R}=A_2$. Then both of the following hold.
\begin{enumerate}
    \item If $R$ is linked, $\aB_1\minabs \aA_1$, $\aB_2\minabs \aA_2$, and $(B_1\times B_2)\cap R\neq \emptyset$, then $\aB_1\times \aB_2\minabs R$.
    \item If $R$ is linked, $\aB_1\minabs \aA_1$ or $\aB_1\minabs \aA_2$, $\aB_2\minabs \aA_1$ or $\aB_2\minabs \aA_2$, $b_1\in\aB_1$, and $b_2\in\aB_2$, then $b_1$ and $b_2$ can be linked via $a_0,\dots,a_k$ for some $k\geq 1$, where $a_i$ is contained in a minimal absorbing subalgebra of $\aA_1$ or $\aA_2$ for every $i\in\{0,\dots,k\}$.
\end{enumerate}
\end{proposition}

We say that a digraph has \emph{algebraic length 1} if it contains a path from some element to itself  whose numbers of forward arcs and backward arcs differ by 1.

\begin{theorem}[Theorem 3.6 in \cite{BartoKozikCyclic}]\label{thm:loop}
    Let $\aA$ be an algebra on finite domain with equationally non-trivial polymorphism clone, and let $\rel G = (A; H)$ be a digraph of algebraic length 1 such that every vertex has an outgoing and an incoming edge and such that $H$ is a subuniverse of $\aA^2$. Then $\rel G$ contains a loop. Moreover, if there exists an absorbing subuniverse $I$ of $\aA$ which is contained in a weak component of $\rel G$ of algebraic length 1, then the loop can be found in a minimal absorbing subuniverse of $\aA$.
\end{theorem}

\begin{restatable}[Consequence of the proof of Theorem~4.2 in \cite{BartoKozikCyclic}]{theorem}{cyclicterm}\label{thm:cyclic-term}
    Let $\cC$ be an idempotent function clone on a finite domain  that is  equationally non-trivial.
    Then any 
     $\cC$-invariant cyclic linked  relation on its domain contains a loop.
\end{restatable}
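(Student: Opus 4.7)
The plan is to follow the cyclic-term averaging argument of Barto and Kozik in the proof of \cite[Theorem 4.2]{BartoKozikCyclic}, adapting it to cyclic relations of arbitrary arity $m \geq 2$. The key algebraic input is the main theorem of that paper: a finite idempotent equationally non-trivial clone $\cC$ contains, for every prime $p > |C|$, a $p$-ary cyclic operation $c_p$, i.e., one satisfying $c_p(x_1, \ldots, x_p) = c_p(x_2, \ldots, x_p, x_1)$.

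The core of the proof is the following averaging step. Call a sequence $t_0, \ldots, t_{p-1} \in C$ a \emph{closed $R$-walk of length $p$} if $(t_k, t_{k+1}, \ldots, t_{k+m-1}) \in R$ for every $k$, with indices taken modulo $p$. Given such a walk with $p$ prime and $p > |C|$, I would define the $p$ tuples $s^{(k)} := (t_k, t_{k+1 \bmod p}, \ldots, t_{k+m-1 \bmod p}) \in R$ and apply $c_p$ componentwise to $s^{(0)}, \ldots, s^{(p-1)}$. The result lies in $R$ by $\cC$-invariance, and its component at position $j$ equals $c_p(t_j, t_{j+1}, \ldots, t_{j+p-1})$ with indices modulo $p$. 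By the cyclicity of $c_p$ combined with the wraparound of the indices, this value is independent of $j$; hence the output is a loop $(v, \ldots, v) \in R$.

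It therefore suffices to exhibit a closed $R$-walk of some prime length $p > |C|$, which is where the linkedness of $R$ is used and where the main obstacle lies. Linkedness and finiteness of $C$ imply that the set $W$ of lengths of closed $R$-walks based at a fixed element of the support is a non-empty subsemigroup of the positive integers. If $\gcd(W) = 1$, the numerical semigroup theorem gives that $W$ contains every sufficiently large integer and in particular some prime $p > |C|$, so we are done. The delicate case is $\gcd(W) = d > 1$: here, following \cite{BartoKozikCyclic}, the $d$-periodic structure on the support induces a proper $\cC$-invariant partition of the support, on which one either factors out the obstruction or derives a contradiction with the equational non-triviality of $\cC$ (e.g.\ by producing a clone homomorphism to $\Projs$). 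This combinatorial analysis is essentially arity-independent, so the passage from the binary symmetric setting of \cite{BartoKozikCyclic} to cyclic $m$-ary relations with $m \geq 2$ requires only notational adjustments on top of the averaging step above.
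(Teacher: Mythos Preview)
Your averaging step is correct and elegant: given a closed $R$-walk $t_0,\ldots,t_{p-1}$ (indices mod $p$) with $p$ prime and $p>|C|$, applying a $p$-ary cyclic term componentwise to the tuples $s^{(k)}=(t_k,\ldots,t_{k+m-1})\in R$ yields a loop. The gap is in producing such a walk. Your closed $R$-walks are precisely directed closed walks in the shift digraph $H$ on $R_{m-1}$ (where $(a_1,\ldots,a_{m-1})\to(a_2,\ldots,a_m)$ iff $(a_1,\ldots,a_m)\in R$), so you need $\gcd(W)=1$. For $m=2$ this follows from linkedness, since a linked symmetric binary relation is connected and non-bipartite. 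For $m\geq 3$, however, linkedness is a condition on the bipartite relation between $R_{m-1}$ and the support $A$, not on $H$, and no argument is given that it forces $\gcd(W)=1$. Your treatment of the case $\gcd(W)=d>1$ does not work: the $d$-periodic structure is a partition of $R_{m-1}\subseteq C^{m-1}$, not of the support $A\subseteq C$, and there is no evident way to push it down; even if one could, a proper $\cC$-invariant congruence does not contradict equational non-triviality (non-simple Taylor algebras abound) and certainly does not by itself yield a clone homomorphism to $\Projs$. The claim that this ``requires only notational adjustments'' from the binary case is thus not substantiated.

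The paper does not use cyclic terms as a black box; it re-runs the absorption machinery underlying their proof. By induction on $i\leq m$ it shows that some $I\minabs\alg A$ satisfies $I^i\minabs\alg R_i$, and that $R_i\cap(I_1\times\cdots\times I_i)\minabs\alg R_i$ whenever this intersection is non-empty and each $I_j\minabs\alg A$. The induction step does pass through the shift digraph $H$, but absorption is what bridges linkedness and the needed cycle structure: one restricts attention to the connected component of $I^i$ in $H$ and shows, via linkedness of the binary projection $R_{(1,i+1)}$ together with absorption, that this component has algebraic length~$1$; the loop lemma for such digraphs (Theorem~3.6 in~\cite{BartoKozikCyclic}) then produces a loop inside a minimal absorbing subuniverse, which feeds back into the induction. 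So the real content --- getting from linkedness to ``algebraic length~$1$'' --- is carried by absorption, not by a combinatorial $\gcd$ argument.
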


\begin{proof}
Let $R$ as in the statement be given, and denote its arity by $m$; we may assume $m\geq 2$. Given $1\leq i\leq m$, we set $R_i:=\proj{(1,\ldots,i)}{R}$; moreover, we set $R_{(i,j)}:=\proj{(i,j)}{R}$ for all distinct $i,j$ with $1\leq i,j\leq m$.  

We denote the support of $R$ by  $A$. Note that for all $i\in\{1,\dots,m-1\}$ we have that $R_{i+1}$ is linked  when viewed as a binary relation between $R_i$ and $A$. We give the short argument showing  linkedness for the convenience of the reader. Let $a,b\in A$ be arbitrary; since they are linked in $R_m$, there exist  $k\geq 1$ and  $a=c_0,\ldots,c_{2k}=b$ such that $(c_{2j+1},c_{2j})\in R_m$ and $(c_{2j+1},c_{2j+2})\in R_m$
for all $0\leq j<k$. For all such $j$, we have $c_{2j+1}\in R_{m-1}$; moreover,  $\proj{(\ell,\ldots,m-1)}{c_{2j+1}}\in R_{m-\ell}$ for all $1\leq \ell\leq m-1$ by the  cyclicity of $R$. Hence, these elements prove   linkedness of $a,b$ in $R_{i+1}$ for all $1\leq i\leq m-1$.

As a consequence, 
$R_{(1,i)}$ is linked for all $i\in\{2,\dots,m\}$. Indeed, for arbitrary $a,b\in A$, if  $a=c_0,\ldots,c_{2k}=b$ witness the linkedness of $a,b$ in $R_i$, then  $c_0,\proj{(1)}{c_1},c_2,\dots,\proj{(1)}{c_{2k-1}},c_{2k}$ witness the linkedness of $a,b$ in $R_{(1,i)}$.

Let $\alg A$ be the algebra whose domain is the support $A$ of $R$ and whose fundamental operations are the restrictions of the functions of the clone  $\cC$ to that domain, i.e., $\alg A$ is the algebra $(A;\{f|_A\mid f\in \cC\}$); note that this is well-defined since $A$, as the projection of $R$ to any coordinate, is invariant under the functions of $\cC$. Let $\alg R$ be the subalgebra of $\alg A^m$ with domain $R$. Similarly, we write $\aR_i$ for the subalgebra of $\aA^i$ on the domain $R_i$, for all $1\leq i\leq m$. 
Following the proof of Theorem 4.2 in~\cite{BartoKozikCyclic}, we prove by induction on $i\in\{1,\dots,m\}$ that the following properties hold:
\begin{enumerate}
    \item There exists $I\minabs\alg A$ such that $ I^i\minabs \alg R_i$.
    \item For all $ I_1,\dots, I_i\minabs\alg A$ such that $R_i\cap (I_1\times\cdots\times I_i)\neq \emptyset$,
    we have $ R_i\cap ( I_1\times\cdots\times  I_i)\minabs \alg R_i$.
\end{enumerate}
    
The case $i=1$ is trivial, since $R_1=A$. 
The case $i=m$ gives us a constant tuple in $R$.

We prove property (2) for $i+1$.
Let $ I_1,\dots, I_{i+1}\minabs\alg A$ be such that $R_{i+1}\cap(I_1\times\cdots\times I_{i+1})\neq\emptyset$.
By the induction hypothesis, $ R_i\cap( I_1\times\cdots\times  I_i)\minabs \alg R_i$.
By the argument above, $R_{i+1}$ is linked as a relation between  $R_i$ and $A$.
Thus, the assumptions of item (1) of Proposition~\ref{prop:absorbing} are satisfied, and we have $ R_{i+1}\cap ( I_1\times\cdots\times  I_{i+1})\minabs \alg R_{i+1}$.
    
We prove property (1) for $i+1$.
Define a directed graph $H$ on $R_i$ by setting 
\[H:= \{((a_1,\dots,a_i),(a_2,\dots,a_{i+1}))\mid (a_1,\dots,a_{i+1})\in R_{i+1}\}.\]
Let $ I\minabs\alg A$ be such that $I^i\minabs \alg R_i$, which exists by the induction hypothesis.
We show that:
\begin{itemize}
    \item $I^i$ is a subset of a (weak)  connected component of $H$; 
    \item this connected component has algebraic length 1.
\end{itemize}
For the first item, let $X:=\{ x \mid \exists a_1,\ldots,a_i\in I\; ( (a_1,\dots,a_i,x)\in R_{i+1})\}$, which is an absorbing subuniverse of $\alg A$.
Let $ X_1\subseteq X$ be a minimal absorbing subuniverse of $\aA$. Then since  $(I^i\times X_1)\cap R_{i+1}\neq \emptyset$ property~(2) gives us  $(I^i\times X_1)\subseteq R_{i+1}$. Reiterating this idea, we find minimal absorbing subuniverses $X_2,\ldots, X_i$ of $\aA$ such that for all $1\leq j\leq i$ we have that  $I^{i-j+1}\times X_1\times\cdots\times X_{j}$ is contained in $R_{i+1}$. Now pick an arbitrary tuple $(a_1,\ldots,a_i)\in I^i$, and an arbitrary tuple $(x_1,\ldots,x_i)\in X_1\times\cdots\times  X_i$. Then there is a path in $H$ from $(a_1,\ldots,a_i)$ to $(x_1,\ldots,x_i)$ by the above, proving the first item.
    
For the second item, let
\[E:=\{(x,y) \mid \exists c_2,\dots,c_i\in I\; ( (x,c_2,\dots,c_i,y)\in R_{i+1})\}.\]
Let $V_1,V_2$ be the projection  of $E$ onto its  first and second coordinate, respectively; then $E$ is  a relation between $V_1$ and $V_2$.
We have that $E$ is an absorbing subuniverse of the algebra  $\alg R_{(1,i+1)}$ induced by $R_{(1,i+1)}$ in $\aA^2$. By assumption on $R$, we have that $R_{(1,i+1)}$ is linked. 
Therefore, $E$ is linked, and $V_1$ and $V_2$ are absorbing subuniverses of  $\alg A$.
Note that $I\subseteq V_1$ and $I\subseteq V_2$.
Let $b\in I$ be arbitrary. Then there exist $k\geq 0$ and  $c_0,\ldots,c_{2k+1}$ such that $(c_{2j},c_{2j+1})\in E$ for all $0\leq j\leq k$ and $(c_{2j+2},c_{2j+1})\in E$ for all $0\leq j\leq k-1$ and such that $c_0=b=c_{2k+1}$. These elements can be assumed to lie in minimal absorbing subuniverses of $\aA$ by item (2) of Proposition~\ref{prop:absorbing}. We then have, by property~(2),  $(c_{2j},b,\ldots,b,c_{2j+1})\in R_{i+1}$ for all $0\leq j\leq k$ and $(c_{2j+2},b,\ldots,b,c_{2j+1})\in R_{i+1}$ for all $0\leq j\leq k-1$. This gives a path of algebraic length~1 in $H$ from $(b,\ldots,b)$ to itself. This proves the second item.
    
Since $R$ is linked, it follows that every vertex of $H$ has an outgoing and an incoming edge.
Moreover, $H$ has algebraic length $1$, and is invariant under $\clone$, which  is equationally non-trivial. 
Hence, by Theorem~\ref{thm:loop}, there exists a loop in $H$ which lies in a minimal absorbing subuniverse $K$ of $\alg R_i$. By the definition of $H$, this loop is a constant tuple $(a,\ldots,a)$.
By projecting $K$ on the first component, we obtain a minimal absorbing subuniverse $J \minabs\alg A$; since $a\in J$, we have that
$J^{i+1}\cap R_{i+1}\neq\emptyset$. By property~(2), we get that $J^{i+1}\minabs \alg R_{i+1}$, so that property~(1) holds.
\end{proof}

The following is the previously announced  generalization of~\cite[Theorem~11]{SmoothApproximations} from binary symmetric relations to arbitrary cyclic relations. 
\begin{proposition}\label{prop:approximation-dichotomy}
    Let $n\geq 1$, and let $\cD$ be an oligomorphic function clone  on a set $A$ which is a model-complete core. Let $\cC\subseteq \cann{\cD}$ be such that  $\cC^n/{\gG_\cD}$ is equationally non-trivial. Let $(S,\sim)$ be a  minimal subfactor of the action  $\cC^n$ with $\gG_\cD$-invariant $\sim$-classes. 
    Then for every $\cD$-invariant cyclic relation $R$ with support $\langle S\rangle_\cD$
    one of the following holds:
    \begin{enumerate}
        \item The linkedness congruence of $R$ is a $\cD$-invariant  approximation  of $\sim$.
        \item $R$ contains a pseudo-loop with respect to  $\gG_\cD$.
    \end{enumerate}
\end{proposition}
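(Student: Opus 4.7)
The plan is to prove the contrapositive: assuming $R$ has no pseudo-loop with respect to $\gG_\cD$ (so~(2) fails), I will show that $\lambda_R$ is a $\cD$-invariant approximation of $\sim$. The $\cD$-invariance of $\lambda_R$ is automatic, since $\lambda_R$ is invariant under every function preserving $R$, so the work lies in showing $\lambda_R|_S \subseteq\,\sim$.

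First I will pass to the finite quotient. Because $\cC\subseteq\cann{\cD}$, the clone $\cC^n/\gG_\cD$ is a well-defined finite clone on $A^n/\gG_\cD$; it is idempotent because the model-complete core assumption on $\cD$ forces every unary $\cD$-function to preserve $\gG_\cD$-orbits, and it is equationally non-trivial by hypothesis. The relation $R$ descends to a cyclic $\cC^n/\gG_\cD$-invariant relation $\bar R$ supported on $T:=\langle S\rangle_\cD/\gG_\cD$, and a pseudo-loop in $R$ is exactly a loop in $\bar R$; so by assumption $\bar R$ has no loop. The restriction of $\cC^n/\gG_\cD$ to the invariant subset $T$ remains equationally non-trivial: any clone homomorphism from this restriction to $\Proj$, composed with the surjective restriction map, would witness equational triviality of $\cC^n/\gG_\cD$. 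Applying Theorem~\ref{thm:cyclic-term} to this restricted idempotent equationally non-trivial clone and the cyclic relation $\bar R$, I conclude that $\bar R$ is not linked on $T$, so its linkedness congruence $\bar\lambda$ on $T$ has at least two classes.

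Next I will use the minimality of $(S,\sim)$. Suppose for contradiction that $\lambda_R|_S$ does not refine $\sim$: some $\bar\lambda$-class $K_0$ meets $S/\gG_\cD$ in elements from two distinct $\sim$-classes. The intersection $K_0\cap(S/\gG_\cD)$ is a $\cC$-invariant subset of $S/\gG_\cD$ meeting multiple $\sim$-classes, so minimality forces $K_0\cap(S/\gG_\cD)=S/\gG_\cD$; that is, $S/\gG_\cD\subseteq K_0$.

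The main obstacle is then to propagate this inclusion to all of $T$, which would force $\bar\lambda$ to have a single class and deliver the contradiction. I will argue by induction on the iterative construction $\langle S\rangle_\cD=\bigcup_{k\geq 0}S_k$ with $S_0=S$ and $S_{k+1}$ the image of $\cD$-operations applied to tuples from $S_k$. The base case $S_0\subseteq\tilde K_0$ (the preimage of $K_0$ in $A^n$) is the previous paragraph. For the inductive step, take $t^*=f(s_1,\dots,s_j)$ with $s_1,\dots,s_j\in S_k\subseteq\tilde K_0$; since $\tilde K_0$ is a single $\lambda_R$-class, all $s_i$ are pairwise $\lambda_R$-related, and preservation of $\lambda_R$ by $f$ yields $f(s_1,\dots,s_j)\mathrel{\lambda_R}f(s_1,\dots,s_1)$. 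The model-complete core property ensures that the unary function $x\mapsto f(x,\dots,x)$ locally agrees with some $\alpha\in\gG_\cD$ at $s_1$, so $f(s_1,\dots,s_1)$ lies in the $\gG_\cD$-orbit of $s_1$ and hence in the same $\lambda_R$-class $\tilde K_0$. Chaining, $t^*\in\tilde K_0$, so the class of $t^*$ in $T$ lies in $K_0$. This completes the induction, gives $T\subseteq K_0$, and contradicts $\bar\lambda$ having at least two classes.
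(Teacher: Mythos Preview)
Your overall strategy matches the paper's: both arguments reduce to showing that if~(1) fails then the quotient relation $\bar R$ on $T=\langle S\rangle_\cD/\gG_\cD$ is linked, so that Theorem~\ref{thm:cyclic-term} produces a loop and hence a pseudo-loop in $R$. However, your inductive step contains a genuine gap.

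The problematic sentence is ``since $\tilde K_0$ is a single $\lambda_R$-class''. The set $\tilde K_0$ is the preimage of a $\bar\lambda$-class, hence a union of $\gG_\cD$-orbits; but $\lambda_R$-classes need not be unions of $\gG_\cD$-orbits. The relation $\lambda_R$ is $\gG_\cD$-invariant as a binary relation (so $a\mathrel{\lambda_R}b$ implies $\alpha a\mathrel{\lambda_R}\alpha b$), yet this does \emph{not} force $a\mathrel{\lambda_R}\alpha a$ for $\alpha\in\gG_\cD$. Consequently, knowing $s_1,\dots,s_j\in\tilde K_0$ only gives $[s_i]\mathrel{\bar\lambda}[s_\ell]$, not $s_i\mathrel{\lambda_R}s_\ell$, and the step $f(s_1,\dots,s_j)\mathrel{\lambda_R}f(s_1,\dots,s_1)$ is unjustified. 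The same issue recurs when you write ``$f(s_1,\dots,s_1)$ lies in the $\gG_\cD$-orbit of $s_1$ and hence in the same $\lambda_R$-class $\tilde K_0$'': it lies in $\tilde K_0$, but not necessarily in the $\lambda_R$-class of $s_1$.

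The paper circumvents this by using minimality more sharply. From witnesses $c,d\in S$ with $c\not\sim d$ and $c\mathrel{\lambda_R}d$, minimality yields $S\subseteq\langle\{c,d\}\rangle_\cC$, and therefore $\langle S\rangle_\cD=\langle\{c,d\}\rangle_\cD$. Now every $t\in\langle S\rangle_\cD$ can be written as $f(u_1,\dots,u_k)$ with all $u_i\in\{c,d\}$; since $c\mathrel{\lambda_R}d$ by choice, one obtains $t\mathrel{\lambda_R}f(c,\dots,c)\in O_c$ via the model-complete core property, so every orbit in $T$ is $\bar\lambda$-related to $O_c$ and $\bar R$ is linked. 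The key is to generate from the two-element set $\{c,d\}$, whose elements are genuinely $\lambda_R$-related, rather than from all of $S$ or $\tilde K_0$.
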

\begin{proof}
Let $R$ be given, and denote  its arity by $m$. Assuming that~(1) does not hold, we prove~(2).

Denote by $\mathcal O$ the set of orbits of $n$-tuples under the action of  $\gG_\cD$ thereon.
	Let $R'$ be the relation obtained by considering $R$ as a relation on $\mathcal O$, i.e., 
	\[ R' := \{(O_1,\dots,O_m)\in\mathcal O^m \mid R\cap (O_1 \times \dots \times O_m) \neq\emptyset\}.\]
	Thus, $R'$ is an $m$-ary cyclic relation with support  $S'\subseteq\mathcal O$,
	and $R'$ contains a loop  if and only if  $R$ satisfies~(2).
	
	By assumption, the action $\cC^n/\gG_\cD$ is equationally non-trivial; moreover, it is idempotent since $\cD$ is a model-complete core. Note also that $R'$, and in particular  $S'$, are  preserved by this action. It is therefore sufficient to  show that $R'$ is linked and apply  Theorem~\ref{thm:cyclic-term}.
	
	Recall that we  consider $R$ also as a binary relation between $\proj{m-1}{R}$ and $\cl{S}$; similarly, we consider $R'$ as a binary relation between $\proj{m-1}{R'}$ and $S'$. By the  oligomorphicity  of $\cD$, the linkedness congruence $\lambda_{R}$ of $R$ is  invariant under  $\cD$. 

	By our assumption that~(1) does not hold, there exist $c,d\in S$ which are not $\sim$-equivalent and such that $\lambda_{R}(c,d)$ holds; otherwise, $\lambda_{R}$ would be an  approximation  of $\sim$. This implies that the orbits $O_c,O_d$ of $c,d$ are related via $\lambda_{R'}$. 
	By  the minimality of $(S,\sim)$,  we have that $\cl{S}=\langle \{c,d\}\rangle_\cD$. Since $\cD$ is a model-complete core, it preserves the $\gG_\cD$-orbits, and it follows that any tuple in $\cl{S}= \langle \{c,d\}\rangle_\cD$ is $\lambda_R$-related to a tuple in the orbit of $c$. Hence,  $\lambda_{R'}=(S')^2$, and thus $R'$ is linked. Theorem~\ref{thm:cyclic-term} therefore  implies that $R'$ contains a loop, and hence $R$ contains a pseudo-loop with respect to $\gG_\cD$, which is what we had to show.
\end{proof}

The following is a generalization of Lemma~14 in~\cite{SmoothApproximations} from binary relations and functions to relations and functions of higher arity. It is used in Section~\ref{sect:newboundedwidth} to obtain characterization of bounded width for first-order reducts of unary structures and of structures related to the logic MMSNP.

\begin{restatable}{lemma}{narycyclic}\label{lem:n-ary-weird-cyclic}
Let $\cD$ be a clone of polymorphisms of an $\omega$-categorical relational structure on a 
 set $A$, and assume that $\cD$
is a model-complete core. Let $n\geq 1$, and let $\sim$ be an equivalence relation on a set $S\subseteq A^n$ with  $\gG_\cD$-invariant classes.
    Let $m\geq 1$, and let $P$ be an $m$-ary relation on $\cl{S}$. 
    Suppose that every $m$-ary  $\cD$-invariant  cyclic relation $R$ on $\langle S\rangle_\cD$ which contains a tuple in $P$ with components in at least two   $\sim$-classes  contains a pseudo-loop with respect to  $\gG_\cD$.
    
    Then  there exists an $m$-ary $f\in\cD$ such that for all $a_1,\dots,a_m\in A^n$ we have that if 
	the tuple  $(f(a_1,\dots,a_m),f(a_2,\dots,a_m,a_1),\dots,f(a_m,a_1,\dots,a_{m-1}))$ is in $P$, then it intersects at most one   $\sim$-class. 
\end{restatable}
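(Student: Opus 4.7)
I would prove this by contradiction. Suppose no such $f$ exists; then for each $m$-ary $f \in \cD$ one may fix a witness $\bar a^f = (a^f_1,\dots,a^f_m) \in (A^n)^m$ such that the tuple
\[ T_f(\bar a^f) := (f(a^f_1,\dots,a^f_m),\, f(a^f_2,\dots,a^f_m,a^f_1),\, \dots,\, f(a^f_m,a^f_1,\dots,a^f_{m-1})) \]
lies in $P$ and has components in at least two $\sim$-classes. The plan is to collect these witnesses into a single $m$-ary cyclic $\cD$-invariant relation $R$ on $\cl{S}$ satisfying the hypothesis, and then to use the resulting pseudo-loop to produce a single $f^*$ contradicting the assumed existence of the witness $\bar a^{f^*}$.

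The key tool is the orbit-transport identity $T_{g \circ (\alpha,\dots,\alpha)}(\bar a) = T_g(\alpha(\bar a))$, valid for every $m$-ary $g \in \cD$ and $\alpha \in \gG_\cD$ (noting $g \circ (\alpha,\dots,\alpha) \in \cD$). By oligomorphicity of $\gG_\cD$, there are only finitely many $\gG_\cD$-orbits on $(A^n)^m$; a pigeonhole argument lets me fix a single representative $\bar a \in (A^n)^m$ and a non-empty family $\mathcal{F}$ of $m$-ary functions in $\cD$---obtained by orbit-transporting each original witness to have $\bar a$ as its input---such that $T_f(\bar a) \in P$ has at least two $\sim$-classes for every $f \in \mathcal{F}$. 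Set
\[ R := \{T_g(\bar a) : g \text{ is $m$-ary in } \cD\} \cap \cl{S}^m. \]
This is cyclic because $T_{g \circ \sigma}(\bar a) = \sigma(T_g(\bar a))$ for the cyclic shift $\sigma$ on arguments (and $g \circ \sigma \in \cD$), and $\cD$-invariant because $h(T_{g_1}(\bar a),\dots,T_{g_p}(\bar a)) = T_{h(g_1,\dots,g_p)}(\bar a)$ for all $h, g_1,\dots,g_p \in \cD$, combined with the $\cD$-invariance of $\cl{S}$. Since $R$ contains $T_f(\bar a)$ for every $f \in \mathcal{F}$, the hypothesis applies and yields a pseudo-loop $T_{f^*}(\bar a) \in R$ for some $m$-ary $f^* \in \cD$, all of whose components lie in a single $\gG_\cD$-orbit and hence in one $\sim$-class.

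The hard step is to turn this single-input pseudo-loop into a contradiction: a priori the witness $\bar a^{f^*}$ of $f^*$ might lie in a $\gG_\cD$-orbit different from that of $\bar a$, in which case orbit-transport does not directly identify $T_{f^*}(\bar a)$ with $T_{f^*}(\bar a^{f^*})$. My plan is to iterate the construction: starting with $\mathcal{F}_0 := \{f_0\}$ for one transported witness, extract $f^*_0$ via the hypothesis, orbit-transport its witness $\bar a^{f^*_0}$ (if it lies in the chosen orbit) into a new $\tilde f^*_0$ with $T_{\tilde f^*_0}(\bar a) \in P$ still multi-class, set $\mathcal{F}_1 := \mathcal{F}_0 \cup \{\tilde f^*_0\}$, and repeat with the corresponding enlarged relation. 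The process must stabilise because at the stabilising step the produced $f^*$ already lies in $\mathcal{F}$, making $T_{f^*}(\bar a)$ simultaneously a pseudo-loop (from $R$) and a multi-class tuple in $P$ (from $\mathcal{F}$), which is the desired contradiction. Witnesses in ``other'' orbits are handled by repeating the whole construction across representatives of all (finitely many) $\gG_\cD$-orbits on $(A^n)^m$ so that every possible witness is eventually aligned with some fixed representative. The principal technical delicacy is ensuring that this multi-orbit iteration terminates coherently and keeps $R$ cyclic, $\cD$-invariant, and inside $\cl{S}^m$ throughout, which is the crux of the proof.
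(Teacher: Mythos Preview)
Your iteration scheme has a genuine gap. The relation $R = \{T_g(\bar a) : g \text{ is $m$-ary in } \cD\} \cap \cl{S}^m$ that you define does not depend on the family $\mathcal{F}$ at all, so ``enlarging $\mathcal{F}$ and repeating with the corresponding enlarged relation'' produces the same $R$ and hence no new information. Moreover, the hypothesis only hands you \emph{some} pseudo-loop $T_{f^*}(\bar a) \in R$; it does not force $f^*$ into $\mathcal{F}$, and there is no finiteness that makes your process stabilise. Concretely: $T_{f^*}(\bar a)$ being a pseudo-loop means $f^*$ is \emph{good} at $\bar a$, so the bad witness $\bar a^{f^*}$ must live elsewhere; after transporting it back you obtain $\tilde f^* = f^*\circ\alpha$ with $T_{\tilde f^*}(\bar a)$ bad. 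But $T_{\tilde f^*}(\bar a)$ sits in the very same $R$ alongside the pseudo-loop $T_{f^*}(\bar a)$, and nothing prevents this coexistence. You never reach a tuple that is simultaneously a pseudo-loop and multi-class.

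The paper's argument is constructive rather than by contradiction, and hinges on a composition step you are missing. For each troublesome $\bar a$ with a witness $r$ (so $T_r(\bar a)$ is in $P$ and multi-class), one forms the cyclic $\cD$-closure of the shifts of $T_r(\bar a)$; every element of this relation has the shape $T_g(T_r(\bar a))$ for some $m$-ary $g\in\cD$, and the guaranteed pseudo-loop is therefore $T_g(T_r(\bar a)) = T_h(\bar a)$ for $h := g(r,r\sigma,\dots,r\sigma^{m-1})$. Thus from any $r$ bad at $\bar a$ one manufactures an $h$ whose cyclic tuple at $\bar a$ is a pseudo-loop. The crucial point that makes an iteration over \emph{all} tuples work is that pseudo-loops are \emph{stable} under further such compositions: since $\cD$ is a model-complete core, each $\gG_\cD$-orbit of $n$-tuples is $\cD$-invariant, so if $T_h(\bar a)$ lies in some $O^m$ then $T_{g'(h,h\sigma,\dots)}(\bar a)$ lies in $O^m$ as well. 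One then fixes troublesome tuples one after another, earlier pseudo-loops persisting, and a standard compactness argument (using that $\cD$ is topologically closed, which is where the polymorphism-clone assumption enters) yields a single $f$ that is good everywhere. Your proposal contains the right relation $R$ and the right invariance observations, but without the composition step and the pseudo-loop stability it cannot close.
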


\begin{proof}
	The proof is similar to the proof of Lemma~14  in~\cite{SmoothApproximations}. 
    Fix $m\geq 2$. 
	Call an $m$-tuple $(a_1,\dots,a_m)$ of elements of $A^n$ \emph{troublesome} if there exists an  $m$-ary  $r\in\cD$ such that
	\[(b_1,\dots,b_m):=(r(a_1,\dots,a_m),r(a_2,\dots,a_m,a_1),\dots,r(a_m,a_1,\dots,a_{m-1}))\]
	is in $P$ and has components in at least two  $\sim$-classes.
    We first show that for any tuple $(a_1,\dots,a_m)\in (A^n)^m$, there exists an $m$-ary function $g$ in $\cD$ such that $$(g(a_1,\ldots,a_m),\ldots,g(a_m,a_1,\ldots,a_{m-1}))$$ is not troublesome.
    We then use this fact to obtain a sequence $(f^i)_{i\in\mathbb N}$ of operations from $\cD$ satisfying the same property on subsets of $A^n$ of increasing size, and finally obtain a function satisfying the conclusion of the lemma by a standard compactness argument.
    
    Note that if $(a_1,\ldots,a_m)$ is not troublesome and $h\in\cD$, then $(h(a_1,\ldots,a_m),\ldots,h(a_m,a_1,\ldots,a_{m-1}))$ is not troublesome either.
    Moreover, if $a_1,\ldots,a_m$  are in the same orbit under $\gG_\cD$, then $(a_1,\ldots,a_m)$ is not troublesome: since $\cD$ is a model-complete core, $h(a_1,\ldots,a_m),\ldots,h(a_m,a_1,\ldots,a_{m-1})$ are also in the same orbit under $\gG_\cD$ for any $m$-ary $h\in\cD$; since the classes of $\sim$ are closed under $\gG_\cD$, our claim follows. 
	
	For each troublesome tuple $(a_1,\ldots,a_m)$, the smallest $\cD$-invariant relation containing the  set \[\{(b_1,\dots,b_m),\dots,(b_m,b_1,\dots,b_{m-1})\}, \] where $(b_1,\ldots,b_m)$ is defined via a witnessing function $r$ as above, is cyclic, contains a tuple in $P$ with components in at least two $\sim$-classes,  and its support is contained in  $\cl{S}$.  Hence, it contains a pseudo-loop with respect to $\gG_\cD$ by our assumptions. 
	This implies that there exists an $m$-ary $g\in\cD$ such that the entries of the tuple  \[(g(b_1,\dots,b_m),\dots,g(b_m,b_1,\dots,b_{m-1}))\] all belong to the same  $\gG_\cD$-orbit. The function \[s(x_1,\ldots,x_m):=g(r(x_1,\ldots,x_m),\ldots,r(x_m,x_1,\ldots,x_{m-1}))\] 
	thus has the property that the entries of 
	\[(s(a_1,\dots,a_m),\dots,s(a_m,a_1,\dots,a_{m-1}))\] all lie in the same $\gG_\cD$-orbit.
	
	In conclusion, for every tuple $(a_1,\ldots,a_m)$ of elements of $A^n$ -- troublesome or not -- there exists an $m$-ary function $d$ in $\cD$ such that $(d(a_1,\ldots,a_m),\ldots,d(a_m,a_1,\ldots,a_{m-1}))$ is not troublesome: if $(a_1,\ldots,a_m)$ is troublesome one can take $d$ to be the operation $s$ we just described; if $(a_1,\ldots,a_m)$ is not troublesome then the first projection works.
	
	Let $((a_1^i,\ldots,a_m^i))_{i\in\mathbb N}$ be an enumeration of all  $m$-tuples of elements of  $A^n$.
    We build by induction on $i\in\mathbb N$ an operation $f^i\in\cD$ such that  $$(f^i(a^j_1,\ldots,a^j_m),\ldots,f^i(a_m^j,a_1^j,\ldots,a^j_{m-1}))$$ is not troublesome for any $j<i$.
    For $i=0$ there is nothing to show, so suppose that $f^i$ is built.
    Let $d\in\cD$ be an $m$-ary operation making the tuple 
    $$(f^i(a^i_1,\ldots,a^i_m),\ldots,f^i(a_m^i,a_1^i,\ldots,a^i_{m-1}))$$
    not troublesome, in the manner of  the previous paragraph.
    Then setting $$
    f^{i+1}(x_1,\ldots,x_m):=d(f^i(x_1,\ldots,x_m),\ldots,f^i(x_m,x_1,\ldots,x_{m-1}))
    $$ clearly has the desired property for the tuple $(a_1^i,\ldots,a_m^i)$, and 
    also for all tuples $(a_1^j,\ldots,a_m^j)$ with $j<i$ by the remark in the first paragraph.
    
    By a standard compactness argument, the sequence $(f^i)_{i\in\mathbb N}$ yields the desired function $f$. This follows directly from Lemma 4 in~\cite{BodPin-CanonicalFunctions}. We provide a different argument for the convenience of the reader. More precisely, let us construct the following infinite, finitely branching tree. For $i>0$, let $A_i$ be the union of the entries of all tuples $a_1^j,\ldots,a_m^j$ with $j\leq i$.
    Define an equivalence relation $\sim_i$ on the $k$-ary functions in $\cD$ by $f\sim_i g$ if there exists $\alpha\in\gG_\cD$ such that $\alpha f|_{A_i}=g|_{A_i}$.
    The tree contains a node for every $i>0$ and for every equivalence class $[f]_{i}$ of some $k$-ary $f\in\cD$ with respect to $\sim_i$ with the property that $(f(a^j_1,\ldots,a^j_m),\ldots,f(a_m^j,a_1^j,\ldots,a^j_{m-1}))$ is not troublesome for every $j\leq i$. 
    A node $[g]_{i+1}$ is a child of $[f]_i$ if $g\sim_i f$.
    Since $\gG_\cD$ is the automorphism group of an $\omega$-categorical structure, every $\sim_i$ has only finitely many equivalence classes, and thus every $[f]_i$ has finite degree.
    Since $\gG_\cD$ is oligomorphic, there are only finitely many such orbits $O^i$ for every $i>0$, whence this tree is finitely branching, and it is clearly infinite.
    The tree contains a path of arbitrary large length, as the sequence $[f^{i}]_1,[f^{i}]_2,\dots,[f^{i}]_i$ is a path of length $i$ for any $i$.
    K\"onig's tree lemma then yields that there exists an infinite branch $([f_i]_i)_{i\in\mathbb N}$.
    Up to replacing $f_i$ by $\alpha_if_i$ for some suitably chosen $\alpha_i\in\gG_\cD$, one can assume without loss of generality that $f_{i+1}|_{A_{i}}=f_i|_{A_i}$ holds for all $i\in\mathbb N$.
    Then $f=\bigcup f_i|_{A_i}$ is well-defined and is in $\cD$.
    Moreover, it satisfies the conclusion of the lemma.
\end{proof}

\section{Applications: Collapses of the bounded width hierarchies for some classes of infinite structures}\label{sect:newboundedwidth}

We now apply the algebraic results of Section~\ref{sect:looplemma} and the theory of smooth approximations to obtain a characterisation of bounded width for CSPs of first-order reducts of unary structures and for CSPs in MMSNP. Moreover, we obtain a collapse of the bounded width hierarchy for such CSPs and for CSPs of many other structures studied in the literature.

\subsection{Unary Structures}\label{sect:unary}

We are going to prove the following characterization of  bounded  width for first-order reducts of unary structures.
\begin{theorem}\label{thm:characterization-unary}
    Let $\rel A$ be a first-order reduct of a unary structure,
    and assume that $\rel A$ is a model-complete core.
    Then one of the following holds:
    \begin{itemize}
        \item The clone of $\Aut(\rel A)$-canonical polymorphisms of $\rel A$ is not equationally affine, or equivalently, it contains pseudo-WNU operations modulo $\Aut(\rel A)$ of all arities $n\geq 3$;
        \item $\Pol(\rel A)$ has a uniformly continuous minion  homomorphism to an affine   clone.
    \end{itemize}
\end{theorem}
In the first case, the stated equivalence follows from Section~\ref{subsect:clones} and $\rel A$ has relational width at most $(4,6)$ by Theorem~\ref{thm:canonicalbwidth}, 
and in the second case it does not have bounded width by Theorem~\ref{thm:characterization-bwidth}. We remark that 
by~\cite[Lemma 6.7]{ReductsUnary}, the model-complete core of a first-order reduct of a unary structure is also a first-order reduct of a unary structure. Thus, Theorem~\ref{thm:characterization-unary} gives a characterization of bounded width for \emph{all} first-order reducts of unary structures.

The two items of Theorem~\ref{thm:characterization-unary} are invariant under expansions of $\rel A$ by a finite number of constants (for the first item, this  follows exactly as the preservation of the pseudo-Siggers identity under adding constants  in~\cite{BartoPinskerDichotomy, Topo}, for the second item, \cite[Theorem 1.8]{wonderland} yields that the existence of a uniformly continuous minion homomorphism from $\Pol(\rel A)$ to an arbitrary clone $\cC$ is equivalent to the existence of such a homomorphism from the polymorphism clone of an expansion of $\rel A$ by a finite number of constants.) Thus, by Proposition~6.8 in~\cite{ReductsUnary}, one can assume in the following that $\rel A$ is a first-order \emph{expansion} of 
$(\rel N;V_1,\dots,V_r)$ (and not only a first-order reduct thereof, i.e., the relations $V_1,\ldots,V_r$ are present in $\rel A$) where $V_1,\dots,V_{r}$ form a partition of $\rel N$ in which every set is either a singleton or infinite.
Such partitions were called \emph{stabilized partitions} in~\cite{ReductsUnary}, and we shall also call the structure $(\rel N;V_1,\dots,V_r)$ a stabilized partition.

We will use the following fact which states that $\Pol(\rel A)$ locally interpolates $\can{\Pol(\rel A)}$.

\begin{lemma}[Proposition~6.5 in~\cite{ReductsUnary}]\label{lem:canonisation-unary}
Let $\rel A$ be a first-order expansion of a stabilized partition  $(\rel N;V_1,\dots,V_r)$.
For every $f\in\Pol(\rel A)$  there exists $g\in\can{\Pol(\rel A)}$ which is locally interpolated by $f$ modulo $\Aut(\rel A)$.
\end{lemma}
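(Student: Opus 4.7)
The plan is to reduce the question to the standard Ramsey-theoretic canonization machinery applied to an ordered expansion of $\rel A$, and then to upgrade the canonicity produced there to canonicity with respect to the whole of $\Aut(\rel A)$ by exploiting the simplicity of the underlying unary signature.

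First, I would introduce an expansion $\rel A^*$ of $\rel A$ obtained by equipping each part $V_i$ with an independent dense linear order $<_i$ without endpoints, so that $(V_i;<_i)\cong(\mathbb Q;<)$. The age of $\rel A^*$ consists of the finite structures in this expanded signature whose reducts (forgetting the orders) embed into $\rel A$. By the partite Ne\v{s}et\v{r}il--R\"odl theorem, together with the fact that $\rel A$ is a first-order expansion of a unary structure, this age is a Ramsey class; hence $\rel A^*$ is a Ramsey structure in the sense of~\cite{BodirskyRamsey}, and $\Aut(\rel A^*)$ is a closed subgroup of $\Aut(\rel A)$.

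Second, by the canonization theorem for Ramsey structures~\cite{BPT-decidability-of-definability,BodPin-CanonicalFunctions}, every $f\in\Pol(\rel A)$ locally (in fact diagonally) interpolates, modulo $\Aut(\rel A^*)$, some operation $h$ that is canonical with respect to $\Aut(\rel A^*)$. Since $\Aut(\rel A^*)\subseteq\Aut(\rel A)\subseteq\Pol(\rel A)$ and $\Pol(\rel A)$ is topologically closed, $h$ belongs to $\Pol(\rel A)$; moreover, because local interpolation modulo a smaller group is the stronger notion, $f$ already locally interpolates $h$ modulo $\Aut(\rel A)$.

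The main obstacle lies in the final step: upgrading $h$, which is canonical only with respect to $\Aut(\rel A^*)$, to an operation $g$ that is canonical with respect to the whole of $\Aut(\rel A)$. Each $\Aut(\rel A)$-orbit of $n$-tuples decomposes into only finitely many $\Aut(\rel A^*)$-orbits, corresponding to the finitely many ways of linearly ordering the entries falling inside each fixed part $V_i$. The crucial structural point is that because the signature of $\rel A$ is a first-order expansion of a purely unary (partition) signature, these order-refinements are mutually indistinguishable by the relations of $\rel A$, and consequently $\Aut(\rel A)\setminus\Aut(\rel A^*)$ acts transitively on them inside each $\Aut(\rel A)$-orbit. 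One can therefore pre-compose $h$ with such automorphisms and re-canonize, identifying the finitely many $\Aut(\rel A^*)$-orbit behaviors of $h$ into a single $\Aut(\rel A)$-orbit behavior; iterating arity-by-arity and taking a pointwise limit via a K\"onig-style compactness argument inside the topologically closed clone $\Pol(\rel A)$ then yields a function $g\in\can{\Pol(\rel A)}$ which is still locally interpolated by $f$ modulo $\Aut(\rel A)$, as required.
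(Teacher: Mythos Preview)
The paper does not prove this lemma; it is quoted from~\cite{ReductsUnary} (as Proposition~6.5 there) and used as a black box, so there is no proof in the present paper to compare against.

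On your proposal: steps one through three are fine. The gap is in the final upgrading step. Your claim that one can ``pre-compose $h$ with such automorphisms and re-canonize'' so as to identify the $\Aut(\rel A^*)$-orbit behaviours within each $\Aut(\rel A)$-orbit is not justified. Pre-composing $h$ with elements of $\Aut(\rel A)$ merely permutes which $\Aut(\rel A^*)$-sub-orbit each input tuple lands in; re-canonizing with respect to $\Aut(\rel A^*)$ then yields another $\Aut(\rel A^*)$-canonical function, but you give no argument for why its behaviour should now be constant on $\Aut(\rel A)$-orbits. If instead one tries to choose a single $\alpha_i$ per argument that simultaneously maps every tuple in a finite test set into a fixed preferred sub-orbit (say, the increasing one), this fails outright: the test set may contain both $(a,b)$ and $(b,a)$ with $a\neq b$, and no permutation makes both increasing at once. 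The K\"onig compactness at the end would be unproblematic once $n$-canonical interpolants exist for each $n$, but their existence is exactly what you have not established.

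A correct argument needs either a structural analysis showing that $\Aut(\rel A^*)$-canonical behaviours are automatically $\Aut(\rel A)$-canonical (which is plausible here but requires real work, exploiting the consistency constraints that higher-arity canonicity over $(\mathbb Q;<)$ imposes on behaviours), or a direct canonization with respect to $\prod_i\mathrm{Sym}(V_i)$ that bypasses the ordered expansion altogether.
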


The following will allow us to assume, in most proofs, the presence of functions in $\Pol(\rel A)$ which are injective on every set in the stabilized partition. This is the analogue to the efforts to obtain binary injections in~\cite{SmoothApproximations}.

\begin{restatable}[Subset of the proof of Proposition 6.6~\cite{ReductsUnary}]{lemma}{injections}\label{lem:injections}
    Let $\rel A$ be a first-order expansion of a stabilized partition  $(\rel N;V_1,\dots,V_r)$, and assume it is a model-complete core. If\/ $\Pol(\rel A)$ has no continuous  
    clone homomorphism to $\Proj$, then it contains operations of all arities whose restrictions to  $V_i$ are injective for all $1\leq i\leq r$. 
\end{restatable}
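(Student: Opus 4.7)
The plan is to construct, for each arity $n$, an operation $f_n \in \Pol(\rel A)$ whose restriction to $V_i^n$ is injective for every $i$, by first producing a suitable binary operation $e$ and then bootstrapping via iterated composition.

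\emph{Binary case.} The hypothesis that $\Pol(\rel A)$ has no continuous clone homomorphism to $\Proj$ implies, via the lifted Mar\'oti--McKenzie theorem recalled before the lemma statement together with Lemma~\ref{lem:canonisation-unary}, that $\can{\Pol(\rel A)}$ contains a canonical binary operation $f$ that does not factor through a projection modulo $\overline{\Aut(\rel A)}$ on any $V_i \times V_j$. From such an $f$, I would produce $e \in \Pol(\rel A)$ whose restriction to $V_i \times V_j$ is injective for every $i, j$ by a compactness argument: enumerate all pairs of distinct tuples in $\bigcup_{i,j} V_i \times V_j$ and, at each step, find a polymorphism separating the next pair while preserving all previously achieved separations. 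The key substep is that any single pair $(a,b) \neq (c,d)$ in some $V_i \times V_j$ can be separated by some $f(\alpha(\cdot), \beta(\cdot))$ with $\alpha, \beta \in \Aut(\rel A)$: otherwise, $f$ would be constant on the entire $\mathrm{Sym}(V_i) \times \mathrm{Sym}(V_j)$-orbit of such pairs (this product being contained in $\Aut(\rel A)$ by the stabilized partition assumption), which would force $f$ to factor through a projection on $V_i \times V_j$. Taking a pointwise limit inside the topologically closed clone $\Pol(\rel A)$ then yields $e$.

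\emph{Extension to arity $n$.} Given such a canonical binary $e$, define
\[
e_n(x_1, \ldots, x_n) := e(x_1, e(x_2, \ldots, e(x_{n-1}, x_n)\ldots)).
\]
Canonicity of $e$ ensures that $e_n$ maps $V_i^n$ into a single class; a straightforward induction on $n$, using injectivity of $e$ on each $V_j \times V_k$, shows that $e_n|_{V_i^n}$ is injective for every $i$.

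\emph{Main obstacle.} The delicate step is the compactness argument in the binary case: combining countably many partial separators into a single polymorphism simultaneously injective on every $V_i \times V_j$. Since no injective operation is yet available to ``pair'' partial separators, one must proceed strictly inductively, modifying $f$ by one automorphism composition at a time while ensuring that each new modification preserves a finite witness set of previously achieved separations, and then invoking closure of $\Pol(\rel A)$ under pointwise convergence to produce an actual polymorphism in the limit. The hypothesis ``no continuous clone homomorphism to $\Proj$'' enters precisely at the point where we assert the non-triviality of the starting canonical $f$ on every $V_i \times V_j$.
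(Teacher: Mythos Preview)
Your proposal has a genuine gap at precisely the point you flag as the main obstacle. The compactness step --- building a single $e$ injective on every $V_i\times V_j$ from an $f$ that is merely essential there --- is circular as sketched: to merge finitely many partial separators into one, the standard device is to feed them into a binary injection, which is the very object under construction. Post-composing $g_n$ cannot separate a pair that $g_n$ has already collapsed, and replacing $g_n$ by some $f(\alpha\cdot,\beta\cdot)$ forfeits control over the previously handled pairs; your sketch does not break this cycle. There is also a secondary gap upstream: Mar\'oti--McKenzie yields WNUs of arity $\geq 3$ under equational non-\emph{affineness}, not a binary operation, and not under the weaker non-triviality hypothesis you actually have. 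No route from a higher-arity Taylor term to a single binary $f$ that is simultaneously essential on \emph{every} product $V_i\times V_j$ is apparent.

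The paper avoids both problems by inducting on the number $j$ of classes rather than on pairs of tuples. For a single class $V_i$, the restriction of $\Pol(\rel A)$ to $V_i$ is equationally non-trivial and preserves $\neq$ (the latter since $\rel A$ is a model-complete core), and a classical argument then produces a binary operation in $\Pol(\rel A)$ whose restriction to that one $V_i$ is injective. The induction step glues an operation $f$ injective on $V_1,\dots,V_j$ with an operation $g$ injective on $V_{j+1}$ via a short case analysis on whether $f$ depends on one or both arguments over $V_{j+1}$; the key observation making this work is that any binary term built from $f$ remains injective on each of $V_1,\dots,V_j$, so refining the behaviour on $V_{j+1}$ never destroys what was already achieved. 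Nesting the resulting binary operation gives all arities, using only that polymorphisms preserve each $V_i$ --- injectivity on mixed products $V_i\times V_j$ with $i\neq j$ is never needed.
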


\begin{proof}
We show by induction that for all $1\leq j\leq r$, there exists a binary operation in $\Pol(\rel A)$ whose restriction to each of $V_1,\ldots, V_j$ is injective; higher arity functions with the same property are then obtained by nesting the binary operation. For the  base case $j=1$, observe that the disequality relation $\neq$ is preserved on $V_1$ since $\rel A$ is a model-complete core; together with the restriction of $\Pol(\rel A)$ to $V_1$ being equationally non-trivial, we then obtain an  operation which acts as as an essential function on $V_1$. This in turn easily  yields a function that acts as a binary injection on $V_1$  -- see e.g.~\cite{ecsps}. For the induction step, assuming the statement holds for   $1\leq j< r$, we show the same for $j+1$. By the  induction hypothesis, there exist binary functions $f,g\in\Pol(\rel A)$ such that the restriction of $f$ to each of the sets $V_1,\ldots,V_{j}$ is injective, and the restriction of $g$ to $V_{j+1}$ is injective. If  the restriction of $f$ to $V_{j+1}$ depends only on its first or only on its second variable, then it is injective in that variable since disequality is preserved on $V_{j+1}$, and hence either the function $f(g(x,y),f(x,y))$ or the function $f(f(x,y),g(x,y))$ has the desired property. If on the other hand the restriction of $f$ to $V_{j+1}$ depends on both variables, then the same argument as in the base case yields a function which is injective on $V_{j+1}$, and this function is still injective on each of the sets  $V_1,\ldots,V_j$.
\end{proof}

\begin{restatable}{proposition}{fromunary}\label{prop:from-unary}
    Let $\rel A$ be a first-order expansion of a stabilized partition  $(\rel N;V_1,\dots,V_r)$, and assume it is a model-complete core. Suppose that $\Pol(\rel A)$ contains operations of all arities whose restrictions to $V_i$ are injective for all $1\leq i\leq r$. Then the following are equivalent:
    \begin{itemize}\item $\can{\Pol(\rel A)}$ is equationally affine;
    \item  $\can{\Pol(\rel A)}\actson \rel N/\Aut(\rel A)$ is equationally affine.
    \end{itemize}
\end{restatable}

\begin{proof}
    Trivially, if $\can{\Pol(\rel A)}\actson \rel N/\Aut(\rel A)$ is equationally affine, then  so is $\can{\Pol(\rel A)}$.
    
    For the other direction, assume that $\can{\Pol(\rel A)}\actson \rel N/\Aut(\rel A)$ is not equationally affine. We will show that $\can{\Pol(\rel A)}$ contains pseudo-WNU operations of all arities $\geq 3$, whence it is not equationally affine. Since $\can{\Pol(\rel A)}\actson \rel N/\Aut(\rel A)$ is not equationally affine,
    $\can{\Pol(\rel A)}$ contains for every $k\geq 3$ a $k$-ary operation $g_k$ whose action on $\rel N/{\Aut(\rel A)}$
    is a WNU operation by the implication  from~\ref{itm:core-affine} to~\ref{itm:wnu} in Theorem~\ref{thm:characterization-bwidth} and since $\rel N/{\Aut(\rel A)}$ is a finite set.  Fix for all such  $k\geq 3$ an operation $f_k$ of arity $k$ whose restriction to  $V_i$ is injective for all $i\in\{1,\ldots,r\}$, and consider the operation 
    \[ h_k(x_1,\dots,x_k) := f_k(g_k(x_1,\dots,x_k),g_k(x_2,\dots,x_k,x_1),\dots,g_k(x_k,x_1,\dots,x_{k-1})).\]
    Then evaluating $h_{k}(a,b,\dots,b)$ for arbitrary $a,b\in \rel N$,
    all the arguments of $f_k$ belong to the same set $V_i$, by the fact that $g_k$ acts on $\rel N/{\Aut(\rel A)}$ as a weak near-unanimity operation.
    Since $\rel A$ is an expansion of $(\rel N;V_1,\dots,V_r)$, we obtain that $h_k(a,b,\dots,b)$ belongs to $V_i$.
    The same is true for any permutation of the tuple $(a,b,\dots,b)$, so that $h_k$ acts as a WNU operation on $\rel N/{\Aut(\rel A)}$.
    By the injectivity of $f_k$ when restricted to $V_i$, it also follows that $h_k$ acts as a  WNU operation on $\rel N^2/{\Aut(\rel A)}$;  hence, it is a pseudo-WNU operation since it preserves the equivalence of orbits of $2$-tuples under  $\Aut(\rel N;V_1,\dots,V_r)$ and since the stabilized partition is $2$-homogeneous. 
    Taking for every $h_k$ an operation in $\can{\Pol(\rel A)}$ locally interpolated by $h_k$ modulo $\Aut(\rel A)$ which exists by Lemma~\ref{lem:canonisation-unary}, we see that $\can{\Pol(\rel A)}$ contains pseudo-WNU operations of all arities $\geq 3$, and hence it is not equationally affine.
\end{proof}

Before finally proving Theorem~\ref{thm:characterization-unary}, we state the following standard result from~\cite{Valeriote} that allows us to derive, under the assumption that $\can{\Pol(\rel A)}\actson A/{\Aut(\rel A)}$ is equationally affine, the existence of an invariant subset $S\subseteq A$ and an equivalence relation $\sim$ on $S$ with finitely many classes such that $\can{\Pol(\rel A)}\actson S/{\sim}$ is itself an affine clone.

\begin{proposition}[Consequence of Proposition 3.1 from~\cite{Valeriote}]\label{prop:eq-affine_subfactor}
    Let $\cC$ be a an idempotent function clone on a finite domain that is equationally affine. Then there exists a minimal subfactor $(S,\sim)$ of $\cC$ such that $\cC$ acts on the $\sim$-classes as an affine clone.
\end{proposition}

\begin{proof}[Proof of Theorem~\ref{thm:characterization-unary}]
Let $\rel A$ as in Theorem~\ref{thm:characterization-unary} be given; by the remark following that theorem, we may without loss of generality  assume that $\rel A$ is a first-order expansion of a stabilized partition $(\rel N;V_1,\ldots,V_r)$. 
Assume henceforth that $\can{\Pol(\rel A)}$ is  equationally affine; we show that $\Pol(\rel A)$ has a uniformly continuous minion homomorphism to an affine clone.

If $\Pol(\rel A)$ has a continuous clone homomorphism to $\Proj$, then we are done. Assume therefore the contrary; then by Lemma~\ref{lem:injections}, $\Pol(\rel A)$ contains for all $k\geq 2$ a $k$-ary operation whose restriction to $V_i$ is injective for all $1\leq i\leq r$. In particular, Proposition~\ref{prop:from-unary} applies, and thus  $\can{\Pol(\rel A)}\actson \rel N/{\Aut(\rel A)}$ is equationally affine. Proposition~\ref{prop:eq-affine_subfactor} yields that there exists a minimal subfactor $(S,\sim)$ of $\can{\Pol(\rel A)}$ such that $\can{\Pol(\rel A)}$ acts on the $\sim$-classes as an affine clone.

Let $R$ be any $\Pol(\rel A)$-invariant cyclic relation with support $\langle S\rangle_{\Pol(\rel A)}$, containing a  tuple with components in pairwise distinct $\Aut(\rel A)$-orbits and which intersects at least two $\sim$-classes.  
By Proposition~\ref{prop:approximation-dichotomy},   $R$ either gives rise to a $\Pol(\rel A)$-invariant approximation of $\sim$, or it contains a pseudo-loop with respect to $\Aut(\rel A)$.
In the first case, the presence of the tuple required above implies smoothness of the approximation: if $t\in R$ is such a tuple, $c\in \langle S\rangle_{\Pol(\rel A)}$ appears in $t$, and $d\in \langle S\rangle_{\Pol(\rel A)}$ belongs to the same $\Aut(\rel A)$-orbit as $c$, then there exists an element of $\Aut(\rel A)$ which sends $c$ to $d$ and fixes all other elements of $t$.
Hence, $c$ and $d$ are linked in $R$, and the entire $\Aut(\rel A)$-orbit of $c$ is contained in a class of the linkedness relation of $R$.
Thus,  $\Pol(\rel A)$ admits a uniformly continuous minion homomorphism to an affine clone  by Theorem~\ref{thm:fundamental}. 

Hence we may assume that for any $R$ as above the second case holds.  
We now show that this leads to a contradiction, finishing the proof of Theorem~\ref{thm:characterization-unary}. 
We apply Lemma~\ref{lem:n-ary-weird-cyclic} with $n=1$, $m\geq 2$ arbitrary and $P$ the set of  $m$-tuples with entries in pairwise distinct $\Aut(\rel A)$-orbits  within $\langle S\rangle_{\Pol(\rel A)}$, i.e., $P=\{(a_1,\dots,a_m)\in (\langle S\rangle_{\Pol(\rel A)})^m\mid \forall 1\leq i<j\leq m, \forall\alpha \in \Aut(\rel A), \alpha(a_i)\neq a_j\}.$
We obtain an $m$-ary  function $f\in\Pol(\rel A)$ with the property that the tuple \[(f(a_0,\ldots,a_{m-1}),\ldots, f(a_1,\ldots,a_{m-1},a_0))\] intersects at most one $\sim$-class whenever it has entries in pairwise distinct $\Aut(\rel A)$-orbits, 
for all $a_0,\ldots,a_{m-1}\in S$.   
Let $(\rel A,<)$ be the expansion of $\rel A$ by a linear order that is convex with respect to the partition $V_1,\ldots,V_r$ and  dense and without endpoints  on every infinite set of the partition. 
The structure $(\rel A,<)$ 
is a Ramsey structure, since $\Aut(\rel A,<)$ is isomorphic as a  permutation group to the action of the product $\prod^r_{i=1} \Aut(V_i;<)$, and each of the groups of the product is either trivial or the automorphism group of  a Ramsey structure~\cite{Topo-Dynamics}. 
By Theorem~\ref{thm:canonical-homo} we may assume that $f$ is diagonally canonical with respect to $\Aut(\rel A,<)$. Let $a,a'\in A^m$ be so that  $a_i,a_i'$ belong to the same orbit with respect to $\Aut(\rel A)$ for all $1\leq i\leq m$. Then there exists $\alpha\in \Aut(\rel A,<)$ such that $\alpha(a)=a'$, and hence $f(a)$ and $f(a')$ lie in the same $\Aut(\rel A)$-orbit by diagonal canonicity; hence $f$ is $1$-canonical with respect to $\Aut(\rel A)$. Applying Lemma~\ref{lem:canonisation-unary}, we obtain a canonical function $g\in\can{\Pol(\rel A)}$ which acts like $f$ on $\rel N/{\Aut(\rel A)}$.  The property of $f$ stated above then implies for $g$ that  $g(a_0,\ldots,a_{m-1})\sim g(a_1,\ldots,a_{m-1},a_0)$ for all $a_0,\ldots,a_{m-1}\in S$  such that the values  $g(a_0,\ldots, a_{m-1}),\ldots, g(a_{m-1},a_0,\ldots,a_{m-2})$  lie in pairwise distinct $\Aut(\rel A)$-orbits. 

By the choice of $(S,\sim)$ we have that $\can{\Pol(\rel A)}$ acts on $S/{\sim}$ by affine functions over a finite module. We use the symbols $+,\cdot$ for the addition and multiplication in the corresponding ring, and also $+$ for the addition in the module and $\cdot$ for multiplication of elements of the module with elements of the ring. We denote by $1$ the multiplicative identity of the ring, by $-1$ its additive inverse,  and identify their  powers in the additive group with the non-zero integers. The domain of the module is $S/{\sim}$, and we denote the identity element  of its additive group by $[a_0]_\sim$. Pick an arbitrary  element $[a_1]_\sim\neq [a_0]_\sim$ from $S/{\sim}$, and let $m\geq 2$ be its order in the additive group of the module, i.e., the minimal positive number such that $m\cdot [a_1]_\sim=[a_0]_\sim$.
For $i\in\{2,\dots,m-1\}$, let $a_i$ be an arbitrary element such that $[a_i]_{\sim} = i\cdot [a_1]_\sim$.
Let $g\in\can{\Pol(\rel A)}$ be the $m$-ary operation obtained in the preceding paragraph. If the values $g(a_0,\ldots,a_{m-1}),\ldots,g(a_{m-1},a_0,\ldots,a_{m-2})$ lie in pairwise distinct $\Aut(\rel A)$-orbits, then (computing indices modulo $m$) we have that $g([a_0]_\sim,\ldots,[a_{m-1}]_\sim),
\ldots, g([a_{m-1}]_\sim,\ldots,[a_{m+m-1}]_\sim)$ are all equal. 
If on the other hand they do  not, then $g([a_k]_\sim,\ldots,[a_{k+m-1}]_\sim)=  g([a_{k+j}]_\sim,\ldots,[a_{k+j+m-1}]_\sim)$ 
for some $0\leq k<m$ and  $1\leq j<m$. Hence, in either case we may assume the latter equation holds. 
By assumption, $g$ acts on $S/{\sim}$ as an affine map, i.e., as a map of the form  $(x_0,\dots,x_{m-1})\mapsto\sum_{i=0}^{m-1} c_i\cdot x_i$, where $c_0,\dots,c_{m-1}$ are elements of the  ring which sum up to $1$.
We compute (with indices to be read  modulo $m$)
\begin{align*}
    [a_0]_\sim &= g([a_{k+j}]_\sim,\dots,[a_{k+j+m-1}]_\sim)\, +\, (-1)\cdot  g([a_k]_\sim,\dots,[a_{k+m-1}]_\sim)\\
    &= \sum_{i=0}^{m-1} c_i\cdot [a_{k+j+i}]_\sim \, + \, (-1)\cdot   \sum_{i=0}^{m-1} c_i\cdot [a_{k+i}]_\sim \\
    &= \sum_{i=0}^{m-1} c_i\cdot (k+i+j)\cdot [a_1]_\sim\, + \, (-1)\cdot  \sum_{i=0}^{m-1} c_i\cdot (k+i)\cdot [a_1]_\sim\\
    &= \left(\sum_{i=0}^{m-1} c_i\right) \cdot j\cdot  [a_1]_\sim = j\cdot [a_1]_\sim.
\end{align*}
But $j\cdot [a_1]_\sim\neq [a_0]_\sim$ since the order of $[a_1]_\sim$ equals $m>j$, a contradiction.
\end{proof}

\subsection{Proof of Corollary~\ref{cor:explicitcollapse}}\label{sect:corollary}

Before proceeding with the proof of Corollary~\ref{cor:explicitcollapse}, we summarize the currently known characterizations of structures with bounded width.
\begin{corollary}[Theorems 39 and 57 in~\cite{SmoothApproximations}, Theorems 21 and 67 in~\cite{PosetCSP}, Theorem 8.1 in~\cite{BodirskyW12}, Theorem 1 in~\cite{ecsps}, Theorem~\ref{thm:characterization-unary}]\label{recap-bwidth-classifications}
    Let $\rel A$ be a model-complete core structure that has bounded width. Let $\mathscr C$ be the clone of polymorphisms of $\rel A$ that are $\Aut(\rel A)$-canonical. If $\rel A$ is a first-order reduct of:
    \begin{itemize}
        \item the universal homogeneous graph or tournament, or of a unary structure, then $\mathscr C$ contains pseudo-WNU operations modulo $\overline{\Aut(\rel A)}$ of all arities $n\geq 3$;
        \item $(\mathbb N;=)$, the universal homogeneous $K_n$-free graph $\rel H_n$, where $n\geq 3$, or the countably infinite equivalence relation with infinitely many equivalence classes all of infinite size, or the universal homogeneous partial order $\mathbb P$, but not of $(\mathbb Q,<)$, then $\mathscr C$ contains pseudo-totally symmetric operations modulo $\overline{\Aut(\rel A)}$ of all arities.
    \end{itemize}
\end{corollary}
Now, we can finally prove Corollary~\ref{cor:explicitcollapse} from Section~\ref{sect:intro}.

\explicitcollapse*

\begin{proof}
 For the first item, let $\rel A$ be a  first-order reduct  of $\rel{G}$ or $\rel{T}$. Then $\rel A$ has a model-complete core by~\cite{cores}; this model-complete core has the same relational width as $\rel A$. Moreover, by Lemma 17 and Lemma 46 in~\cite{SmoothApproximations}, this model-complete core is again a first-order reduct of $\rel G$ or $\rel T$, respectively, or a one-element structure.
 In the latter case, $\rel A$ has relational width $(1,1)$; hence, we may assume that  $\rel A$ is itself a model-complete core. It then follows from Corollary~\ref{recap-bwidth-classifications} that $\rel A$ has, for every $n\geq 3$, an $\Aut(\rel A)$-canonical polymorphism of arity $n$ that is a pseudo-WNU operation modulo $\overline{\Aut(\rel A)}$.
Since both $\rel{G}$ and $\rel{T}$ are $2$-homogeneous and $3$-bounded, our claim  follows from our Theorem~\ref{thm:canonicalbwidth}

By Lemma 6.7 in~\cite{ReductsUnary}, the set of first-order reducts of unary structures is closed under taking model-complete cores. Let therefore $\rel A$ be a first-order reduct of a unary structure $\rel B$ which is a model-complete core. Then $\rel A$ is also a first-order reduct of a unary structure $\rel C=(C;V_1,\dots,V_r)$, where $V_i$ either contains only one element or is infinite and where $V_i$ and $V_j$ are disjoint for every $1\leq i,j\leq r$ with $i\neq j$.  
Moreover, $\rel C$ is $2$-homogeneous since the orbit of every tuple under $\Aut(\rel B)$ is determined by the unary relations and equalities holding on pairs of its elements. We claim it is $2$-bounded. Indeed, every finite structure in the signature of $\rel C$ embeds into $\rel C$ unless it contains an element contained in two relations $V_i, V_j$ with $1\leq i, j \leq r$, $i\neq j$, or it contains two distinct elements 
contained in a relation $V_i$ for some $1\leq i \leq r$ with $|V_i^{\rel C}|=1$.  Hence, by appeal to Theorem~\ref{thm:canonicalbwidth} and Corollary~\ref{recap-bwidth-classifications}
in the present paper, our claim  holds for this class as well.

If $\rel A$ is a first-order reduct of $\rel{H}_n$, $(\rel{N}; =)$ or $\rel{C}^{\omega}_{\omega}$ that has bounded width and is a model-complete core,
then it has $\Aut(\rel A)$-canonical pseudo-totally symmetric polymorphisms modulo $\overline{\Aut(\rel A)}$ of all arities, by Corollary~\ref{recap-bwidth-classifications} above.
Since $\rel{H}_n$ is $2$-homogeneous and $n$-bounded,  and since both $(\rel{N}; =)$ and $\rel{C}^{\omega}_{\omega}$ are $2$-homogeneous and $3$-bounded, 
the claimed bound  follows.

Finally, a first-order reduct of $\rel{P}$ with bounded width is either homomorphically equivalent to a first-order reduct of $(\rel{Q}; <)$ or it satisfies the algebraic condition in the second item of Theorem~\ref{thm:canonicalbwidth} by Corollary~\ref{recap-bwidth-classifications}.  In the latter case we are done by Theorem~\ref{thm:canonicalbwidth}, in the former we appeal to the syntactical characterization of first-order reducts of $(\rel{Q}; <)$. Indeed, such a structure has bounded width if and only if it is definable by a conjunction of so-called Ord-Horn clauses~\cite{Rydval:2020}. It then follows by~\cite{NebelBueckert} that 
a first-order reduct of $(\rel{Q}; <)$ with bounded width has relational width at most $(2,3)$. The result for $\rel{P}$ follows.
\end{proof}

The following example shows that for some of the structures under consideration, the bounds on relational width provided by  Corollary~\ref{cor:explicitcollapse} are tight.

\begin{example}
To show the tightness of the bound in the case of the  universal homogeneous graph $\rel G=(G;E)$, we exhibit a first-order reduct $\rel A$ such that for all $i\leq j$ with $i\leq 4$, if $1\leq i<4$ or $1\leq j<6$, then there exists a non-trivial, $(i,j)$-minimal instance of $\Csp(\rel A)$ that has no solution.
Let $N:=(G^2\setminus E)\cap{\neq}$.
Consider the first-order reduct $\rel{A}:=(G;R_{=},R_{\neq})$ of $\rel G$, where $R_{=}:=\{(a,b,c,d)\in G^4\mid E(a,b)\wedge E(c,d)\text{ or }N(a,b)\wedge N(c,d)\}$ and $R_{\neq}:=\{(a,b,c,d)\in G^4\mid E(a,b)\wedge N(c,d)\text{ or }N(a,b)\wedge E(c,d)\}$.

It can be seen that $\rel A$ has bounded width, so that Corollary~\ref{cor:explicitcollapse} implies that $\rel A$ has relational width at most $(4,6)$.
The instance $\instance=(\{v_1,\dots,v_4\},\{C_=,C_{\neq}\})$, where $C_=:=\{f\in A^{\{v_1,\dots,v_4\}}\mid f(v_1,\dots,v_4)\in R_=\}$, $C_{\neq}:=\{f\in A^{\{v_1,\dots,v_4\}}\}\mid f(v_1,\dots,v_4)\in R_{\neq}\}$, is non-trivial, and $(i,j)$-minimal for all $i\leq j$ with $1\leq i<4$ since all variables are contained in the scopes of both constraints and the projection of any of the constraints to any subset of $\{v_1,\dots,v_4\}$ containing at most $3$ elements is equal to the projection of $\{f\in A^{\{v_1,\dots,v_4\}}\mid f(v_1)\neq f(v_2), f(v_3)\neq f(v_4)\}$ to this subset. However, $\instance$ 
has no solution.
Moreover, the $(4,5)$-minimal instance equivalent to the instance $\cJ=(\{v_1,\dots,v_6\},\{D_1,D_2,D_3\})$, where $D_1:=\{f\in A^{\{v_1,\dots,v_4\}}\mid f(v_1,\dots,v_4)\in R_{\neq}\}$, $D_2:=\{f\in A^{\{v_3,\dots,v_6\}}\mid f(v_3,\dots,v_6)\in R_{\neq}\}$, and $D_3:=\{f\in A^{\{v_1,v_2,v_5,v_6\}}\mid f(v_1,v_2,v_5,v_6)\in R_=\}$, 
is non-trivial and has no solution.
It follows that the exact relational width of $\rel A$ is $(4,6)$.
\end{example}

The tightness of the bound for first-order reducts of the universal homogeneous tournament can be shown similarly. However, it is an open question whether the bound is tight for first-order reducts of unary structures (for more details, see~\cite{infinitesheep}).

The bound on relational width provided in the second item of Corollary~\ref{cor:explicitcollapse} is tight.
Indeed, let $n\geq 3$, let $\rel  H_n:=(H_n;E)$ be the universal homogeneous $K_n$-free graph, let $N:=((H_n)^2\backslash E)\cap \neq$, and let $\rel A:=(H_n;E,N)$. $\rel A$ is preserved by canonical pseudo-totally symmetric operations modulo $\overline{\Aut(\rel H_n)}$ of all arities and therefore has relational width at most $(2,n)$ by Theorem~\ref{thm:canonicalbwidth} (setting $\rel B:=\rel H_n$). But the non-trivial, $(2,n-1)$-minimal instance $\instance=(\{v_1,\dots,v_n\},\{C_{i,j}\mid 1\leq i\neq j\leq n\})$, where $C_{i,j}:=\{f\in A^{\{v_i,v_j\}}\mid (f(v_i),f(v_j))\in E\}$ for every $1\leq i\neq j\leq n$,
has no solution; moreover, the instance $\cJ=(\{v_1,v_2\},\{\{f\in A^{\{v_1,v_2\}}\mid (f(v_1),f(v_2))\in E\},\{f\in A^{\{v_1,v_2\}}\mid (f(v_1),f(v_2))\in N\}\})$ is non-trivial, $(1,j)$-minimal for every $j\geq 1$ and has no solution either.

For the structures from the third item of Corollary~\ref{cor:explicitcollapse}, the tightness of the bound can be shown similarly.

\subsection{MMSNP}\label{sect:mmsnp}

Monotone Monadic SNP (MMSNP) is a fragment of existential second order logic that was discovered by Feder and Vardi in their seminal paper~\cite{FederVardi}.
MMSNP is defined as the class of formulas of the form

\[ \Phi:=\exists M_1\dots\exists M_r \forall x_1\dots\forall x_k\bigwedge_i \neg(\alpha_i\land\beta_i) \]
where $M_1,\dots,M_r$ are unary predicates, each  $\alpha_i$ is a  conjunction of positive atoms in a signature $\tau$ not containing the equality (called the \emph{input signature}) and each $\beta_i$ is a conjunction of positive or negative existentially quantified unary predicates.
We say that $\Phi$ is \emph{connected} if every conjunction $\alpha_i$ is connected, that is, no $\alpha_i$ can be written as the conjunction of two non-empty formulas which do not share any variables.
A simple syntactic procedure shows that every $\Phi$ is equivalent to a disjunction of connected MMSNP sentences, which can be computed in exponential time.

As in the case of graphs, we call a structure \emph{connected} if it is not the disjoint union of two of its proper substructures. Let $\tau$ be a relational signature,  let  $\sigma$ be a unary signature whose relations are called the \emph{colors}, and let  $\mathcal F$ be a finite set of finite connected  $(\tau\cup\sigma)$-structures, with the property that every element of any structure in $\mathcal F$ has exactly one color. 
We call $\mathcal F$ a \emph{colored obstruction set} in the following.
The problem $\fpp(\mathcal F)$ takes as input a $\tau$-structure $\rel X$ and asks whether there exists a $(\tau\cup\sigma)$-expansion $\rel X^*$ of $\rel X$ whose vertices are all colored with exactly one color and such that for every $\rel F\in\mathcal F$, there exists no homomorphism from $\rel F$ to $\rel X^*$. We say in this case that the coloring $\rel X^*$ is \emph{$\mathcal F$-free} or \emph{obstruction-free}.
It can be seen that connected MMSNP and FPP are equivalent: for every connected MMSNP sentence $\Phi$, there exists a set $\mathcal F$ such that $\rel X\models\Phi$ if and only if $\rel X$ is a yes-instance to $\fpp(\mathcal F)$, for any $\tau$-structure $\rel X$.
Conversely, given any $\mathcal F$, there exists a connected MMSNP sentence $\Phi$ as above.
We call $\mathcal F$ a \emph{colored obstruction set associated with} $\Phi$.

Every connected MMSNP sentence $\Phi$ has an equivalent \emph{normal form} $\Psi$ that can be computed from $\mathcal F$ in double exponential time~\cite[Lemma 4.4]{MMSNP-journal}.
It is shown in~\cite[Definition 4.12]{MMSNP-journal} that for every MMSNP sentence $\Phi$ in normal form, there exists an $\omega$-categorical structure $\rel A_\Phi$ (denoted in~\cite{MMSNP-journal} by $\mathfrak C^\tau_\Phi$) such that for any $\tau$-structure $\rel X$ it is true that $\rel X\models\Phi$ if and only if $\rel X$ admits a homomorphism to $\rel A_\Phi$ (which can equivalently be taken to be injective)~\cite[Lemma 4.13]{MMSNP-journal}.
    
Additionally, $\Phi$ is in \emph{strong} normal form if any identification of two existentially quantified predicates in $\Phi$ yields an inequivalent sentence. 
Finally, we say that $\Phi$ is \emph{precolored} if for every   symbol $M\in\sigma$, there is an associated unary symbol $P_M\in\tau$, and $\Phi$ contains the conjunct $\neg(P_M(x)\land M'(x))$ for every color $M'\in\sigma\setminus\{M\}$.
Note that any precolored sentence in normal form is automatically in strong normal form (identifying two unary predicates $M,M'$ would yield a sentence $\Psi$ containing the conjunct $\neg (P_M(x)\land M(x))$, and $\Phi$ and $\Psi$ can be separated by a $1$-element structure whose sole vertex is in $P_M$).
Any sentence $\Phi$ has a \emph{standard precoloration} obtained by adding for every symbol $M\in\sigma$ a predicate $P_M$ to $\tau$ and for every $M'\in\sigma\setminus\{M\}$ the conjunct $\neg(P_M(x)\land M'(x))$ to $\Phi$. The colored obstruction set of the standard precoloration of $\Phi$ consists of the obstruction set for $\Phi$ together with one-element obstructions $\rel F$ whose sole vertex belongs to $P_M$ and $M'$ in $\rel F$ and all other relational symbols are interpreted as empty relations.

A useful (but imprecise) parallel between MMSNP and finite-domain CSPs is the following. Connected MMSNP sentences in normal form, in strong normal form, and in precolored normal form have the same relationship as do finite structures, finite cores, and expansions of cores by all singleton unary relations.

When $\Phi$ is in strong normal formal or precolored, $\rel A_\Phi$ can additionally be chosen to have the following properties:
\begin{enumerate}[label=(\roman*)]
    \item If $\Phi$ is precolored, then the orbits of the elements of  $\rel A_{\Phi}$ under $\Aut(\rel A_{\Phi})$ correspond to the colors of ${\Phi}$ and  coincide with the interpretations of the corresponding symbols in $\tau$~\cite[Lemma 6.2]{MMSNP-journal}. 
    In particular, the action of  $\cann[1]{\Pol(\rel A_{\Phi})}$ on $\Aut(\rel A_{\Phi})$-orbits of elements is idempotent~\cite[Proposition 7.2]{MMSNP-journal}.\label{itm:colors-orbits}
    \item The expansion of $\rel A_\Phi$ by a generic linear order $<$ is a Ramsey structure~\cite[Corollary 5.9]{MMSNP-journal}.
    By Theorem~\ref{thm:canonical-homo}, every $f\in\Pol(\rel A_{\Phi})$ locally
    interpolates an operation $g\in\cann[1]{\Pol(\rel A_{\Phi})}$, and every $f$ diagonally interpolates an operation $f'$ that is diagonally canonical with respect to $\Aut(\rel A_{\Phi},<)$.\label{itm:interpolation-mmsnp}
\end{enumerate}

We finally solve the Datalog-rewritability problem for MMSNP and prove that a precolored connected sentence $\Phi$ in normal form is equivalent to a Datalog program if and only if the action of $\cann[1]{\Pol(\rel A_{\Phi})}$ on ${\Aut(\rel A_{\Phi})}$-orbits of elements is not equationally affine.

We will need the following results from Ramsey theory.

\begin{theorem}[Consequence of Theorem 2.11 from~\cite{HubickaNesetril}]\label{thm:HN}
    Let $\Phi$ be an MMSNP sentence in normal form. Let $\rel A^*$ be the expansion of $\rel A_{\Phi}$ by all pp-definable relations.
    Then there exists a countably infinite homogeneous structure $(\rel H,<)$ whose finite substructures are exactly the finite substructures of structures satisfying $\Phi$ and expanded by all pp-definable relations and by an arbitrary linear order.
    Moreover, $(\rel H,<)$ is Ramsey.
\end{theorem}

\begin{proof}
    See Appendix A in~\cite{MMSNP-journal} for the translation of Theorem 2.11 from~\cite{HubickaNesetril} into the MMSNP framework.
\end{proof}

The following proposition is proved in~\cite[Lemma 7.5]{MMSNP-journal} in the case where $m=2$.
We give the proof for the convenience of the reader.
\begin{proposition}\label{prop:mmsnp-property}
    Let $\Phi$ be a precolored MMSNP sentence in normal form and let $m\geq 1$.
    There exist self-embeddings $e_1,\dots,e_m$ of $\rel A_{\Phi}$ such that
    $(e_{i_1}(a_1),\dots,e_{i_m}(a_m))$
    and
    $(e_{j_1}(b_1),\dots,e_{j_m}(b_m))$
    are in the same orbit under $\Aut(\rel A_{\Phi},<)$ provided that:
    \begin{itemize}
        \item $a_k$ and $b_k$ are in the same color for all $k\in\{1,\dots,m\}$
        \item $a_k$ and $a_\ell$ are in distinct colors for all $k\neq \ell$,
        \item  $\{i_1,\dots,i_m\}=\{j_1,\dots,j_m\}=\{1,\dots,m\}$.
    \end{itemize}
\end{proposition}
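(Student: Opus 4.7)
The plan is to use the Fra\"{\i}ss\'{e}-style construction of $\rel A_{\mathcal F}$ from~\cite{MMSNP}, and in particular the freedom it gives in choosing the linear order $<$, to build an embedding whose image is arranged so that the $<$-order of any $m$-tuple with pairwise distinct colors is determined entirely by those colors. Once such an embedding is in hand the orbit condition follows from homogeneity of $(\rel A_{\mathcal F},<)$: the quantifier-free atomic type of a tuple as in the statement will depend only on the sequence of colors of its entries, and matching colors will force matching orbits.

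First I would upgrade the hypothesis from a homomorphism to an embedding. A homomorphism $h\colon \rel B\to\rel A_{\mathcal F}$ pulls back the unary predicates $P_M$ to a coloring of $\rel B$ avoiding every $\rel F\in\mathcal F$, since otherwise composition with $h$ would furnish a homomorphism of $\rel F$ into $\rel A_{\mathcal F}$. Hence $\rel B$ is a yes-instance of $\fpp(\mathcal F)$ and embeds into $\rel A_{\mathcal F}$. The disjoint union $\rel C:=\{1,\dots,m\}\times\rel B$, colored copy-wise, still avoids $\mathcal F$ because obstructions are connected and therefore cannot span distinct copies; so $\rel C$ also embeds into $\rel A_{\mathcal F}$.

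Next I would enumerate the finitely many colors as $M_1,\dots,M_r$ and equip $\rel C$ with a linear order $<_C$ that places every element of color $M_s$ before every element of color $M_t$ whenever $s<t$, breaking ties arbitrarily (say by copy index and then by some fixed order inside $\rel B$). The class of finite linearly ordered colored $\mathcal F$-free structures is an amalgamation class whose Fra\"{\i}ss\'{e} limit is $(\rel A_{\mathcal F},<)$; since $<_C$ is consistent with this class (the construction in~\cite{MMSNP} makes the order fully generic over the MMSNP data), this yields an embedding $e\colon (\rel C,<_C)\hookrightarrow(\rel A_{\mathcal F},<)$. Now consider two tuples $(e(i_1,a_1),\dots,e(i_m,a_m))$ and $(e(j_1,b_1),\dots,e(j_m,b_m))$ as in the statement. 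Since $a_k$ and $b_k$ share a color and since within each tuple the colors are pairwise distinct, the color-sorted nature of $<_C$ forces both tuples to have the same $<$-ordering after embedding. Because the $i_k$'s and $j_k$'s each exhaust $\{1,\dots,m\}$, the entries of each tuple live in pairwise distinct copies, so the induced $\tau$-substructure is trivial in both cases. The quantifier-free type in $(\rel A_{\mathcal F},<)$ of either tuple is therefore determined by the common sequence of colors, and homogeneity of $(\rel A_{\mathcal F},<)$, implicit in the Ramsey property cited in the preamble, yields an element of $\Aut(\rel A_{\mathcal F},<)$ mapping one tuple to the other.

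The main obstacle is the step of producing an embedding respecting the color-sorted $<_C$: this depends on the ordered age of $(\rel A_{\mathcal F},<)$ being rich enough to contain every linearly ordered coloring of an $\mathcal F$-free structure, which is precisely what the construction in~\cite{MMSNP} ensures. For $m=2$ this is carried out in that paper; the \emph{small alterations} needed for general $m$ are essentially combinatorial bookkeeping about how copies and colors are interleaved, with no new model-theoretic input required.
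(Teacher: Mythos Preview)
The paper does not actually prove this proposition: it states that the case $m=2$ is carried out in~\cite{MMSNP} and that only small alterations are needed for general $m$, and then omits the argument. Your sketch is therefore not competing with a proof in the paper but filling in what the paper leaves out, and it does so along the lines one would expect from the construction in~\cite{MMSNP}: embed the $m$-fold disjoint union equipped with a color-sorted linear order, and then read off the orbit equality from the fact that the quantifier-free type of a tuple whose entries lie in pairwise distinct copies and pairwise distinct colors is determined by its sequence of colors alone. Your closing paragraph essentially restates the paper's own remark about $m=2$ versus general $m$.

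One small correction: homogeneity of $(\rel A_{\mathcal F},<)$ is not a consequence of the Ramsey property, as you write; it comes from the Fra\"{\i}ss\'{e} construction in~\cite{MMSNP} (in the expanded signature with the colors, which in the precolored case is interdefinable with $\tau$ via the predicates $P_M$). This does not affect the validity of your argument, only the justification you give for that step.
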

\begin{proof}
    Let $(\rel H,<)$ be the Ramsey structure from Theorem~\ref{thm:HN}. 
    There exists a homomorphism $h\colon \rel H\to\rel A^*$ since the reduct of $\rel H$ to the input signature of $\Phi$ satisfies $\Phi$ by definition and hence, it admits a homomorphism to $\rel A_{\Phi}$; this homomorphism also preserves the additional relations in the signature of $\rel H$ and $\rel A^*$ since these relations are obtained by pp-definitions from relations in the input signature of $\Phi$. Theorem~\ref{thm:canonical-homo} yields that $h$ can moreover be assumed to be canonical from $\Aut(\rel H,<)$ to $\Aut(\rel A_{\Phi},<)$.
    
    Let $\rel B$ be 
    the expansion of $\{1,\dots,m\}\times\rel A_\Phi$, the disjoint union of $m$ copies of $\rel A_\Phi$, by all pp-definable relations.
    Endow $\rel B$ with a linear order that is convex with respect to the colors.
    Since $\Phi$ is connected, and since $\rel A_\Phi$ satisfies $\Phi$, we have that $\rel B$ satisfies $\Phi$.
    All the finite substructures of $(\rel B,<)$ then embed into $(\rel H,<)$, and by compactness there exists an embedding $e'$ of $(\rel B,<)$ into $(\rel H,<)$.
    Let $e_i(x):=(h\circ e')(i,x)$.
    
    To check that these self-embeddings satisfy the required properties, let $a_1,b_1,\dots,a_m,b_m$ and $i_1,j_1,\dots,i_m,j_m$ be as in the statement.
    Note that since $a_k$ and $b_k$ are in the same color for all $k$, they are in the same orbit in $\rel A_{\Phi}$ by the fact that $\Phi$ is precolored~\ref{itm:colors-orbits}.
    They in particular satisfy the same primitive positive formulas, meaning that $e'$ maps $(i_k,a_k)$ and $(j_k,b_k)$ to elements of $\rel H$ that satisfy the same atomic formulas, and that are therefore in the same orbit as $\rel H$ is homogeneous.
    Note moreover that by definition of the order on $\rel B$, $(i_k,a_k)<(i_\ell,a_\ell)$ if and only if $(j_k,b_k)<(j_\ell,b_\ell)$, and that no other atomic relation holds within the tuples $((i_1,a_1),\dots,(i_m,a_m))$ and $((j_1,b_1),\dots,(j_m,b_m))$.
    Thus, the required tuples are in the same orbit in $(\rel H,<)$, by the  homogeneity of $(\rel H,<)$.
    Since $h$ is canonical, we obtain that their image under $h$ is in the same orbit in $(\rel A_{\Phi},<)$.
\end{proof}

\begin{restatable}{lemma}{canonisationmmsnp}\label{lem:canonisation-mmsnp}
Let $(S,\sim)$ be a subfactor of $\cann[1]{\Pol(\rel A_{\Phi})}$ with $\Aut(\rel A_{\Phi})$-invariant $\sim$-classes. Let $m\geq 2$, and let $f\in\Pol(\rel A_{\Phi})$ be such that for all $a_1,\ldots,a_m\in A_{\Phi}$, if the entries
of the tuple $(f(a_1,\dots,a_m),f(a_2,\dots,a_m,a_1),\dots,f(a_m,a_1,\dots,a_{m-1}))$ all belong to different colors, then the tuple intersects at most one $\sim$-class.
    Let $O_0,\dots,O_{m-1}\in S$ be pairwise distinct orbits under $\Aut(\rel A_{\Phi})$.
    There exists $g\in\cann[1]{\Pol(\rel A_{\Phi})}$ that is locally interpolated by $f$ and that satisfies
    \[g(O_k,\dots,O_{k+m-1}) \sim g(O_{j+k},\dots,O_{j+k+m-1})\tag{$\star$}\]
    for some $0\leq k<m$ and $1\leq j<m$ (where the indices are computed modulo $m$).
\end{restatable}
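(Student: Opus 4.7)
The plan is to combine the two interpolation results stated in the MMSNP preliminaries with the combinatorial flexibility provided by Proposition~\ref{prop:mmsnp-property}. First, applying the diagonal-interpolation fact, we may replace $f$ by an operation $f'$ diagonally canonical with respect to $\Aut(\rel A_{\mathcal F}, <)$. The hypothesis on $f$ is inherited by $f'$: on any finite set $F$, $f' = \beta \circ f \circ (\alpha, \ldots, \alpha)$ for some $\alpha, \beta \in \Aut(\rel A_{\mathcal F}, <) \subseteq \Aut(\rel A_{\mathcal F})$; since such automorphisms preserve both the $\Aut(\rel A_{\mathcal F})$-orbits of elements (the colors) and the $\sim$-classes (which are $\Aut(\rel A_{\mathcal F})$-invariant by hypothesis on $(S, \sim)$), the dichotomy in the statement of the hypothesis transfers verbatim to $f'$. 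The local-interpolation fact then yields $g \in \cann[1]{\Pol(\rel A_{\mathcal F})}$ locally interpolated by $f'$, and by transitivity of interpolation, $g$ is locally interpolated by $f$.

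Next, we verify $(\star)$ using Proposition~\ref{prop:mmsnp-property}. Embed $m$ disjoint copies of a finite structure $\rel B$ with vertices $b_0, \ldots, b_{m-1}$ of colors $O_0, \ldots, O_{m-1}$ into $\rel A_{\mathcal F}$, giving elements $a_k^{(i)} = e(i, b_k)$ of color $O_k$ in the $i$-th copy. Form $\bar c := (a_0^{(1)}, a_1^{(2)}, \ldots, a_{m-1}^{(m)})$ and its cyclic shifts $\bar c^{(k)}$. Each $\bar c^{(k)}$ has color sequence $(O_k, O_{k+1}, \ldots, O_{k-1})$ and copy-index sequence a permutation of $\{1, \ldots, m\}$; hence by Proposition~\ref{prop:mmsnp-property}, all tuples of this form lie in a single $\Aut(\rel A_{\mathcal F}, <)$-orbit $T_k$. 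Diagonal canonicity of $f'$ then implies that $f'(\bar c^{(k)})$ lies in a single color $C_k$ depending only on $k$. Applying the hypothesis of the lemma to $\bar c$ gives: either all $C_0, \ldots, C_{m-1}$ are pairwise distinct and the $f'(\bar c^{(k)})$ all share a single $\sim$-class, or $C_k = C_{k+j}$ for some $k,j$. In either case we obtain indices $0 \leq k < m$ and $1 \leq j < m$ with $C_k \sim C_{k+j}$, viewing $\sim$ as an equivalence on colors, which is well defined because $\sim$-classes are unions of $\Aut(\rel A_{\mathcal F})$-orbits.

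Finally, transfer this relation to the $1$-canonical $g$. By $1$-canonicity, the orbit $\Omega_k := g(O_k, O_{k+1}, \ldots, O_{k-1})$ is well defined; it suffices to show $\Omega_k = C_k$, so that $\Omega_k \sim \Omega_{k+j}$ and $(\star)$ follows. Evaluating $g$ on $\bar c^{(k)}$ via local interpolation of $g$ by $f'$ on a finite set containing all $a_k^{(i)}$ gives $g(\bar c^{(k)}) = \beta'(f'(\bar d^{(k)}))$ for a certain tuple $\bar d^{(k)}$ of color sequence $(O_k, O_{k+1}, \ldots, O_{k-1})$. The main obstacle is that the coordinate-dependent automorphisms $\alpha'_0, \ldots, \alpha'_{m-1}$ appearing in the local interpolation may place $\bar d^{(k)}$ outside $T_k$, so that diagonal canonicity of $f'$ does not directly apply; the proof resolves this by enlarging the embedding of Proposition~\ref{prop:mmsnp-property} to use sufficiently many copies so that the resulting $\bar d^{(k)}$ still satisfies the ``distinct colors, permutation copy indices'' condition witnessed by a suitable embedding, placing it in an orbit on which diagonal canonicity gives $f'(\bar d^{(k)}) \in C_k$. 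Post-composition by $\beta' \in \overline{\Aut(\rel A_{\mathcal F})}$ preserves $\Aut(\rel A_{\mathcal F})$-orbits, yielding $\Omega_k = C_k$ and completing the proof.
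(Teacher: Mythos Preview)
Your overall outline—diagonally canonise $f$, pass to a $1$-canonical $g$, and use Proposition~\ref{prop:mmsnp-property} to find the indices $k,j$—parallels the paper's, but the step you yourself flag as ``the main obstacle'' is a genuine gap, and your proposed fix does not close it.

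The difficulty is precisely where you locate it: after obtaining $g$ from the diagonally canonical $f'$ by the generic local-interpolation fact, you have on any finite set $g = \beta' \circ f'(\alpha'_0,\ldots,\alpha'_{m-1})$ with arbitrary $\alpha'_i \in \Aut(\rel A_{\mathcal F})$. The tuple $\bar d^{(k)}$ then has the right colour pattern, but nothing places it in the $\Aut(\rel A_{\mathcal F},<)$-orbit $T_k$ on which you computed $C_k$. Your fix, ``enlarging the embedding to use sufficiently many copies,'' cannot work: the copy structure lives only inside the image of the embedding $e$, whereas the $\alpha'_i$ are global automorphisms of $\rel A_{\mathcal F}$ that need not respect that image at all; the entries of $\bar d^{(k)}$ need not lie in the range of \emph{any} embedding furnished by Proposition~\ref{prop:mmsnp-property}. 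Since $f'$ is only diagonally canonical, there is no reason $f'(\bar d^{(k)})$ should have colour $C_k$, and the claimed identity $\Omega_k = C_k$ is unjustified. In fact different $1$-canonical functions locally interpolated by $f'$ can act differently on orbit-tuples, so a generic such $g$ simply does not carry the information you need.

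The paper sidesteps this by a different construction of $g$. After assuming $f$ diagonally canonical, it fixes self-embeddings $e_0,\ldots,e_{m-1}$ of $\rel A_{\mathcal F}$ coming from Proposition~\ref{prop:mmsnp-property} and sets $f'(x_0,\ldots,x_{m-1}) := f(e_0 x_0,\ldots, e_{m-1} x_{m-1})$. The point is that this $f'$ is \emph{already} $1$-canonical on tuples with pairwise distinct colours: any two such tuples with matching colour patterns are sent by $(e_0,\ldots,e_{m-1})$ into the same $\Aut(\rel A_{\mathcal F},<)$-orbit (by Proposition~\ref{prop:mmsnp-property}, the copy indices being $\{0,\ldots,m-1\}$), whence diagonal canonicity of $f$ gives the same output orbit. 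Canonising this $f'$ then yields a $g$ whose action on rainbow orbit-tuples agrees with that of $f'$, so $g(O_k,\ldots,O_{k+m-1}) \sim f'(O_k,\ldots,O_{k+m-1})$ is immediate rather than something to be argued. The chain leading to $(\star)$ then uses Proposition~\ref{prop:mmsnp-property} once more to permute the $e_i$'s: $f(e_0 O_k,\ldots,e_{m-1} O_{k+m-1})$ and $f(e_k O_k,\ldots,e_{k+m-1} O_{k+m-1})$ lie in the same orbit by diagonal canonicity, which connects $g$ back to the cyclic evaluations of the original $f$ on which the hypothesis of the lemma bites. The precomposition with the $e_i$ is the missing idea in your argument.
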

\begin{proof}
    Recall that by \ref{itm:interpolation-mmsnp} the expansion of  $\rel A_{\Phi}$ by a generic linear order is a Ramsey structure.
    By Theorem~\ref{thm:canonical-homo}, $f$ diagonally interpolates a function $g\in\Pol(\rel A_{\Phi})$ with the same properties and which  is diagonally canonical with respect to $\Aut(\rel A_{\Phi},<)$, and without loss of generality we can therefore assume that $f$ is itself diagonally canonical.
    
    Let $e_0,\dots,e_{m-1}$ be self-embeddings of $\rel A_{\Phi}$ with the properties stated in Proposition~\ref{prop:mmsnp-property}.
    Consider $f'(x_0,\dots,x_{m-1}):=f(e_0x_0,\dots,e_{m-1}x_{m-1})$,
    and note that $f'$ is $1$-canonical when restricted to $m$-tuples where all entries are in pairwise distinct orbits.
    Let $g \in \Pol(\rel A_{\Phi})$ be a function that is diagonally interpolated by $f'$ and which is diagonally canonical with respect to $\Aut(\rel A_{\Phi},<)$.
    In particular $g\in\cann[1]{\Pol(\rel A_{\Phi})}$ and
    $g(O_k,\dots,O_{k+m-1})$ and $f'(O_k,\dots,O_{k+m-1})$ are in $S$ and $\sim$-equivalent for all $k$.
        
    As in the proof of Theorem~\ref{thm:characterization-unary},
    there are suitable $0\leq k<m$ and  $1\leq j<m$ such that
    \[ f(e_kO_k,\dots,e_{k+m-1}O_{k+m-1})\sim f(e_{k+j}O_{k+j},\dots,e_{k+j+m-1}O_{j+k+m-1}) \]
    holds, where indices are computed modulo $m$.
    Then 
    \begin{align*}
        g(O_k,\dots,O_{k+m-1}) &\sim f(e_0O_k,\dots,e_{m-1}O_{k+m-1})\\
        &\sim f(e_{k}O_k,\dots,e_{k+m-1}O_{k+m-1})&\tag{$\ddag$}\\
        &\sim f(e_{k+j}O_{k+j},\dots,e_{k+j+m-1}O_{k+j+m-1})\\
        &\sim f(e_0O_{k+j},\dots,e_{m-1}O_{k+j+m-1})&\tag{$\ddag$}\\
        &\sim g(O_{k+j},\dots,O_{k+j+m-1}),
    \end{align*}
    where the equivalences marked $(\ddag)$ hold by the fact that $f$ is diagonally canonical with respect to $\Aut(\rel A_{\Phi},<)$ and by Proposition~\ref{prop:mmsnp-property}.
\end{proof}

Let $\Phi$ be an MMSNP sentence in normal form with input signature $\tau$ and let $\mathcal F$ be the associated colored obstruction set.
Note that $\rel A_\Phi$ satisfies $\Phi$, so that there exists a coloring of $\rel A_\Phi$ that is $\mathcal F$-free. Fix such a coloring $\rel A_\Phi^*$.
For each $\tau$-reduct $\rel F^\tau$ of a structure $\rel F\in\mathcal F$ of size $k$, fix an enumeration of the domain and assume without loss of generality that the domain is $\{1,\dots,k\}$. Let $\rel B_\Phi$ be the finite structure whose domain is the set of colors of $\Phi$, with one $k$-ary relation $R_{\rel F^\tau}$ for each $\tau$-reduct of an obstruction $\rel F\in\mathcal F$ of size $k$, and where the interpretation of $R_{\rel F^\tau}$ in $\rel B_\Phi$ contains all tuples $(C_1,\dots,C_k)$ of colors such that there exists a homomorphism $h\colon \rel F^\tau\to\rel A_\Phi$ and such that $h(i)$ is in the color $C_i$ in $\rel A^*_{\Phi}$ for all $i$.

Let $\rel A$ be a relational structure, and let $\instance=(\V,\constraints)$ be an instance of $\Csp(\rel A)$. The \emph{canonical database} of $\instance$ is the relational structure $\rel D$ in the signature of $\rel A$ such that for every relational symbol $R$ of arity $k\geq 1$, a tuple $(v_1,\dots,v_k)\in\V^k$ is contained in $R^{\rel D}$ if $\constraints$ contains the constraint $\{f\in A^{\{v_1,\dots,v_k\}}\mid f(v_1,\dots,v_k)\in R^{\rel A}\}$.

The following theorem gives a characterization of Datalog-rewritability for precolored normal forms. The proof is similar to that of Theorem~\ref{thm:characterization-unary}.
\begin{restatable}{theorem}{characterizationmmsnp}\label{thm:characterization-mmsnp}
    Let $\Phi$ be a precolored connected MMSNP $\tau$-sentence in normal form, and let $\ell$ be the maximum size of a structure in a minimal colored obstruction set associated with $\Phi$.
    The following are equivalent:
    \begin{enumerate}
        \item $\neg\Phi$ is equivalent to a Datalog program;
        \item $\Pol(\rel A_{\Phi})$ does not have a uniformly continuous minion homomorphism to an affine clone;
        \item The action of  $\cann[1]{\Pol(\rel A_{\Phi})}$ on ${\Aut(\rel A_{\Phi})}$-orbits of elements is not  equationally affine;
        \item $\Pol(\rel B_\Phi)$ is not equationally affine;
        \item $\rel A_{\Phi}$ has relational width at most $(2,\max(3,\ell))$.
    \end{enumerate}
\end{restatable}

\begin{proof}
    We prove the implications (1)$\Rightarrow$(2)$\Rightarrow$(3)$\Rightarrow$(4)$\Rightarrow$(5)$\Rightarrow$(1).
    
    (1) implies (2) by the implication from (\ref{itm:datalog}') to (\ref{itm:minion-affine}) in Theorem~\ref{thm:characterization-bwidth}.

    (2) implies (3). We do the proof by contraposition.
    The proof is essentially the same as in the case of reducts of unary structures (Theorem~\ref{thm:characterization-unary}).
    Suppose that $\cann[1]{\Pol(\rel A_{\Phi})}\actson \rel A_{\Phi}/{\Aut(\rel A_{\Phi})}$ is equationally affine
    and let $(S,\sim)$ be as given by Proposition~\ref{prop:eq-affine_subfactor}.
    
    Let $m\geq 2$ and let $R$ be an $m$-ary cyclic relation invariant under $\Pol(\rel A_{\Phi})$ and containing a tuple $(a_1,\dots,a_{m})$ whose entries are pairwise distinct.
    By Proposition~\ref{prop:approximation-dichotomy}, either the linkedness congruence of $R$ defines an approximation of $\sim$, or $R$ contains a pseudoloop modulo $\Aut(\rel A_{\Phi})$.
    In the first case, the approximation is smooth by the same argument as in the proof of Theorem~\ref{thm:characterization-unary} and we obtain a uniformly continuous minion homomorphism from $\Pol(\rel A_{\Phi})$ to an affine clone by Theorem~\ref{thm:fundamental}.
    
    So let us assume that for all $m\geq 2$, every such relation $R$ contains a pseudoloop.
    By applying Lemma~\ref{lem:n-ary-weird-cyclic} with $P$ being the set of $m$-tuples whose entries belong to pairwise distinct colors,
    we obtain a polymorphism $f$ such that
    for all $a_1,\dots,a_{m}$,
    if $f(a_1,\dots,a_{m}),\dots,f(a_{m},a_0,\dots,a_{m-1})$ are in pairwise distinct colors, then they intersect at most one $\sim$-class.
    As in the proof of Theorem~\ref{thm:characterization-unary}, pick an arbitrary $a_1\in S$ such that $[a_1]_{\sim}$ is not the zero element of the module $S/{\sim}$.
    Let $m\geq 2$ be its order, and let $O_i$ be the orbit of $i\cdot [a_1]_{\sim}$, for $i\in\{0,1,\dots,m-1\}$.
    By Lemma~\ref{lem:canonisation-mmsnp},
    we obtain $g\in\cann[1]{\Pol(\rel A_{\Phi})}$ such that
    \[ g(O_k,\dots,O_{k+m-1})\sim g(O_{j+k},\dots,O_{j+k+m-1})\]
    for some $k\in\{0,\dots,m-1\}$ and $j\in\{1,\dots,m-1\}$.
    The same computation as in Theorem~\ref{thm:characterization-unary} then gives a contradiction and concludes the proof.
    
    (3) implies (4). Since $\Phi$ is precolored, orbits under $\Aut(\rel A_\Phi)$ corresponds to the colors of $\Phi$, i.e., to the domain of $\rel B_\Phi$ (by~\ref{itm:colors-orbits}).
    Then under the bijection between colors and orbits, $\can[1]{\Pol(\rel A_\Phi)}\actson \rel A_\Phi/{\Aut(\rel A_\Phi)}$ is a subset of $\Pol(\rel B_\Phi)$.
    Thus if the smaller clone is not equationally affine, the larger clone is also not.
    
    (4) implies (5). Let $\mathcal F$ be a minimal colored obstruction set associated with $\Phi$.
    Suppose that $\Pol(\rel B_\Phi)$ is not equationally affine. Then $\Pol(\rel B_{\Phi})$ does not have a minion homomorphism to an affine clone and moreover, all colors (i.e., elements of the domain of $\rel B_\Phi$) are pp-definable in $\rel B_\Phi$ since $\Phi$ is precolored. Hence, by the implication from 
    (\ref{itm:core-affine}) to (\ref{itm:width-2-3}) in Theorem~\ref{thm:characterization-bwidth}, $\rel B_\Phi$ has relational width at most $(2,3)$.
    It is proven in~\cite[Theorem 7.3]{MMSNP-journal} that the following is a reduction from $\Csp(\rel A_\Phi)$ to $\Csp(\rel B_\Phi)$: given an instance $\instance=(\V,\constraints)$ of $\Csp(\rel A_\Phi)$, define an instance $\mathcal J$ of $\Csp(\rel B_\Phi)$ with the same variables and which, for every obstruction $\rel F\in\mathcal F$ of size $k\geq 1$ and for every homomorphism $h$ from $\rel F^\tau$ to the canonical database of $\instance$, contains the constraint $\{f\in B_\Phi^{\{h(1),\dots,h(k)\}}\mid (f(h(1)),\dots,f(h(k)))\in R_{\rel F^\tau}\}$. 
    If $\instance$ is equivalent to a non-trivial $(2,\max(3,\ell))$-minimal instance, then $\mathcal J$ is equivalent to a $(2,3)$-minimal instance.
    It follows that $\rel A_\Phi$ has relational width at most $(2,\max(3,\ell))$.
        
    (5) implies (1). By the implication from (\ref{itm:bw}) to (\ref{itm:datalog}') in Theorem~\ref{thm:characterization-bwidth}, (5) implies that the class of finite structures that do not have a homomorphism to $\rel A_\Phi$ is definable by a Datalog program.
    This class is by construction the class of finite models of $\neg\Phi$, which proves (1).
\end{proof}

The bound from item (5) of Theorem~\ref{thm:characterization-mmsnp} can be reached for any $\ell\geq 3$ which in particular implies that the relational width of structures $\rel A_{\Phi}$ associated to precolored MMSNP sentences in normal form is not uniformly bounded. Indeed, let $\ell\geq 3$, let $\tau$ consist of a single binary relational symbol $E$, and let $\Phi$ be the MMSNP sentence whose colored obstruction set consists of a complete graph on $\ell$ vertices, all of which belong to a single color. 
Then the structure $\rel A_{\Phi}$ is the universal homogeneous $K_n$-free graph, and we showed in Section~\ref{sect:corollary} that the exact relational width of $\rel A_{\Phi}$ is $(2,\ell)$.

Note that the characterization in Theorem~\ref{thm:characterization-mmsnp} only holds for precolored sentences in normal form.
We show below how to characterize Datalog also for arbitrary MMSNP sentences in normal form.
The steps involved in the proof to go from normal form to precolored normal form reflect a similar situation as for finite domain CSPs, when going from arbitrary finite structures to their cores: given a finite structure $\rel A$ and its core $\rel B$, we have that $\Csp(\rel A)$ is in Datalog if and only if $\Pol(\rel B)$ is not equationally affine if and only if $\Pol(\rel A)$ does not admit a minion homomorphism to an affine clone (see Theorem~\ref{thm:characterization-bwidth}).

On the algebraic side, we first show in Proposition~\ref{prop:reduction-precolored} that the existence of a uniformly continuous minion homomorphism from $\Pol(\rel A_\Phi)$ or $\Pol(\rel A_\Theta)$ to an affine clone does not depend on whether we consider an MMSNP formula in normal form $\Phi$ or the standard precoloration $\Theta$ of an equivalent formula in strong normal form.
The same step was shown in~\cite{MMSNP, MMSNP-journal} for the P/NP-complete dichotomy, with $\Proj$ replacing affine clones.
We then show in Lemma~\ref{lemma:MMSNP-core-singleton} that $\rel B_\Theta$ is the expansion of the core of $\rel B_\Phi$ by singleton unary relations, which allows us to use the characterization of bounded width in Theorem~\ref{thm:characterization-bwidth}.
Finally we show in Lemma~\ref{lemma:BPhi_Datalog} that $\neg\Theta$ is equivalent to a Datalog program if and only if $\neg\Phi$ is.

\begin{restatable}{proposition}{reductionprecolored}
\label{prop:reduction-precolored}
    Let $\Phi$ be an MMSNP sentence in normal form, let $\Psi$ be an equivalent formula in strong normal form, and let $\Theta$ be the standard precoloration of $\Psi$.
    There is a uniformly continuous minion homomorphism from $\Pol(\rel A_{\Phi})$ to an affine clone if and only if there is a uniformly continuous minion homomorphism from $\Pol(\rel A_{\Theta})$ to an affine clone.
\end{restatable}

\begin{proof}
    Since $\Phi$ and $\Psi$ are equivalent, $\rel A_\Phi$ and $\rel A_\Psi$ are homomorphically equivalent (this can be seen using a compactness argument, cf.~\cite[Theorem 5.15]{MMSNP-journal}).
    Therefore, there exist uniformly continuous minion homomorphisms $\Pol(\rel A_\Phi)\to\Pol(\rel A_\Psi)$ and $\Pol(\rel A_\Psi)\to\Pol(\rel A_\Phi)$.
    
    It is shown in~\cite[Theorem 6.9]{MMSNP-journal} that $\Pol(\rel A_{\Theta})$ has a uniformly continuous minion homomorphism to $\Pol(\rel A_{\Psi})$ and that $\Pol(\rel A_{\Psi},\neq)$ has a uniformly continuous minion homomorphism to $\Pol(\rel A_{\Theta})$.
    Thus, it suffices to show that if $\Pol(\rel A_{\Psi},\neq)$ has a uniformly continuous minion homomorphism to an affine clone, then so does $\Pol(\rel A_{\Psi})$.
    
    Let $p\geq 2$ be prime and let $R_0$ and $R_1$ be the relations from item (\ref{itm:pp-construction-zp}') in Theorem~\ref{thm:characterization-bwidth}.
    By the equivalence of items (\ref{itm:minion-affine}) and (\ref{itm:pp-construction-zp}') in Theorem~\ref{thm:characterization-bwidth}, it is enough to show that if $(\rel A_{\Psi},\neq)$ pp-constructs $(\rel Z_p; R_0,R_1)$, then so does $\rel A_{\Psi}$.
    
    Suppose that $(\rel Z_p;R_0,R_1)$ has a pp-construction in $(\rel A_{\Psi},\neq)$.
    Thus, there is $n\geq 1$ and pp-formulas $\phi_0(x,y,z),\phi_1(x,y,z)$ defining relations $S_0,S_1$ such that $(A^n;S_0,S_1)$ and $(\rel Z_p;R_0,R_1)$ are homomorphically equivalent; we take $n$ to be minimal with the property that such pp-formulas exist. Let $x=(x_1,\ldots,x_n), y=(y_1,\ldots,y_n), z=(z_1,\ldots,z_n).$
    Since $R_0$ and $R_1$ are totally symmetric relations (i.e., the order of the entries in a tuple does not affect its membership into any of $R_0$ or $R_1$), we can assume that $S_0$ and $S_1$ are, too, and that the formulas pp-defining them are syntactically invariant under permutation of the blocks of variables $x$, $y$, and $z$.
    
    We first claim that $\phi_i$ does not contain any inequality atom $x_r\neq y_r$ for $r\in\{1,\dots,n\}$ (so that by symmetry, also $y_r\neq z_r$ and $x_r\neq z_r$ do not appear).
    Let $h\colon(\rel Z_p;R_0,R_1)\to (A^n;S_0,S_1)$ be a homomorphism.
    Since $(0,0,0)\in R_0$, we have that $(h(0),h(0),h(0))$ satisfies $\phi_0$, and therefore the listed inequality atoms cannot appear.
    The same holds for $\phi_1$, by considering $(h(0),h(0),h(1))$ and its permutations.
    
    Moreover, the only possible equality atoms in $\phi_0$  are of the form $u_r=v_r$ for some $r\in\{1,\dots,n\}$ and $u,v\in\{x,y,z\}$.
    Indeed, if $\phi_0$ contains $u_r=v_s$ for $r\neq s$, then by symmetry of $\phi_0$ we obtain by transitivity that $\phi_0$ entails $x_r=x_s$.
    One could obtain a pp-construction with smaller dimension by adding all the equalities $x_r=x_s,y_r=y_s,z_r=z_s$ to $\phi_0$ and $\phi_1$ and existentially quantifying the $s$th coordinate of the two formulas.
    The same reasoning applies to $\phi_1$.

    Let $\psi_i$ be the formula obtained from $\phi_i$ by removing the possible inequality literals, and let $T_i$ be defined by $\psi_i$ in $\rel A_{\Phi}$.
    We claim that $(A^n;T_0,T_1)$ and $(A^n;S_0,S_1)$ are homomorphically equivalent, which concludes the proof.
    Since $\phi_i$ implies $\psi_i$, we have that $(A^n;S_0,S_1)$ is a (non-induced) substructure of $(A^n;T_0,T_1)$, and therefore it homomorphically maps to $(A^n;T_0,T_1)$ by the identity map.
    For the other direction, we prove the result by compactness and show that every finite substructure $\rel B$ of $(A^n;T_0,T_1)$ has a homomorphism to $(A^n;S_0,S_1)$.
    Let $b^1,\dots,b^m$ be the elements of $\rel B$, where $b^i=(b^i_1,\ldots,b^i_n)$.
    Let $\rel C$ be the $\tau$-structure over at most $n\cdot m$ elements $\{c^i_r \mid i\in\{1,\dots,m\},r\in\{1,\dots,n\}\}$ such that:
    \begin{itemize}
        \item $c^i_r$ and $c^j_s$ are taken to be equal if, and only if, $b^i_r$ and $b^j_s$ are connected by a sequence of equalities coming from $\phi_0$ and $\phi_1$. By the claims above, this is only possible if $r=s$.
        \item the relations of $\rel C$ are defined by pulling back the relations from $\rel A_{\Psi}$ under the map $\pi\colon c^i_j\mapsto b^i_j$.
    \end{itemize}
    Note that $\pi$ is a homomorphism $\rel C\to\rel A_{\Psi}$, and therefore $\rel C$ admits an injective homomorphism $g$ to $\rel A_{\Psi}$. Let $c^i=(c^i_1,\ldots,c^i_n)$ for $i\in\{1,\ldots,n\}$.
    We claim that if $(b^i,b^j,b^k)\in T_0$ then $(g(c^i),g(c^j),g(c^k))\in S_0$.
    Indeed, suppose that $(b^i,b^j,b^k)$ satisfies $\psi_0$.
    Then by construction $(c^i,c^j,c^k)$ satisfies $\psi_0$ in $\rel C$, and thus $(g(c^i),g(c^j),g(c^k))$ satisfies $\psi_0$ in $\rel A_{\Psi}$.
    Moreover, by injectivity of $g$, we have $g(c^i_r)\neq g(c^j_s)$ as long as $r\neq s$.
    Consider any inequality atom in $\phi_0$.
    By our first claim, it can only be of the form $x_r\neq y_s$ for some $r\neq s$,
    and therefore it is satisfied by $(g(c^i),g(c^j),g(c^k))$.
    Thus, $(g(c^i),g(c^j),g(c^k))$ satisfies $\phi_0$.
    The same reasoning for $\phi_1$ shows that $g$ induces a homomorphism $\rel B\to (A^n;S_0,S_1)$ by mapping $b^i$ to $g(c^i)$.
\end{proof}

\begin{lemma}\label{lemma:MMSNP-core-singleton}
    Let $\Phi$ be an MMSNP sentence in normal form, let $\Psi$ be an MMSNP sentence in strong normal form equivalent to $\Phi$, and let $\Theta$ be the standard precoloration of $\Psi$. Then $\rel B_\Theta$ is the expansion of the core of $\rel B_\Phi$ by all unary singleton relations.
\end{lemma}

\begin{proof}
    Note that since $\Phi$ and $\Psi$ are equivalent, the structures $\rel A_\Phi$ and $\rel A_\Psi$ are homomorphically equivalent.
    The homomorphisms realizing the equivalence can be taken to be 1-canonical without loss of generality (by~\ref{itm:interpolation-mmsnp}), and thus they induce a homomorphic equivalence between $\rel B_\Phi$ and $\rel B_\Psi$.
    Moreover, $\rel B_\Psi$ is a core: if $h$ is an endomorphism of $\rel B_\Psi$ such that $h(M)=h(M')$ for some colors $M\neq M'$, one can identify in $\Psi$ the colors according to $h$ and obtain an equivalent sentence, contradicting the fact that $\Psi$ is in strong normal form.
    Recall that the standard precoloration of $\Psi$ is obtained by adding 1-element obstructions for each pair of colors $M\neq M'$.
    Since in $\rel A_{\Theta}$ the relation $P_M$ equals the color $M$ (by~\ref{itm:colors-orbits}), these obstructions yield in $\rel B_\Theta$ unary relations containing a single element.
    Thus, $\rel B_\Theta$ is the expansion of the core of $\rel B_\Phi$ by all unary singleton relations.
\end{proof}

\begin{lemma}\label{lemma:BPhi_Datalog}
    Let $\Phi$ be an MMSNP sentence in strong normal form and let $\Theta$ be its standard precoloration.
    If $\neg\Theta$ is equivalent to a Datalog program, so is $\neg\Phi$.
\end{lemma}

\begin{proof}
    Note that if $\rel X$ is a structure in the signature of $\Phi$, it satisfies $\Phi$ if and only if its expansion by the relations $P_M$ for every color $M$ interpreted as empty relations satisfies $\Theta$.
    Thus, we obtain that $\neg\Phi$ is equivalent to a Datalog program obtained by taking a program for $\neg\Theta$ and removing all rules involving the extra predicates $P_M$.
\end{proof}

\begin{corollary}\label{cor:finite-structure-MMSNP}
    Let $\Phi$ be an MMSNP sentence in normal form.
    Then $\neg\Phi$ is equivalent to a Datalog program if, and only if, $\Pol(\rel B_\Phi)$ does not admit a minion homomorphism to an affine clone.
\end{corollary}
\begin{proof}
    Let $\Psi$ be an MMSNP sentence in strong normal form equivalent to $\Phi$, and let $\Theta$ be the standard precoloration of $\Psi$.
    By Lemma~\ref{lemma:MMSNP-core-singleton}, we have that $\rel B_\Theta$ is the expansion of the core of $\rel B_\Phi$ by all singleton unary relations.
    
    If $\Pol(\rel B_\Phi)$ does not admit a minion homomorphism to an affine clone, then $\Pol(\rel B_\Theta)$ is not equationally affine by the implication from~(\ref{itm:minion-affine}) to~(\ref{itm:core-affine}) in Theorem~\ref{thm:characterization-bwidth}, and by Theorem~\ref{thm:characterization-mmsnp}, $\neg\Theta$ is equivalent to a Datalog program.
    By Lemma~\ref{lemma:BPhi_Datalog}, we obtain that $\neg\Phi$ is equivalent to a Datalog program.

    Conversely, suppose that $\Pol(\rel B_\Phi)$ has a minion homomorphism to an affine clone.
    Then $\Pol(\rel B_\Theta)$ is equationally affine by the implication from (\ref{itm:core-affine}) to (\ref{itm:minion-affine}) in Theorem~\ref{thm:characterization-bwidth}, therefore by Theorem~\ref{thm:characterization-mmsnp} there is a uniformly continuous minion homomorphism from $\Pol(\rel A_\Theta)$ to an affine clone.
    By Proposition~\ref{prop:reduction-precolored}, there is a uniformly continuous minion homomorphism from $\Pol(\rel A_\Psi)$ to an affine clone.
    Since $\rel A_\Psi$ and $\rel A_\Phi$ are homomorphically equivalent, we obtain a uniformly continuous minion homomorphism from $\Pol(\rel A_\Phi)$ to an affine clone.
    Finally, by the implication from (\ref{itm:bw}) to (\ref{itm:minion-affine}) in Theorem~\ref{thm:characterization-bwidth}, this implies that $\rel A_\Phi$ does not have bounded width, i.e., $\neg\Phi$ is not equivalent to a Datalog program by the same reasoning as in the proof of (5) implies (1) in Theorem~\ref{thm:characterization-mmsnp}.
\end{proof}

This finally allows us to obtain Theorem~\ref{thm:datalogrewritability} from the introduction.

\boundedwidthmmsnp*
\begin{proof}
    Let $\Phi$ be an MMSNP sentence, which is equivalent to a disjunction $
    \Phi_1\lor\cdots\lor\Phi_p$ of connected MMSNP sentences, and this decomposition can be computed in exponential time~\cite[Proposition 3.2]{MMSNP}.
    Each $\Phi_i$ has size polynomial in $\Phi$.
    Moreover, if $p$ is minimal then $\neg\Phi$ is equivalent to a Datalog program if and only if every $\neg\Phi_i$ is equivalent to a Datalog program (see, e.g., Proposition~3.3 in~\cite{MMSNP-journal}, for a proof of a similar fact).
    Such a minimal set $\{\Phi_1,\dots,\Phi_p\}$ of sentences can be computed in exponential time, given $\Phi$ as input.
    After having computed any set $\{\Phi_1,\dots,\Phi_p\}$ whose disjunction is equivalent to $\Phi$, it suffices to iterate the following procedure: for any $i,j\in\{1,\dots,p\}$, check whether $\Phi_i$ implies $\Phi_j$ (by Theorem 5.15 in~\cite{MMSNP-journal}, this problem is in NP).
    If $\Phi_i$ implies $\Phi_j$ for some $i\neq j$, then remove $\Phi_i$ and continue.
    Otherwise we claim $\{\Phi_1,\dots,\Phi_p\}$ is minimal.
    Indeed, no $\Phi_i$ distinct from $\Phi_{j_1},\ldots,\Phi_{j_k}$ implies any disjunction $\Phi_{j_1}\lor\cdots\lor\Phi_{j_k}$:
    by taking for each $j\in\{j_1,\dots,j_k\}$ a structure $\rel X_j$ witnessing that $\Phi_i$ does not imply $\Phi_j$ (i.e., $\rel X_j$ satisfies $\Phi_i$ but not $\Phi_j$), then the disjoint union $\rel X_{j_1}\cup\cdots\cup \rel X_{j_k}$ witnesses that $\Phi_i$ does not imply the disjunction.
    
    Fix $i\in\{1,\dots,p\}$.
    Let $\Psi_i$ be a normal form associated with $\Phi_i$, which can be computed in double exponential time, and where $\Psi$ itself can be of size doubly exponential in the size of $\Phi_i$.
    By Corollary~\ref{cor:finite-structure-MMSNP}, $\neg\Psi_i$ is equivalent to a Datalog program if, and only if, $\Pol(\rel B_{\Psi_i})$ does not admit a minion homomorphism to an affine clone.
    Deciding this property is in NP~\cite[Corollary 6.8]{MetaChenLarose}.
    We obtain overall a 2NExpTime algorithm.
    The complexity lower bound is Theorem~18 in~\cite{BourhisLutz}.
\end{proof}

\bibliographystyle{plainurl}
\bibliography{references}

\end{document}